\documentclass[draftcls,11pt]{IEEEtran}
\onecolumn 
%\usepackage{setspace} 
%\singlespacing
\usepackage[cmex10]{amsmath}
\usepackage{xcolor}

\usepackage{mdwtab}
\usepackage{eqparbox}
\usepackage{graphicx,amssymb,rangecite,upgreek,graphicx,dsfont,mathrsfs}

\usepackage{algorithm}
\usepackage{algorithmic} 
\usepackage[usenames,dvipsnames]{pstricks}
 \usepackage{epsfig}

 \usepackage{pst-grad} % For gradients
 \usepackage{pst-plot} % For axes%\numberwithin{equation}{chapter}
\newcommand {\exe} {\stackrel{\cdot} {=}}

\newcommand {\bx} {\mbox{\boldmath $x$}}
\newcommand {\by} {\mbox{\boldmath $y$}}

\newcommand {\bA} {\mbox{\boldmath $A$}}

\newcommand {\bE} {\mbox{\boldmath $E$}}
\newcommand {\bF} {\mbox{\boldmath $F$}}

\newcommand {\bI} {\mbox{\boldmath $I$}}

\newcommand {\bN} {\mbox{\boldmath $N$}}

\newcommand {\bP} {\mbox{\boldmath $P$}}

\newcommand {\bS} {\mbox{\boldmath $S$}}
\newcommand {\bT} {\mbox{\boldmath $T$}}

\newcommand {\bX} {\mbox{\boldmath $X$}}
\newcommand {\bY} {\mbox{\boldmath $Y$}}

\newcommand {\bSig} {\mbox{\boldmath $\Sigma$}}

\newcommand {\blam} {\mbox{\boldmath $\lambda$}}
\newcommand {\bXi} {\mbox{\boldmath $\Xi$}}

\newcommand {\blamt} {\mbox{\boldmath \footnotesize $\lambda$}}

\newcommand{\calA}{{\cal A}}

\newcommand{\calC}{{\cal C}}

\newcommand{\calG}{{\cal G}}

\newcommand{\calI}{{\cal I}}

\newcommand{\calN}{{\cal N}}

\newcommand{\calT}{{\cal T}}

\newcommand{\calX}{{\cal X}}
\newcommand{\calY}{{\cal Y}}

\newcommand{\Amat}{{\bf{A}}}
\newcommand{\Bmat}{{\bf{B}}}
\newcommand{\Cmat}{{\bf{C}}}
\newcommand{\Dmat}{{\bf{D}}}

\newcommand{\Hmat}{{\bf{H}}}
\newcommand{\Jmat}{{\bf{J}}}

\newcommand{\Qmat}{{\bf{Q}}}

\newcommand{\Smat}{{\bf{S}}}
\newcommand{\Tmat}{{\bf{T}}}

\newcommand{\Rmat}{{\bf{R}}}

\newcommand{\define}{\stackrel{\triangle}{=}}

%%% For BOLD Greek Letters

%%% For BOLD Greek Letters

%% Ilya

%\newcommand{\betavec}{{\bf{\beta}}}

\newcommand{\be}{\begin{equation}}
\newcommand{\ee}{\end{equation}}
\newcommand{\beqna}{\begin{eqnarray}}
\newcommand{\eeqna}{\end{eqnarray}}

%\newcommand{\}{}
%\mathaccent{\mjm}{$J_{m}$}

%
\usepackage{theorem}
\DeclareFontFamily{U}{mathx}{\hyphenchar\font45}
\DeclareFontShape{U}{mathx}{m}{n}{
      <5> <6> <7> <8> <9> <10>
      <10.95> <12> <14.4> <17.28> <20.74> <24.88>
      mathx10
      }{}
\DeclareSymbolFont{mathx}{U}{mathx}{m}{n}
\DeclareMathSymbol{\bigtimes}{1}{mathx}{"91}
\theorembodyfont{\rmfamily}
 
\newcommand{\Kmat}{{\bf{K}}}
\newcommand{\Ind}{{\mathds{1}}}

\newcommand{\abs}[1]{\left|#1\right|}

\newcommand {\bze} {\mbox{\boldmath $0$}}

\newcommand{\diag}{\mathop{\mathrm{diag}}}
\DeclareMathOperator{\re}{Re}

\theoremheaderfont{\itshape}
\newtheorem{definition}{Definition}
\newtheorem{theorem}{Theorem}
\newtheorem{proof}{Proof}
\newtheorem{example}{Example}

\newtheorem{lemma}{Lemma} 

\newtheorem{prop}{Proposition}
\newtheorem{remark}{Remark}
\newcommand{\p}[1]{\left(#1\right)}
\newcommand{\pp}[1]{\left[#1\right]}
\newcommand{\ppp}[1]{\left\{#1\right\}}
\newcommand{\norm}[1]{\left\|#1\right\|}

\begin{document}

\title{Analysis of Mismatched Estimation Errors Using Gradients of Partition Functions$^\ast$}
\author{Wasim~Huleihel
        and~Neri~Merhav
				\\
        Department of Electrical Engineering \\
Technion - Israel Institute of Technology \\
Haifa 32000, ISRAEL\\
E-mail: \{wh@tx, merhav@ee\}.technion.ac.il
\thanks{$^\ast$This research was partially supported by The Israeli Science Foundation (ISF), grant no. 412/12.}
}

\maketitle

\begin{abstract}
\boldmath We consider the problem of signal estimation (denoising) from a statistical-mechanical perspective, in continuation to a recent work on the analysis of mean-square error (MSE) estimation using a direct relationship between optimum estimation and certain partition functions. The paper consists of essentially two parts. In the first part, using the aforementioned relationship, we derive single-letter expressions of the mismatched MSE of a codeword (from a randomly selected code), corrupted by a Gaussian vector channel. In the second part, we provide several examples to demonstrate phase transitions in the behavior of the MSE. These examples enable us to understand more deeply and to gather intuition regarding the roles of the real and the mismatched probability measures in creating these phase transitions. 
\end{abstract}

\begin{IEEEkeywords}
Minimum mean-square error (MMSE), mismatched MSE, partition function, statistical-mechanics, conditional mean estimation, phase transitions, threshold effect. 
\end{IEEEkeywords}

\IEEEpeerreviewmaketitle

\section{Introduction}

\IEEEPARstart{T}{he} connections and the interplay between information theory, statistical physics and signal estimation have been known for several decades \cite{Bucy,Duncan,Kailath,Seidler}, and they are still being studied from a variety of aspects, see, for example \cite{Neri1,Neri2,Guo1,Guo2,Guo3,Guo4,Palomar1,Palomar2,Raginsky,Verdu1,Weissman1,Atar1,ronit} and many references therein. 

Recently, in \cite{Neri2}, the well known I-MMSE relation \cite{Guo2}, which relates the mutual information and the derivative of the minimum mean-square error (MMSE), was further explored using a statistical physics perspective. Specifically, in their analysis, the authors of \cite{Neri2} exploit the natural ``mapping" between information theory problems and certain models of many-particle systems in statistical mechanics (see, e.g., \cite{Sourlas1,Sourlas2}). One of the main contributions in \cite{Neri2} is the demonstration of the usefulness of statistical-mechanical tools (in particular, utilizing the fact that the mutual information can be viewed as the partition function of a certain physical system) in assessing MMSE via the I-MMSE relation of \cite{Guo2}. More recently, Merhav \cite{Neri1} proposed a more flexible method, whose main idea is that, for the purpose of evaluating the covariance matrix of the MMSE estimator, one may use other information measures, which have the form of a partition function and hence can be analyzed using methods of statistical physics (see, e.g., \cite{Sourlas1,Sourlas2,Cousseau,Hosaka,Iba,Kabashima1,Kabashima2,Tanaka1,Tanaka2} and many references therein). The main advantage of the proposed approach over the I-MMSE relations, is its full generality: Any joint probability function $P\p{\bx,\by}$, where $\bx$ and $\by$ designate the channel input to be estimated and the channel output, respectively, can be handled (for example, the channel does not have to be additive or Gaussian). Moreover, using this approach, any mismatch, both in the source and the channel, can be considered. 

This paper is a further development of \cite{Neri1} in the above described direction. Particularly, in \cite[Section IV. A]{Neri1}, the problem of mismatched estimation of a codeword, transmitted over an additive white Gaussian (AWGN) channel, was considered. It was shown that the mismatched MSE exhibits phase transitions at some rate thresholds, which depend upon the real and the mismatched parameters of the problem, and the behavior of the receiver. To wit, the mismatched MSE acts inherently differently for a \emph{pessimistic} and \emph{optimistic} receivers, where in the example considered in \cite[Section IV. A]{Neri1} pessimism literally means that the estimator assumes that the channel is worse than it really is (in terms of signal-to-noise ratio (SNR)), and the vice versa for optimism. In this paper, we extend the above described model to a much more general one; the Gaussian vector channel, which has a plenty of applications in communications and signal processing. It is important to emphasize that compared to \cite{Neri1,Neri2}, it will be seen that: (1) the mathematical analysis is much more complicated (consisting of some new concepts), and (2) the notions of pessimism and optimism described above, also play a significant role in this model, although their physical meanings in general are not obvious. Moreover, in contrast to previous work on mismatched estimation, in this paper, the interesting case of channel mismatch is explored, namely, the receiver has a wrong assumption on the channel. In order to demonstrate the usefulness of the theoretical results derived for the general model, we also provide a few examples associated with some specific channel transfer functions, and draw conclusions and insights regarding the threshold effects in the behavior of the partition function and the MSE.

As was mentioned earlier, we consider the Gaussian vector channel model
\begin{align}
\bY = \bA\bX+\bN,
\label{modint}
\end{align}
where $\bN\in\mathbb{R}^n$ is a Gaussian white noise vector and $\bA$ is a deterministic $n\times n$ matrix representing a linear transformation induced by a given linear system. The vector $\bX\in\mathbb{R}^n$ is chosen uniformly at random from a codebook (which is itself selected at random as well). There are several motivations for codeword estimation. One example is that of a user that, in addition to its desired signal, receives also a relatively strong interference signal, which carries digital information intended to other users, and which comes from a codebook whose rate exceeds the capacity of this crosstalk channel between the interferer and our user, so that the user cannot fully decode this interference. Nevertheless, our user would like to estimate the interference as accurately as possible for the purpose of cancellation. Furthermore, we believe that the tools/concepts developed in this paper for handling matched and mismatched problems, can be used in other applications in signal processing and communication. Such examples are denoising (see for example, \cite{Ordentlich,Gemelos,Jalali}), mismatched decoding (for example, \cite{Ganti}), blind deconvolution (for example, \cite{DonohoB,Majid}), and many other applications. Note that although the aforementioned examples are radically different (in terms of their basic models and systematization), they will all suffer from mismatch when estimating the input signals.  

In the special case of matched estimation, it will be shown that the MMSE is asymptotically given by
\begin{align}
\lim_{n\to\infty}\frac{\text{mmse}\p{\bX\mid\bY}}{n}
=\begin{cases}
\frac{1}{2\pi}\int_0^{2\pi}\frac{P_x}{1+\abs{\Hmat\p{\omega}}^2P_x\beta}\mathrm{d}\omega,\ \ &\text{if}\ \ R>R_c\\
0,\ \ &\text{if}\ \ R\leq R_c\\ 
\end{cases}
\end{align}
where 
\begin{align}
R_c\define \frac{1}{4\pi}\int_0^{2\pi}\ln\p{1+\abs{\Hmat\p{\omega}}^2P_x\beta}\mathrm{d}\omega,
\end{align}
in which $\text{mmse}\p{\bX\mid\bY}$ is the estimation error results from estimating $\bX$ based on $\bY$, using the MMSE estimator, $1/\beta$ and $P_x$ denote the noise variance and the transmitted power, respectively, and $\Hmat\p{\omega}$ is the frequency response of the linear system $\bA$. As can be seen from the above formula, for $R<R_c$ the MMSE essentially vanishes since the correct codeword can be reliably decoded, whereas for $R>R_c$, the MMSE is simply the estimation error which results by the Wiener filter that would have been applied had the input been a zero-mean, i.i.d. Gaussian process, with variance $1/\beta$. Accordingly, it will be seen that for $R>R_c$ the MMSE estimator is simply the Wiener filter. It is important to emphasize that while the above result may seem to be a natural generalization of the results in \cite{Neri1,Neri2} (where $\bA$ is taken to be identity matrix), the analysis (and results) of the mismatched case is by far more complicated and non-trivial. Indeed, it will be seen that in the mismatched case, the MSE is essentially separated into two cases, each exhibiting a completely different behavior. Further physical insights regarding the above result and other results will be presented later on. 

The remaining part of this paper is organized as follows. In Section \ref{sec:notation}, we first establish notation conventions. Then, the model considered is presented and the problem is formulated. In Section \ref{sec:body}, the main results are stated and discussed. In Section \ref{sec:exmp}, we provide a few examples which illustrate the theoretical results. In Section \ref{sec:proofOut}, we discuss the techniques and methodologies that are utilized in order to prove the main results, along with a brief background and summary on the basic relations between the conditional mean estimator, as well as its error covariance matrix and the aforementioned partition function, which were derived in \cite{Neri1}. In Section \ref{sec:proofs}, the main results are proved. Finally, our conclusions appear in Section \ref{sec:Conclusion}.

\section{Notation Conventions and Problem Formulation}\label{sec:notation}
\subsection{Notation Conventions}
Throughout this paper, scalar random variables (RV's) will be denoted by capital letters, their sample values will be denoted by the respective lower case letters and their alphabets will be denoted by the respective calligraphic letters. A similar convention will apply to random vectors and their sample values, which will be denoted with same symbols in the bold face font. Thus, for example, $\bX$ will denote a random vector $\p{X_1,\ldots,X_n}$ and $\bx = \p{x_1,\ldots,x_n}$ is a specific vector value in $\calX^n$, the $n$-th Cartesian power of $\calX$. The notations $\bx_i^j$ and $\bX_i^j$, where $i$ and $j$ are integers and $i\leq j$, will designate segments $\p{x_i,\ldots,x_j}$ and $\p{X_i,\ldots,X_j}$, respectively. Probability functions will be denoted generically by the letter $P$ or $P'$. In particular, $P\p{\bx,\by}$ is the joint probability mass function (in the discrete case) or the joint density (in the continuous case) of the desired channel input vector $\bx$ and the observed channel output vector $\by$. Accordingly, $P\p{\bx}$ will denote the marginal of $\bx$, $P\p{\by\mid\bx}$ will denote the conditional probability or density of $\by$ given $\bx$, induced by the channel, and so on. 

The expectation operator of a generic function $f\p{\bx,\by}$ with respect to (w.r.t.) the joint distribution of $\bX$ and $\bY$, $P\p{\bx,\by},$ will be denoted by $\bE\ppp{f\p{\bX,\bY}}$. Accordingly, $\bE'\ppp{f\p{\bX,\bY}}$ means that the expectation is performed w.r.t. $P'\p{\bx,\by}$. The conditional expectation of the same function given that $\bY = \by$, denoted $\bE\ppp{f\p{\bX,\bY}\mid \bY = \by}$ and which is obviously identical to $\bE\ppp{f\p{\bX,\by}\mid \bY = \by}$, is, of course, a function of $\by$. On substituting $\bY$ in this function, this becomes a random variable which will be denoted by $\bE\ppp{f\p{\bX,\bY}\mid\bY}$. When using vectors and matrices in a linear-algebraic format, $n$-dimensional vectors, like $\bx$ (and $\bX$), will be understood as column vectors, the operators $\p{\cdot}^T$ and $\p{\cdot}^H$ will denote vector or matrix transposition and vector or matrix conjugate transposition, respectively, and so, $\bx^T$ would be a row vector. For two positive sequences $\ppp{a_n}$ and $\ppp{b_n}$, the notation $a_n\exe b_n$ means equivalence in the exponential order, i.e., $\lim_{n\to\infty}\frac{1}{n}\log\p{a_n/b_n} = 0$. For two sequences $\ppp{a_n}$ and $\ppp{b_n}$, the notations $a_n\sim b_n$ and $a_n\lesssim b_n$ mean $\lim_{n\to\infty}\p{a_n/b_n} = 1$ and $\lim_{n\to\infty}\p{a_n/b_n} \leq 1$, respectively. Finally, the indicator function of an event $\calA$ will be denoted by $\Ind{\ppp{\calA}}$.

\subsection{Model and Problem Formulation}\label{sub:model}
Let $\calC = \ppp{\bx_0,\ldots,\bx_{M-1}}$ denote a codebook of size $M = e^{nR}$, which is selected at random (and then revealed to the estimator) in the following manner: Each $\bx_i$ is drawn independently under the uniform distribution over the surface of the $n$-dimensional hyperesphere, which is centered at the origin, and whose radius is $\sqrt{nP_x}$. Finally, let $\bX$ assume a uniform distribution over $\calC$. We consider the Gaussian vector channel model
\begin{align}
\bY = \bA\bX + \bN,
\label{MatModell}
\end{align}
where $\bY$, $\bX$ and $\bN$ are random vectors in $\mathbb{R}^n$, designating the channel output vector, the transmitted codeword and the noise vector, respectively. It is assumed that the components of the noise vector, $\bN$,  are i.i.d., zero-mean, Gaussian random variables with variance $1/\beta$, where $\beta$ is a given positive constant designating the signal-to-noise ratio (SNR) (for $P_x=1$), or the inverse temperature in the statistical-mechanical jargon. We further assume that $\bX$ and $\bN$ are statistically independent. Finally, the channel matrix, $\bA\in\mathbb{R}^{n\times n}$, is assumed to be a given deterministic Toeplitz matrix, whose entries are given by the coefficients of the impulse response of a given linear system. Specifically, let $\ppp{h_{k}}$ denote the generating sequence (or impulse response) of $\bA$, so that $\bA= \ppp{a_{i,j}}_{i,j}=\ppp{h_{i-j}}_{i,j}$, and let $\Hmat\p{\omega}$ designate the frequency response (Fourier transform) of $\ppp{h_k}$. 

As was mentioned previously, we analyze the problem of mismatched codeword estimation which is formulated as follows: Consider a mismatched estimator which is the conditional mean of $\bX$ given $\bY$, based on an incorrect joint distribution $P'\p{\bx,\by}$, whereas the true joint distribution continues to be $P\p{\bx,\by}$. Accordingly, the \emph{mismatched MSE} is defined as
\begin{align}
\text{mse}\p{\bX\mid\bY} &\define \bE\norm{\bX-\bE'\ppp{\bX\mid\bY}}^2
\end{align}
where $\bE'\ppp{\bX\mid\bY}$ is the conditional expectation w.r.t. the mismatched measure $P'$. In this paper, the following mismatch mechanism is assumed: The input measure is matched, i.e., $P\p{\bx} = P'\p{\bx}$ (namely, the mismatched estimator knows the true code), both conditional measures (``channels") $P\p{\cdot\mid\bx}$ and $P'\p{\cdot\mid\bx}$ are Gaussian, but are associated with different channel matrices. More precisely, while the true channel matrix (under $P$) is $\bA$, the assumed channel matrix (under $P'$) is $\bA'$, another Toeplitz matrix, generated by the impulse response $\ppp{h_k'}$, whose frequency response is $\Hmat'\p{\omega}$. It should be pointed out, however, that the analysis in this paper can be easily carried out also for the case of mismatch in the input distribution, or mismatch in the noise distribution, which has been already considered in \cite{Neri1}. Using the theoretical tools derived in \cite{Neri1}, the mismatched MSE (and the MMSE as a special case) will be derived for the model described above.

A very important function, which will be pivotal to our derivation of both the mismatched estimator and the MSE, is the \emph{partition function}, which is defined as follows.
\begin{definition}[Partition Function]\label{def:1}
Let $\blam = \p{\lambda_1,\ldots,\lambda_n}^T$ be a column vector of $n$ real-valued parameters. The partition function w.r.t. the joint distribution $P\p{\bx,\by}$, denoted by $Z\p{\by,\blam}$, is defined as
\begin{align}
Z\p{\by,\blam}&\define\sum_{\bx\in\calX^n}\exp\ppp{\blam^T\bx}P\p{\bx,\by}.
\label{PartFunc}
\end{align} 
\end{definition}
In the above definition, it is assumed that the sum (or integral, in the continuous case) converges uniformly at least in some neighborhood of $\blam = \bze$ \footnote{In case that this assumption does not hold, one can instead, parametrize each component $\lambda_i$ of $\blam$ as a purely imaginary number $\lambda_i = j\omega_i$ where $i=\sqrt{-1}$, similarly to the definition of the characteristics function.}. Accordingly, under the above described model, the mismatched partition function is given by
\begin{align}
Z'\p{\by,\blam}&\define\sum_{\bx\in\calC}\exp\ppp{\blam^T\bx}P'\p{\bx,\by}\\
&=\p{2\pi/\beta}^{-n/2}\sum_{\bx\in\calC}e^{-nR}\exp\pp{-\beta\norm{\by-\bA'\bx}^2/2+\blam^T\bx}.
\label{PartFuncmod}
\end{align} 
\begin{remark}
In the above definition, the role of $\blam$ will be understood later on. In a nutshell, the idea \cite{Neri1} is that the gradient of $\ln Z'\p{\by,\blam}$ w.r.t. $\blam$, computed at $\blam=\bze$, simply gives the mismatched MSE estimator, $\bE'\ppp{\bX\mid\by}$, and the expectation of the Hessian of $\ln Z'\p{\by,\blam}$ w.r.t. $\blam$, computed at $\blam=\bze$, gives the MSE. Nevertheless, in the next section, where we present the main results, the dependency of the different quantities in $\blam$ will not be apparent, as they will already be computed at $\blam=\bze$.  
\end{remark}

\section{Main Results and Discussion}\label{sec:body}

\allowdisplaybreaks

In this section, our main results are presented and discussed. The proofs of these results are provided in Section \ref{sec:proofs}. The asymptotic MMSE, which is obtained as a special case of the mismatched case ($P=P'$), is given in the following theorem.
\begin{theorem}[Asymptotic MMSE]\label{th:1} 
Consider the model defined in Subsection \ref{sub:model}, and assume that the sequence $\ppp{h_k}_k$ is square summable. Then, the asymptotic MMSE is given by
\begin{align}
\lim_{n\to\infty}\frac{\text{mmse}\p{\bX\mid\bY}}{n}=
\begin{cases}
\frac{1}{2\pi}\int_0^{2\pi}\frac{P_x}{1+\abs{\Hmat\p{\omega}}^2P_x\beta}\mathrm{d}\omega,\ \ &\ \ R>R_c\\
0,\ \ &\ \ R\leq R_c\\ 
\end{cases}
\label{finalTheRes}
\end{align}
where
\begin{align}
R_c\define \frac{1}{4\pi}\int_0^{2\pi}\ln\p{1+\abs{\Hmat\p{\omega}}^2P_x\beta}\mathrm{d}\omega.
\end{align}
\end{theorem}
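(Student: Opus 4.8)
The plan is to use the partition-function calculus of \cite{Neri1} to reduce the MMSE to the asymptotics of the posterior mean, then to carry out a Random-Energy-Model--type large-deviations analysis of the partition function, with Szeg\H{o}'s theorem absorbing the Toeplitz structure. \emph{Reduction.} By the identity recalled in the remark after Definition~\ref{def:1}, specialized to the matched case $P=P'$, the error covariance of the conditional-mean estimator equals $\bE\pp{\nabla^2_\blam\ln Z\p{\bY,\blam}}_{\blam=\bze}=\bE\pp{\mathrm{Cov}\p{\bX\mid\bY}}$, so $\text{mmse}\p{\bX\mid\bY}=\bE\,\tr\,\mathrm{Cov}\p{\bX\mid\bY}=nP_x-\bE\norm{\bE\ppp{\bX\mid\bY}}^2$, using that every codeword has squared norm $nP_x$. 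It therefore suffices to evaluate $\lim_{n\to\infty}\tfrac1n\bE\norm{\bE\ppp{\bX\mid\bY}}^2$, which is governed by $Z\p{\bY,\bze}\propto\sum_{\bx\in\calC}\exp\p{-\beta\norm{\bY-\bA\bx}^2/2}$. By symmetry of the randomly drawn codebook I condition throughout on $\bx_0$ being transmitted, i.e.\ $\bY=\bA\bx_0+\bN$.

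\emph{Exponential order of the partition function.} Split $Z\p{\bY,\bze}\propto w\p{\bx_0}+\sum_{i\geq1}w\p{\bx_i}$ with $w\p{\bx}\define\exp\p{-\beta\norm{\bY-\bA\bx}^2/2}$. The ``correct'' term satisfies $w\p{\bx_0}=\exp\p{-\beta\norm{\bN}^2/2}\exe e^{-n/2}$ since $\norm{\bN}^2/n\to1/\beta$. For the ``bulk'' I would first compute the annealed value $\bE\pp{\sum_{i\geq1}w\p{\bx_i}}\exe e^{nR}\,\bE_{\bg\sim\mathcal N\p{\bze,P_x\Imat}}\pp{w\p{\bg}}$, replacing a codeword (uniform on the sphere of radius $\sqrt{nP_x}$) by an i.i.d.\ Gaussian vector up to sub-exponential factors (the spherical and Gaussian ensembles being exponentially equivalent here, which one checks via the Lagrange-multiplier form of the spherical integral). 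The Gaussian integral equals $\det\p{\Imat+\beta P_x\bA^T\bA}^{-1/2}\exp\p{\tfrac{\beta^2}{2}\bY^T\bA\Mmat^{-1}\bA^T\bY-\tfrac{\beta}{2}\norm{\bY}^2}$ with $\Mmat\define P_x^{-1}\Imat+\beta\bA^T\bA$. Szeg\H{o}'s theorem, using square-summability of $\ppp{h_k}$ and the asymptotic Toeplitz-equivalence of $\bA^T\bA$ to the symbol $\abs{\Hmat\p{\omega}}^2$, gives $\tfrac1n\ln\det\p{\Imat+\beta P_x\bA^T\bA}\to2R_c$, while the quadratic form in $\bY$, evaluated through $\tfrac1n\bY^T\bB_n\bY\to\tfrac1{2\pi}\int_0^{2\pi}b\p{\omega}\p{P_x\abs{\Hmat\p{\omega}}^2+\beta^{-1}}\,\mathrm{d}\omega$ (concentration of $\bx_0$ on the sphere plus Szeg\H{o}), collapses after an algebraic cancellation to exactly $-\tfrac12$. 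Hence $\tfrac1n\ln\bE\pp{\sum_{i\geq1}w\p{\bx_i}}\to R-R_c-\tfrac12$, so the expected ratio of the bulk to the correct term is $\exe e^{n\p{R-R_c}}$, which is where $R_c$ enters.

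\emph{The two regimes.} If $R<R_c$, Markov's inequality on this ratio yields $\sum_{i\geq1}w\p{\bx_i}/w\p{\bx_0}\to0$ in probability over code and noise, so the posterior mass on $\bx_0$ tends to one, $\bE\ppp{\bX\mid\bY}$ converges to $\bX$ in the normalized $\ell_2$ sense, and $\text{mmse}\p{\bX\mid\bY}/n\to0$; the boundary $R=R_c$ needs a finer sub-exponential accounting of the same ratio. If $R>R_c$, the term $w\p{\bx_0}$ is exponentially negligible, the posterior spreads over the bulk, and one shows $\bE\ppp{\bX\mid\bY}$ converges (again in normalized $\ell_2$) to the linear/Wiener estimator $\hat{\bX}_{\mathrm{lin}}=P_x\bA^T\p{\beta^{-1}\Imat+P_x\bA\bA^T}^{-1}\bY=\beta\Mmat^{-1}\bA^T\bY$; consequently $\text{mmse}\p{\bX\mid\bY}/n\to P_x-\tfrac1n\bE\norm{\hat{\bX}_{\mathrm{lin}}}^2=\tfrac1n\tr\,\Mmat^{-1}$, which by Szeg\H{o} equals $\tfrac1{2\pi}\int_0^{2\pi}\tfrac{P_x}{1+\abs{\Hmat\p{\omega}}^2P_x\beta}\,\mathrm{d}\omega$, the claimed formula. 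The matching upper bound $\text{mmse}\p{\bX\mid\bY}\leq\tr\,\Mmat^{-1}$ is free, since the optimal MMSE never exceeds the linear MMSE, which depends on $\bX$ only through its covariance $P_x\Imat$.

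\emph{Main obstacle.} The delicate part is the lower bound on the MSE for $R>R_c$: proving that the estimator cannot exploit the code to beat the Wiener filter, equivalently that the two-replica overlap $\tfrac1n\bE\pp{\bX^T\bX'\mid\bY}$, with $\bX,\bX'$ independent posterior draws, stays at the Wiener value throughout $R>R_c$. A naive second-moment argument shows $Z$ to be self-averaging only above a threshold strictly larger than $R_c$, leaving an intermediate window in which $Z$ is dominated by rare codewords (a ``condensed'' bulk) while the MSE must nonetheless equal the Wiener value; closing this gap calls for a truncated or conditional moment argument, or an a-priori concentration of the overlap. A secondary, more routine difficulty is making the Toeplitz-to-spectral passages uniform: Szeg\H{o} for $\tr\,\phi\p{\bA^T\bA}$ together with concentration of quadratic functionals $\tfrac1n\bx_0^T\bB_n\bx_0$ of the single random spherical codeword around $\tfrac{P_x}{n}\tr\,\bB_n$, and treating the critical rate $R=R_c$ on its own.
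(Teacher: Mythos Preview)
Your outline is on the right track and matches the paper's strategy in broad strokes: split $Z=Z_c+Z_e$, show $Z_c$ dominates below a critical rate and $Z_e$ dominates above it, and identify the high-rate estimator with the Wiener filter. Your computation of the annealed exponent and of $R_c$ via Szeg\H{o} is correct, and your upper bound by the linear MMSE is fine. You have also put your finger on exactly the right obstacle.

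The paper closes that obstacle not by a second-moment argument on $Z_e$ itself, which, as you note, would only give concentration above a strictly larger rate, but by a Random-Energy-Model level-counting. Write
\[
Z_e\p{\by,\blam}=e^{-nR}\int\calN\p{\epsilon}\,e^{-n\beta\epsilon}\,\mathrm{d}\epsilon,
\qquad
\calN\p{\epsilon}\define\#\ppp{i\geq1:\ \tfrac12\norm{\by-\bA\bx_i}^2-\tfrac{\blam^T\bx_i}{\beta}\approx n\epsilon},
\]
and compute the large-deviation exponent $\Gamma\p{\epsilon}$ of $\text{Pr}\p{\text{energy}\approx n\epsilon}$ by a volume/type argument in the DFT domain, then discretize the spectrum (piecewise-constant bins) and pass to the limit by Szeg\H{o}. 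The key point is that for each fixed level $\epsilon$, $\calN\p{\epsilon}$ is a sum of i.i.d.\ Bernoulli variables, so Chebyshev gives $\calN\p{\epsilon}\exe e^{n\p{R+\Gamma\p{\epsilon}}}$ whenever $R+\Gamma\p{\epsilon}>0$, and Markov gives $\calN\p{\epsilon}=0$ otherwise, both with probability $1-e^{-cn}$. Varadhan then yields the typical (not merely annealed) exponent $\max_{\epsilon:\,R+\Gamma\p{\epsilon}>0}\pp{\Gamma\p{\epsilon}-\beta\epsilon}$, valid for \emph{all} $R$; in the matched case one checks that the unconstrained maximizer $\epsilon^*$ lies in the populated region exactly when $R>R_c$, so there is no intermediate glassy window at all. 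This is precisely the refinement your ``truncated moment'' remark is groping for, executed level by level rather than on $Z_e$ globally.

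A second difference: the paper does not argue separately that the posterior mean converges to the Wiener filter via overlaps. Instead it carries the tilt $\blam$ through the entire analysis of $\ln Z_e$, so that the estimator is obtained mechanically as $\partial_{\lambda_i}\ln Z\p{\by,\blam}\vert_{\blam=\bze}$; in the paramagnetic regime this derivative of the dominating exponent $F_{\mathrm{par}}\p{\blam}$ produces directly the Wiener coefficient $\beta\sigma_q^*P_x/\p{1+\abs{\sigma_q}^2P_x\beta}$ in each frequency bin. This sidesteps the replica-overlap computation you propose and delivers both the estimator and the MSE in one stroke.

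Finally, your replacement of the spherical codeword by a Gaussian in the annealed integral is the right heuristic, but the paper does not use it; the volume calculation of the conditional type (and the auxiliary Gaussian-measure sandwich used to estimate that volume) plays the same role while keeping the hard spherical constraint exact.
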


From the above result, it can be seen that for $R>R_c$ the MMSE is simply the estimation error which results by the Wiener filter that would have been applied had the input been a zero-mean, i.i.d. Gaussian process, with variance $1/\beta$. Accordingly, it is also shown in Section \ref{sec:proofs} that the MMSE estimator is exactly the Wiener filter. 

In the next theorem, we present the mismatched MSE. In contrast to the MMSE, unfortunately, the MSE does not lend itself to a simple closed-form expression. As will be seen in Section \ref{sec:proofs}, this complexity stems from the complicated dependence of the partition function on $\blam$. Nevertheless, despite of the following non-trivial expressions, it should be emphasized that the obtained MSE expression has a single-letter formula, and thus, practically, it can be easily calculated at least numerically. Let us define the following auxiliary variables
\begin{align}
P_a\p{\omega} \define& \frac{\abs{\Hmat'\p{\omega}}^2\beta\p{2+P_x\beta\abs{\Hmat\p{\omega}}^2}+\gamma_0}{\p{\abs{\Hmat'\p{\omega}}^2\beta+\gamma_0}^2}
\label{psiM}
\end{align}
where $\gamma_0$ is chosen such that $\int_0^{2\pi}P_a\p{\omega}\mathrm{d}\omega = 2\pi P_x$. Next define
\begin{align}
&\Bmat\p{\omega} \define \frac{\p{\abs{\Hmat'\p{\omega}}^2+\gamma_0}-2\p{\abs{\Hmat'\p{\omega}}^2\beta\p{2+P_x\beta\abs{\Hmat\p{\omega}}^2}+\gamma_0}}{\p{\abs{\Hmat'\p{\omega}}^2+\gamma_0}^3}\\
&\Cmat\p{\omega} \define \frac{2\beta^2\abs{\Hmat'\p{\omega}}^2\p{\abs{\Hmat\p{\omega}}^2+\frac{1}{\beta}}\Bmat\p{\omega}}{\sqrt{1+4\beta^2\abs{\Hmat'\p{\omega}}^2P_a\p{\omega}\p{\abs{\Hmat\p{\omega}}^2+\frac{1}{\beta}}}}\\
&\vartheta \define 2+\frac{\int_0^{2\pi}\pp{\frac{P_x}{P_x\gamma_0+P_x\abs{\Hmat'\p{\omega}}^2\beta}-\Cmat\p{\omega}}\mathrm{d}\omega}{\int_0^{2\pi}\Bmat\p{\omega}\mathrm{d}\omega},
\end{align}
and
\begin{align}
&\bXi_1\p{\omega} \define -\frac{\beta\Hmat'^*\p{\omega}}{\p{\abs{\Hmat'\p{\omega}}^2+\gamma_0}^2}\pp{\vartheta-\frac{2\p{\abs{\Hmat'\p{\omega}}^2+\gamma_0}^2P_a\p{\omega}+2\beta^2\abs{\Hmat'\p{\omega}}^2\p{\abs{\Hmat\p{\omega}}^2+\frac{1}{\beta}}}{\sqrt{1+4\beta^2\abs{\Hmat'\p{\omega}}^2P_a\p{\omega}\p{\abs{\Hmat\p{\omega}}^2+\frac{1}{\beta}}}}}.
\label{Chi1}
\end{align}
Let $\epsilon_{s,0}$, $\alpha_{1,0}$ and $\alpha_{2,0}$ be the solution of the following set of three simultaneous equations:
\begin{align}
\label{seteq1}
&R+\frac{1}{4\pi}\int_0^{2\pi}\ln\p{\frac{2\epsilon_{s,0}}{P_x\abs{\Hmat'\p{\omega}}^2\alpha_{2,0}+2P_x\alpha_{1,0}\epsilon_{s,0}}}\mathrm{d}\omega=0\\
&\frac{1}{2\pi}\int_0^{2\pi}\frac{4\alpha_{1,0}\epsilon_{0,s}^2+\abs{\Hmat'\p{\omega}}^2\alpha_{2,0}\pp{\p{\abs{\Hmat\p{\omega}}^2P_x+\frac{1}{\beta}}\alpha_{2,0}+2\epsilon_{s,0}}}{\p{\abs{\Hmat'\p{\omega}}^2\alpha_{2,0}+2\alpha_{1,0}\epsilon_{s,0}}^2}\mathrm{d}\omega = P_x\\
&\frac{1}{2\pi}\int_0^{2\pi}\frac{4\alpha_{1,0}^2\epsilon_{s,0}^2\p{1+P_x\beta\abs{\Hmat\p{\omega}}^2}+4\abs{\Hmat'\p{\omega}}^2\alpha_{1,0}\epsilon_{s,0}^2\beta+2\abs{\Hmat'\p{\omega}}^4\alpha_{2,0}\epsilon_{s,0}\beta}{2\beta\epsilon_{s,0}\p{\abs{\Hmat'\p{\omega}}^2\alpha_{2,0}+2\alpha_{1,0}\epsilon_{s,0}}^2}\mathrm{d}\omega = 1.
\label{seteq3}
\end{align}
Then, we define
\begin{align}
&\Kmat\p{\omega} \define 2\beta\epsilon_{s,0}\p{\abs{\Hmat'\p{\omega}}^2\alpha_{2,0}+2\alpha_{1,0}\epsilon_{s,0}}^2\\
&\Tmat\p{\omega} \define 4\alpha_{1,0}^2\epsilon_{s,0}^2\p{1+P_x\beta\abs{\Hmat\p{\omega}}^2}+4\beta\abs{\Hmat'\p{\omega}}^2\alpha_{1,0}\epsilon_{s,0}^2+2\abs{\Hmat'\p{\omega}}^4\alpha_{2,0}\beta\epsilon_{s,0}\\
&\Dmat\p{\omega} \define \abs{\Hmat'\p{\omega}}^2\alpha_{2,0}+2\alpha_{1,0}\epsilon_{s,0}\\
&\Rmat\p{\omega} \define 4\alpha_{1,0}\epsilon_{s,0}^2 + \abs{\Hmat'\p{\omega}}^2\alpha_{2,0}\pp{\alpha_{2,0}\p{\abs{\Hmat\p{\omega}}^2P_x+2\epsilon_{s,0}}}\\
&\Qmat\p{\omega} \define \epsilon_{s,0}\p{P_x\abs{\Hmat'\p{\omega}}^2\alpha_{2,0}+2P_x\alpha_{1,0}\epsilon_{s,0}}\\
&V\define \frac{1}{2\pi}\int_0^{2\pi}\frac{P_x\abs{\Hmat'\p{\omega}}^2\alpha_{2,0}}{\epsilon_{s,0}\p{P_x\abs{\Hmat'\p{\omega}}^2\alpha_{2,0}+2P_x\alpha_{1,0}\epsilon_{s,0}}}\mathrm{d}\omega\\
&F \define \frac{1}{V}\frac{1}{2\pi}\int_0^{2\pi}\frac{P_x\abs{\Hmat'\p{\omega}}^2r_2 + 2P_x\epsilon_{s,0}r_1}{P_x\abs{\Hmat'\p{\omega}}^2\alpha_{2,0}+2P_x\alpha_{1,0}\epsilon_{s,0}}\mathrm{d}\omega\\
&\gamma_1 \define \frac{1}{2\pi}\int_0^{2\pi}\left[\frac{8\alpha_{1,0}\epsilon_{s,0}^2\p{1+P_x\beta\abs{\Hmat\p{\omega}}^2}+4\beta\abs{\Hmat'\p{\omega}}^2\epsilon_{s,0}^2}{\Kmat\p{\omega}}\right.\nonumber\\
& \ \ \ \ \ \ \ \ \ \ \ \ \ \ \ \ \ \  \left.-\frac{8\Tmat\p{\omega}\beta\epsilon_{s,0}^2\p{\abs{\Hmat'\p{\omega}}^2\alpha_{2,0}+2\alpha_{1,0}\epsilon_{s,0}}}{\Kmat^2\p{\omega}}\right]\mathrm{d}\omega\\
&\gamma_2 \define \frac{1}{2\pi}\int_0^{2\pi}\frac{2\Kmat\p{\omega}\beta\epsilon_{s,0}\abs{\Hmat'\p{\omega}}^4-4\Tmat\p{\omega}\beta\epsilon_{s,0}\p{\abs{\Hmat'\p{\omega}}^2\alpha_{2,0}+2\alpha_{1,0}\epsilon_{s,0}}\abs{\Hmat'\p{\omega}}^2}{\Kmat^2\p{\omega}}\mathrm{d}\omega\\
&\gamma_3 \define \frac{1}{2\pi}\int_0^{2\pi}\left[\frac{8\alpha_{1,0}^2\epsilon_{s,0}\p{1+P_x\beta\abs{\Hmat\p{\omega}}^2}+8\beta\epsilon_{s,0}\abs{\Hmat'\p{\omega}}^2\alpha_{1,0}+2\beta\alpha_{2,0}\abs{\Hmat'\p{\omega}}^4}{\Kmat\p{\omega}}\right.\nonumber\\
&\left.-\frac{\Tmat\p{\omega}\pp{2\beta\p{\abs{\Hmat'\p{\omega}}^2\alpha_{2,0}+2\alpha_{1,0}\epsilon_{s,0}}^2+8\beta\epsilon_{s,0}\alpha_{1,0}\p{\abs{\Hmat'\p{\omega}}^2\alpha_{2,0}+2\alpha_{1,0}\epsilon_{s,0}}}}{\Kmat^2\p{\omega}}\right]\mathrm{d}\omega\\
&\Upsilon\p{\omega} \define \frac{-4\beta\alpha_{1,0}\epsilon_{s,0}\alpha_{2,0}-\beta\alpha_{2,0}^2\abs{\Hmat'\p{\omega}}^2}{\Kmat^2\p{\omega}}\\
&\eta_1 \define \frac{1}{2\pi}\int_0^{2\pi}\frac{4\Dmat\p{\omega}\epsilon_{s,0}^2-4\Rmat\p{\omega}\epsilon_{s,0}}{\Dmat^3\p{\omega}}\mathrm{d}\omega\\
&\eta_2 \define \frac{1}{2\pi}\int_0^{2\pi}\left[\frac{\abs{\Hmat'\p{\omega}}^2\pp{\p{\abs{\Hmat\p{\omega}}^2P_x+\frac{1}{\beta}}\alpha_{2,0}+2\epsilon_{s,0}}}{\Dmat^2\p{\omega}}\right.\nonumber\\
&\left.\ \ \ \ \  \ \ \ \ \ \ \ \ +\frac{\abs{\Hmat'\p{\omega}}^2\alpha_{2,0}\p{\abs{\Hmat\p{\omega}}^2P_x+\frac{1}{\beta}}\Dmat\p{\omega}-2\Rmat\p{\omega}\abs{\Hmat'\p{\omega}}^2}{\Dmat^3\p{\omega}}\mathrm{d}\omega\right]\\
&\eta_3 \define \frac{1}{2\pi}\int_0^{2\pi}\frac{8\Dmat\p{\omega}\alpha_{1,0}\epsilon_{s,0}+2\Dmat\p{\omega}\abs{\Hmat'\p{\omega}}^2\alpha_{2,0}-4\Rmat\p{\omega}\alpha_{1,0}}{\Dmat^3\p{\omega}}\mathrm{d}\omega\\
&\Lambda\p{\omega} \define \frac{\alpha_{2,0}^2}{\Dmat^2\p{\omega}}\\
&r_1  = \frac{\eta_2\gamma_3-\gamma_2\eta_3}{\gamma_2\eta_1-\eta_2\gamma_1}\\
&r_2  = \frac{\eta_1\gamma_3-\gamma_1\eta_3}{\gamma_1\eta_2-\eta_1\gamma_2}\\
&\Jmat_1\p{\omega} = \frac{\eta_2\Upsilon\p{\omega}-\gamma_2\Lambda\p{\omega}}{\gamma_2\eta_1-\eta_2\gamma_1}\\
&\Jmat_2\p{\omega} = \frac{\eta_1\Upsilon\p{\omega}-\gamma_1\Lambda\p{\omega}}{\gamma_1\eta_2-\eta_1\gamma_2}\\
&\Jmat\p{\omega}\define\frac{1}{2\pi}\frac{\Jmat_1\p{\omega}\int_0^{2\pi}\frac{2\epsilon_{s,0}^2P_x}{\Qmat\p{\omega}}\mathrm{d}\omega+\Jmat_2\p{\omega}\int_0^{2\pi}\frac{2\epsilon_{s,0}P_x\abs{\Hmat'\p{\omega}}^2}{\Qmat\p{\omega}}\mathrm{d}\omega}{V\p{1-F}}\\
&\bXi_2\p{\omega}\define -2\Jmat\p{\omega}\Hmat'^*\p{\omega}.
\label{Chi2}
\end{align}
Finally, let
\begin{align}
E_{g} \define P_x-\re\p{\frac{1}{\pi}\int_0^{2\pi}\bXi_{2}\p{\omega}^*\Hmat^*\p{\omega}P_x\mathrm{d}\omega}+\frac{1}{2\pi}\int_0^{2\pi}\abs{\bXi_2\p{\omega}}^2\p{\abs{\Hmat\p{\omega}}^2P_x+\frac{1}{\beta}}\mathrm{d}\omega,
\end{align}
and
\begin{align}
E_{p} \define P_x-\re\p{\frac{1}{\pi}\int_0^{2\pi}\bXi_1\p{\omega}^*\Hmat^*\p{\omega}P_x\mathrm{d}\omega}+\frac{1}{2\pi}\int_0^{2\pi}\abs{\bXi_1\p{\omega}}^2\p{\abs{\Hmat\p{\omega}}^2P_x+\frac{1}{\beta}}\mathrm{d}\omega,
\end{align}
and we define the following critical rates
\begin{align}
R_e\define&\frac{1}{4\pi}\int_0^{2\pi}\ln\p{P_x\gamma_0+P_x\beta\abs{\Hmat'\p{\omega}}^2}\mathrm{d}\omega\\
R_d\define&\frac{1}{2} + \beta P_x\frac{1}{2\pi}\int_0^{2\pi}\re\p{\Hmat'^{*}\p{\omega}\Hmat\p{\omega}}\mathrm{d}\omega\nonumber\\
&+\frac{1}{4\pi}\int_0^{2\pi}\abs{\Hmat'\p{\omega}}^2\beta\p{\frac{\abs{\Hmat'\p{\omega}}^2\beta\p{2+\beta\abs{\Hmat\p{\omega}}^2P_x}+\gamma_0}{\p{\abs{\Hmat'\p{\omega}}^2\beta+\gamma_0}^2}-P_x}\nonumber\\
&-\frac{1}{4\pi}\int_0^{2\pi}\frac{\abs{\Hmat'\p{\omega}}^2\beta\p{3+2P_x\beta\abs{\Hmat\p{\omega}}^2}+\gamma_0}{\p{\abs{\Hmat'\p{\omega}}^2\beta+\gamma_0}}\\
R_c\define& R_e+R_d\\
R_g\define&-\frac{1}{4\pi}\int_0^{2\pi}\ln\p{\frac{2\tilde{\epsilon}}{P_x\abs{\Hmat'\p{\omega}}^2\tilde{\alpha}_2+2P_x\tilde{\alpha}_1\tilde{\epsilon}}}\mathrm{d}\omega
\label{R_gDefcont}
\end{align}
where $\tilde{\alpha}_1$ and $\tilde{\alpha}_2$ solve the set of two simultaneous equations
\begin{align}
&\frac{1}{2\pi}\int_0^{2\pi}\frac{4\tilde{\alpha}_1\tilde{\epsilon}^2+\abs{\Hmat'\p{\omega}}^2\tilde{\alpha}_2\pp{\p{\abs{\Hmat\p{\omega}}^2P_x+\frac{1}{\beta}}\tilde{\alpha}_2+2\tilde{\epsilon}}}{\p{\abs{\Hmat'\p{\omega}}^2\tilde{\alpha}_2+2\tilde{\alpha}_1\tilde{\epsilon}}^2}\mathrm{d}\omega = P_x\\
&\frac{1}{2\pi}\int_0^{2\pi}\frac{4\tilde{\alpha}_1^2\tilde{\epsilon}^2\p{1+P_x\beta\abs{\Hmat\p{\omega}}^2}+4\abs{\Hmat'\p{\omega}}^2\tilde{\alpha}_1\tilde{\epsilon}^2\beta+2\abs{\Hmat'\p{\omega}}^4\tilde{\alpha}_2\tilde{\epsilon}\beta}{2\beta\tilde{\epsilon}\p{\abs{\Hmat'\p{\omega}}^2\tilde{\alpha}_2+2\tilde{\alpha}_1\tilde{\epsilon}}^2}\mathrm{d}\omega = 1.
\label{twoequations}
\end{align}
and
\begin{align}
\tilde{\epsilon} = \frac{1}{2\beta}+\frac{P_x}{4\pi}\int_0^{2\pi}\abs{\Hmat'\p{\omega}-\Hmat\p{\omega}}^2\mathrm{d}\omega.
\label{tildeepsilon}
\end{align}

We are now in a position to state our main theorem.
\begin{theorem}[Mismatched MSE]\label{th:2}
Consider the model defined in Subsection \ref{sub:model}, and assume that the sequence $\ppp{h_k}_k$ is square summable. The (asymptotic) mismatched MSE is given as follows:
\newline a) For $R_d\geq0$ 
 \begin{align}
&\lim_{n\to\infty}\frac{\text{mse}\p{\bX\mid\bY}}{n} =
\begin{cases}
0,\ \ &\ \ R\leq R_c\\
E_{p},\ \ &\ \ R> R_c 
\end{cases}.
\end{align}
b) For $R_d<0$ 
\begin{align}
&\lim_{n\to\infty}\frac{\text{mse}\p{\bX\mid\bY}}{n}=
\begin{cases}
0,\ \ &\ \ R\leq R_g\\
E_{g},\ \ &\ \ R_g<R\leq R_e\\ 
E_{p},\ \ &\ \ R> R_e
\end{cases}.
\end{align}
\end{theorem}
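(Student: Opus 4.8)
\emph{Proof outline.} The plan is to reduce the mismatched MSE to a single-letter asymptotic evaluation of the partition function $Z'\p{\by,\blam}$ and its $\blam$-gradient at $\blam=\bze$, and then to carry out that evaluation by a statistical-mechanical (saddle-point / large-deviations) analysis. Concretely, from \cite{Neri1} one has $\bE'\ppp{\bX\mid\by}=\nabla_{\blam}\ln Z'\p{\by,\blam}\big|_{\blam=\bze}$, so, using $\norm{\bX}^2=nP_x$ on the codebook sphere,
\[
\text{mse}\p{\bX\mid\bY}=nP_x-2\,\bE\pp{\bX^T\nabla_{\blam}\ln Z'\p{\bY,\blam}\big|_{\blam=\bze}}+\bE\norm{\nabla_{\blam}\ln Z'\p{\bY,\blam}\big|_{\blam=\bze}}^2 .
\]
The expectations are over $\bY=\bA\bX+\bN$ with $\bX$ uniform over $\calC$. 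As a first step I would remove the Toeplitz structure: invoking the standard asymptotic equivalence of Toeplitz and circulant matrices, $\bA$ and $\bA'$ may be replaced by circulants, which simultaneously diagonalizes $\norm{\by-\bA'\bx}^2$ and every Gaussian integral appearing below into integrals over the normalized frequency $\omega\in[0,2\pi)$ with $\Hmat\p{\omega},\Hmat'\p{\omega}$ as eigenvalues. This is precisely what turns every constant in the statement into a $\frac{1}{2\pi}\int_0^{2\pi}\p{\cdot}\mathrm{d}\omega$.

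The core is a random-energy-model-type analysis of $Z'\p{\by,\blam}=\p{2\pi/\beta}^{-n/2}e^{-nR}\sum_{\bx\in\calC}\exp\pp{-\beta\norm{\by-\bA'\bx}^2/2+\blam^T\bx}$. Taking w.l.o.g.\ the transmitted word to be $\bx_0$, I would split the codebook sum into the ``correct'' term ($\bx_0$) and the remaining $e^{nR}-1$ ``incorrect'' terms, which are independent of $\by$ and form a random-energy ensemble; a concentration / self-averaging argument (as in \cite{Neri1}) then identifies, for each $R$, which contribution dominates $\frac1n\ln Z'$. For small $R$ the correct word wins --- an ordered (``ferromagnetic'') phase in which the estimator locks onto a filtered copy of $\bx_0$, the MSE vanishes, and at $\blam=\bze$ the effective energy per symbol is pinned to $\tilde\epsilon$ of \eqref{tildeepsilon} (the true residual power seen through the wrong channel), with the saddle point governed by \eqref{twoequations}. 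For larger $R$ a macroscopic set of incorrect words takes over; a Legendre / saddle-point optimization over the overlap-type order parameters --- with Lagrange multipliers enforcing the spherical power constraint and the entropy--rate balance of the surviving codewords --- yields at $\blam=\bze$ the system \eqref{seteq1}--\eqref{seteq3} with solution $\p{\epsilon_{s,0},\alpha_{1,0},\alpha_{2,0}}$ (a ``glassy'', $R$-dependent phase), which finally gives way, for still larger $R$, to an $R$-independent ``paramagnetic'' evaluation in which $\gamma_0$ is the multiplier fixed by $\int_0^{2\pi}P_a\p{\omega}\mathrm{d}\omega=2\pi P_x$. The crossover points are exactly $R_g$, $R_e$, and --- via a second-order comparison of the competing free energies, encoded by $R_d$ --- $R_c=R_e+R_d$; the sign of $R_d$ decides whether the glassy window $\p{R_g,R_e}$ is present (case (b)) or empty (case (a), a direct ferromagnetic-to-paramagnetic transition at $R_c$).

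To get the estimator I would differentiate the asymptotic $\ln Z'\p{\by,\blam}$ once in $\blam$ at $\blam=\bze$. Since the dominant configuration itself depends on $\blam$, this requires implicit differentiation of the saddle-point equations; the $\blam$-derivatives of the order parameters are exactly the auxiliary quantities --- $\vartheta$ (built from $\Bmat,\Cmat$) in the paramagnetic phase, and $r_1,r_2,\Jmat_1\p{\omega},\Jmat_2\p{\omega}$ (obtained from the linear systems with coefficients $\gamma_i,\eta_i,\Upsilon,\Lambda$ and then assembled into $\Jmat\p{\omega}$) in the glassy phase --- which, after one further differentiation of the Gaussian integral, produce the frequency-domain filters $\bXi_1\p{\omega}$ of \eqref{Chi1} and $\bXi_2\p{\omega}$ of \eqref{Chi2}. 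Since $\bE'\ppp{\bX\mid\bY}$ is asymptotically the linear filter $\bXi=\bXi_i$ applied to $\bY$, substituting into the quadratic-form identity
\[
\lim_{n\to\infty}\frac{\bE\norm{\bX-\bXi\bY}^2}{n}=P_x-\re\p{\frac1\pi\int_0^{2\pi}\bXi^*\p{\omega}\Hmat^*\p{\omega}P_x\,\mathrm{d}\omega}+\frac1{2\pi}\int_0^{2\pi}\abs{\bXi\p{\omega}}^2\p{\abs{\Hmat\p{\omega}}^2P_x+\tfrac1\beta}\mathrm{d}\omega
\]
(proved directly from $\bY=\bA\bX+\bN$ and the circulant approximation) yields $E_p$ and $E_g$, and assembling the three (resp.\ two) regimes gives part (b) (resp.\ part (a)). Theorem~\ref{th:1} is then the special case $\Hmat'=\Hmat$: the glassy phase collapses, $\bXi_1$ reduces to the Wiener filter, and $R_e+R_d$ simplifies to the stated $R_c$.

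The main obstacle is the $\blam$-dependence of the dominant configuration. One must justify the dominant-configuration evaluation of $\frac1n\ln Z'$, establish self-averaging uniformly in a neighborhood of $\blam=\bze$ so that $\lim_n$ and $\nabla_{\blam}\big|_{\blam=\bze}$ commute, and then push the implicit differentiation through the coupled saddle-point equations without sign or bookkeeping errors --- this is exactly what spawns the long list of auxiliary definitions and is where essentially all the effort lies. Pinning down the phase boundaries, in particular the second-order threshold $R_d$ that separates cases (a) and (b), is a close second.
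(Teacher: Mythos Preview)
Your outline is correct and follows essentially the same route as the paper: split $Z'$ into the correct-codeword term and the incorrect-codewords sum, pass to the frequency domain via the circulant/Szeg\"o approximation, evaluate the incorrect-codewords free energy by a random-energy/large-deviations argument, compare exponents to locate the ferromagnetic/glassy/paramagnetic phases with thresholds $R_g,R_e,R_c$ governed by the sign of $R_d$, and finally differentiate the dominant free energy in $\blam$ (with implicit differentiation of the Lagrange multipliers) to extract the linear filters $\bXi_1,\bXi_2$ and hence $E_p,E_g$.

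The one substantive point your outline glosses over is \emph{how} to obtain the large-deviations rate for $\text{Pr}\{\norm{\by-\bSig'\bX}^2/2-\blam^T\bX/\beta\approx n\epsilon\}$ under the spherical constraint. The paper stresses that, because $\bA'$ induces interactions, the ``cone'' geometry used when $\bA'=\bI$ fails; its workaround is a frequency-domain \emph{binning}: partition the $n$ DFT coordinates into $k$ blocks, approximate the eigenvalues as piecewise constant, compute the conditional-type volume on each block by a Gaussian-measure sandwich (Lemma~\ref{lem:1}), and finally let $k\to\infty$ after $n\to\infty$. This is what turns the rate function into the constrained optimization \eqref{optimizationfff} over per-bin powers $P_m$ and correlations $\rho_m$, from which the saddle-point equations you cite (and all of the $\gamma_i,\eta_i,\Jmat_i$ bookkeeping) emerge. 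Your phrase ``saddle-point over overlap-type order parameters'' is the right target, but the binning is the concrete device that makes it computable, and is the main technical novelty of the proof. A minor aside: your MSE identity uses $\bX$ directly in the cross term, whereas the paper routes through $\bE\{\bX\mid\bY\}$ via the tower property and then checks that the extra matched threshold $R_{c,M}$ does not create an additional region; your version is equivalent and in fact slightly cleaner.
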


In the jargon of statistical mechanics of spin arrays (see for example \cite[Ch. 6]{Mezard}), the ranges of rates $R\leq R_c$ for $R_d\geq0$, $R\leq R_g$ for $R_d<0$, and $R\leq R_c$ in the matched case, correspond to the ordered phase (or ferromagnetic phase) in which the partition function is dominated by the correct codeword (and hence so is the posterior). Accordingly, in this range the MSE asymptotically vanishes, which literally means reliable communication. The intermediate range, $R_g<R\leq R_e$, which appears only in the mismatched case and only for $R_d<0$, is analogous to the glassy phase (or ``frozen" phase), in which the partition function is dominated by a sub-exponential number of wrong codewords. Intuitively, in this range, we may have the illusion that there is relatively little uncertainty about the transmitted codeword, but this is wrong due to the mismatch (as the main support of the mismatched posterior belongs to incorrect codewords). The remaining range corresponds to the paramagnetic phase, in which the partition function is dominated by an exponential number of wrong codewords. In Section \ref{sec:exmp}, we will link between each one of the two cases $R_d\geq0$ and $R_d<0$, to ``pessimistic" and ``optimistic" behaviors of the receiver, which were already mentioned in the Introduction. 

It is tempting to think that there should not be a range of rates for which the MSE (MMSE) vanishes, as we deal with an estimation problem rather than a decoding problem. Nonetheless, since codewords are being estimated, and there are a finite number of them, for low enough rates (up to some critical rate) the posterior is dominated by the correct codeword, and thus asymptotically, the estimation can be regarded as a maximum a posteriori probability (MAP) estimation, and so the MSE vanishes. In the same breath, note that this is not the case if mismatch in the input distribution is considered. For example, if the receiver's assumption on the transmitted energy is wrong, then no matter how low the rate is, there will always be an inherent error which stems from the fallacious averaging over a hypersphere with wrong radius (wrong codebook). Precisely, in this case, the estimated codeword will differ from the real one by an inevitable scaling of $\sqrt{P_x'/P_x}$, where $P_x'$ is the mismatched power. 
%Finally, note that the above described phases also appeared in \cite{Neri1,Neri2}, and are inherited from the behavior of the mismatched decoder \cite{Mezard}. Accordingly, in the following section, the general behavior of the various phase diagrams, is also valid for the decoding problem. 

Finally, it is important to emphasize that the mismatched MSE estimator and the MMSE estimator can also be obtained as a byproduct of the analysis. However, since they will add only little further insights into the problem, we do not present them here. The interested reader can find their explicit expressions in Section \ref{sec:proofs}. 

\begin{remark}
Although we have assumed that the transmitted codeword has a flat spectrum, the analysis can readily be extended to any input spectral density $S_x\p{\omega}$. In Section \ref{sec:proofs}, we discuss the technical issues that should be considered in order to modify the analysis to hold for this generalization. As a concrete simple example, in the case of MMSE estimation, one obtains
\begin{align}
\lim_{n\to\infty}\frac{\text{mmse}\p{\bX\mid\bY}}{n}
=\begin{cases}
\frac{1}{2\pi}\int_0^{2\pi}\frac{S_x\p{\omega}}{1+\abs{\Hmat\p{\omega}}^2S_x\p{\omega}\beta}\mathrm{d}\omega,\ \ &\ \ R>R_c\\
0,\ \ &\ \ R\leq R_c\\ 
\end{cases}
\label{finalTheRes2}
\end{align}
where
\begin{align}
R_c\define \frac{1}{4\pi}\int_0^{2\pi}\ln\p{1+\abs{\Hmat\p{\omega}}^2S_x\p{\omega}\beta}\mathrm{d}\omega.
\end{align}
Nevertheless, our assumption on flat input spectrum is reasonable when there is uncertainty at the encoder concerning the frequency response of the channel, as there are no ``preferred" frequencies. Finally, note that as an application of the above issue, one may wish to consider the minimization of the MMSE w.r.t. the input spectral density.  
\end{remark}

%%%%%%%%%%%%%%%%%%%%%%%%%%%%%%%%%%%%%%%%%%%%%%%%%%%%%%%%%%%%%%%%%%%%%%%%%%%%%%%%%%%%%%%%%%%%%%%%%%%%%%%%%%%%%%%%%%%%%%%%%%%%%%%%%%%%%%%%%%%%%%%%%%%%%%%%%%%%%%%%%%%%%%%%%%%%%%%%%%%%
%%%%%%%%%%%%%%%%%%%%%%%%%%%%%%%%%%%%%%%%%%%%%%%%%%%%%%%%%%%%%%%%%%%%%%%%%%%%%%%%%%%%%%%%%%%%%%%%%%%%%%%%%%%%%%%%%%%%%%%%%%%%%%%%%%%%%%%%%%%%%%%%%%%%%%%%%%%%%%%%%%%%%%%%%%%%%%%%%%%%
%%%%%%%%%%%%%%%%%%%%%%%%%%%%%%%%%%%%%%%%%%%%%%%%%%%%%%%%%%%%%%%%%%%%%%%%%%%%%%%%%%%%%%%%%%%%%%%%%%%%%%%%%%%%%%%%%%%%%%%%%%%%%%%%%%%%%%%%%%%%%%%%%%%%%%%%%%%%%%%%%%%%%%%%%%%%%%%%%%%%

\section{Examples}\label{sec:exmp}
In this section, we provide a few examples in order to illustrate the theoretical results presented in the previous section. In particular, we present and explore the phase diagrams and the MSE's as functions of the rate and some parameters of the mismatched channel. The main goal in these examples is further understanding of the role of the true and the mismatched probability measures in creating phase transitions. 

\begin{example}\label{exmp2}
We start with a simple example where both $\Hmat\p{\omega}$ and $\Hmat'\p{\omega}$ are low-pass filters (LPFs) that differ in their cutoff frequencies and gains
\begin{align}
\Hmat\p{\omega} = 
\begin{cases}
1, \ &\abs{\omega}\leq\frac{\pi}{2}\\
0, \ &\text{else}
\end{cases},
\end{align}
and
\begin{align}
\Hmat'\p{\omega} = 
\begin{cases}
\chi, \ &\abs{\omega}\leq\omega_c\\
0, \ &\text{else}
\end{cases}
\end{align}
for some $\chi>0$ and $0\leq\omega_c\leq\pi$. In the numerical calculations, we chose $\beta = P_x =1$. Figures \ref{fig:2} and \ref{fig:2MSE} show, respectively, the phase diagrams and the MSE's as functions of $R$ and $\omega_c$, for various values of the gain $\chi$. The first obvious observation is that the maximum range of rates for which the ferromagnetic phase dominates the partition function occurs at $\omega_c = \pi/2$ for each gain, as expected. Next, consider the case of $\chi=1$, which means that the gain is matched. In this case, it is observed that for $\omega_c\leq\pi/2$, there are two phases: the ferromagnetic phase and the paramagnetic phase, and hence, based on Theorem \ref{th:2}, $R_d\geq0$. On the other hand, for $\omega_c>\pi/2$, the glassy phase begins to play a role, and thus $R_d<0$. Intuitively speaking, the case of $\omega_c\leq\pi/2$ corresponds to a pessimistic assumption of the receiver - lower bandwidth which translates to lower effective SNR, while $\omega_c>\pi/2$ corresponds to an optimistic assumption - higher effective SNR. These behaviors are consistent with the results obtained in \cite{Neri1}, where the case of mismatch in the noise variance was considered (while assuming that $\bA=\bA'$ is the identity matrix). 

In \cite{Neri1}, $R_d>0$ simply translates to $\beta>\beta'$ (the mismatched noise variance is larger than the actual one), namely, the estimator is pessimistic, while in the case of the reversed inequality it is overly optimistic. Accordingly, in the pessimistic case, the partition function exhibits a single phase transition, but at the price of a lower critical rate (compared to the matched case), which means that the range of rates for which reliable communication is possible is smaller. In the optimistic case, however, there is no loss in the critical rate, but there is a price of an additional phase transition. Now, for $\chi\neq1$, the notions of pessimism and optimism are not a priori obvious. For example, it can be seen that for $\chi<1$, and for a large enough cutoff frequency $\omega_c$, the mismatched estimator can be regarded as an optimistic one. Also, for $\chi>1$, apparently, the ``price" of being too optimistic in the gain results in a dominant range of the glassy phase. Finally, note that the fact that the range of rates for which the ferromagnetic region dominates the partition function (namely, vanishing MSE) is decreasing with the excess of the optimism (e.g., for $\chi=1$ and increasing of the cutoff frequency) is reasonable\footnote{In \cite{Neri1}, in contrast to our case, for $\beta<\beta'$ ($R_d<0$), the critical rate $R_e$ is fixed for any mismatched noise variance value, namely, it is independent of the optimistic behavior of the receiver.}. Indeed, the uncertainty in the frequency domain, causes the receiver to assume that the codewords are distributed in some subspace of the $n$-dimensional hypersphere. The size of this subspace is, of course, increasing as the receiver's assumption is more optimistic. Accordingly, the probability of error also increases, and thus the threshold rate for reliable communication decreases. 

\begin{figure}[!t]
\begin{minipage}[b]{1.0\linewidth}
  \centering
  %\centerline{\includegraphics[width=17cm,height = 10cm]{Figures/Varable_Mis_Freq_Amplitude}}
	\centerline{\includegraphics[width=15cm,height = 12cm]{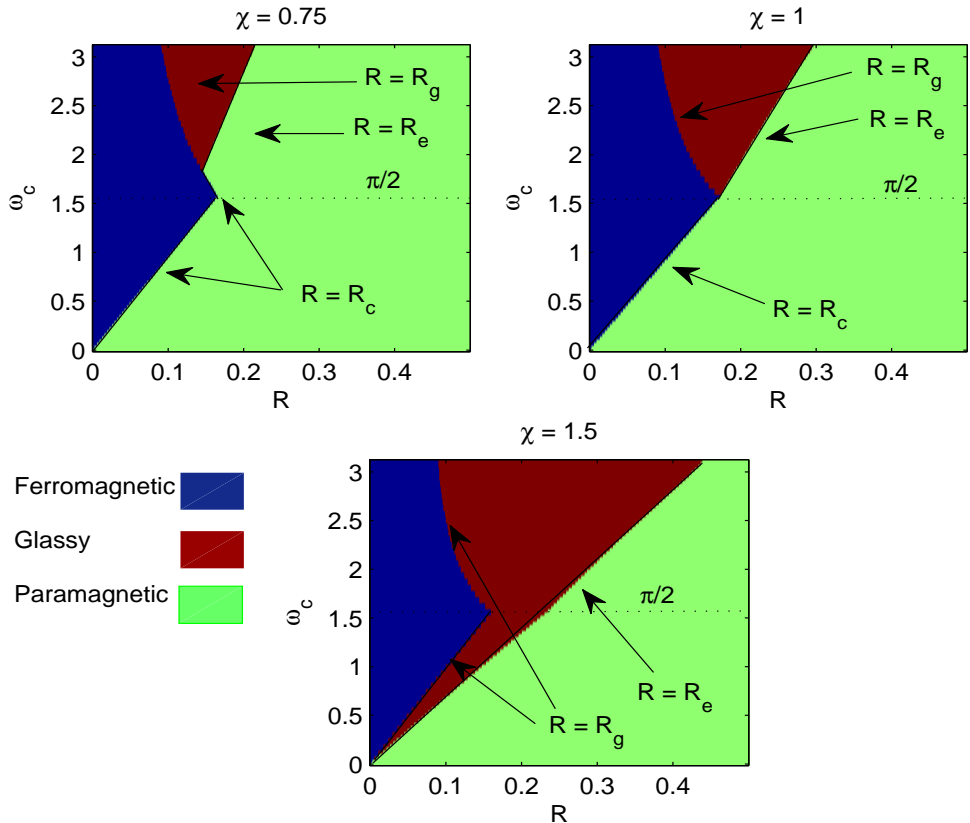}}
	
%  \vspace{2.0cm}
\end{minipage}
\caption{Example \ref{exmp2}: Phase diagram in the plane of $R$ vs. $\omega_c$ with various gain values. The arrows are directed towards the boundaries of the various phase transitions.}
\label{fig:2}
\end{figure}

\begin{figure}[!t]
\begin{minipage}[b]{1.0\linewidth}
  \centering
  %\centerline{\includegraphics[width=17cm,height = 10cm]{Figures/Variable_Mis_FreqandAmplitude_MSE}}
	\centerline{\includegraphics[width=15cm,height = 10cm]{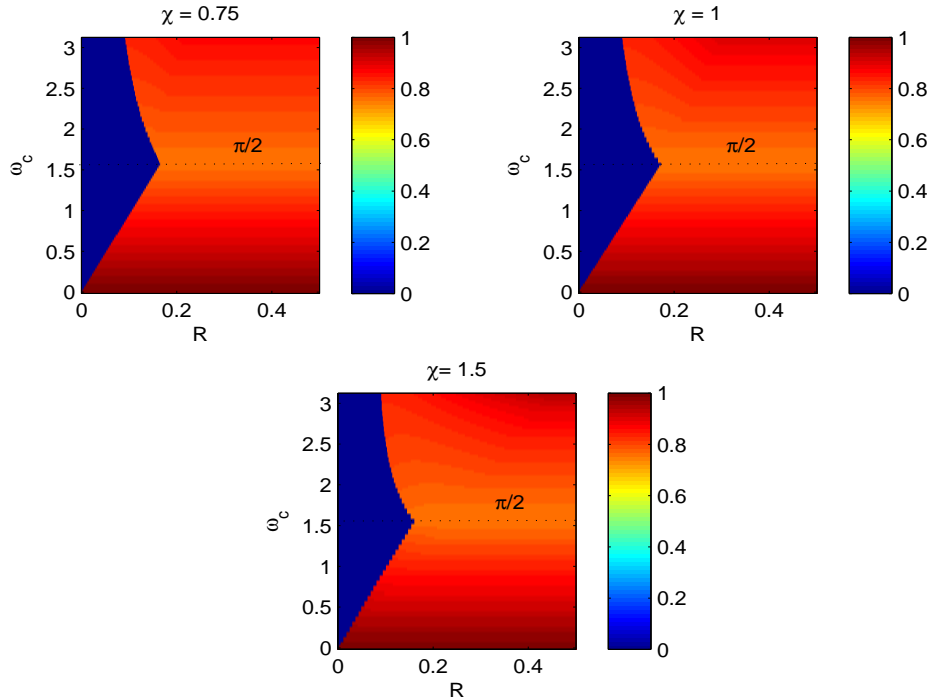}}
%  \vspace{2.0cm}
\end{minipage}
\caption{Example \ref{exmp2}: Mismatched MSE as a function of $R$ and $\omega_c$ with various gain values.}
\label{fig:2MSE}
\end{figure}

\end{example}

\begin{example}\label{exmp:2}
Let $\Hmat\p{\omega}$ be a multiband filter given by
\begin{align}
\Hmat\p{\omega} = 
\begin{cases}
1, \ &\abs{\omega\pm\frac{3\pi}{8}}\leq\frac{\pi}{8}\;\text{or}\;\abs{\omega\pm\frac{7\pi}{8}}\leq\frac{\pi}{8}\\
0, \ &\text{else}
\end{cases},
\end{align}
and let the mismatched filter be given by a band-pass filter
\begin{align}
\Hmat'\p{\omega} = 
\begin{cases}
1, \ &\omega_L\leq\abs{\omega}\leq\omega_R\\
0, \ &\text{else}
\end{cases},
\end{align}
with constant bandwidth, $\omega_R-\omega_L = \pi/8$, i.e., smaller than the real one. In the numerical calculations, we again chose $\beta=P_x=1$. Figures \ref{fig:4} and \ref{fig:4MSE} show, respectively, the phase diagram and the MSE as functions of $R$ and $\omega_L$. First, observe that for $\omega_R<\pi/4$, which means that $\Hmat'\p{\omega}$ and $\Hmat\p{\omega}$ are equal to one over non intersecting frequency ranges, there is no ferromagnetic phase, as expected. Accordingly, for $\omega_R>\pi/4$, the ferromagnetic phase begins to play a role, and it can be seen that for $\pi/4+\pi/8<\omega_R<\pi/2$, which means maximal intersection between the two filters, the range of rates for which the ferromagnetic phase dominates the partition function is maximal. Since the matched filter has two bands, obviously, the same behavior appears also in the second band. Thus, in this example, we actually obtain two disjoint glassy (and ferromagnetic) regions, which correspond to the two bands of the matched filter. Also, as shown in Fig. \ref{fig:4MSE}, in the ranges where no ferromagnetic phase exists, the MSE within the paramagnetic phase is larger than the MSE within the regions where ferromagnetic phase does exists, as one would expect. 
\begin{remark}
Example \ref{exmp:2} actually demonstrates that there can be arbitrarily many phase transitions. Generally speaking, for a matched multiband filter with $N$ disjoint bands, and a mismatched bandpass filter (with small enough bandwidth), there are $N$ disjoint glassy and ferromagnetic phases.
\end{remark}
\begin{figure}[!t]
\begin{minipage}[b]{1.0\linewidth}
  \centering
  \centerline{\includegraphics[width=14cm,height = 10cm]{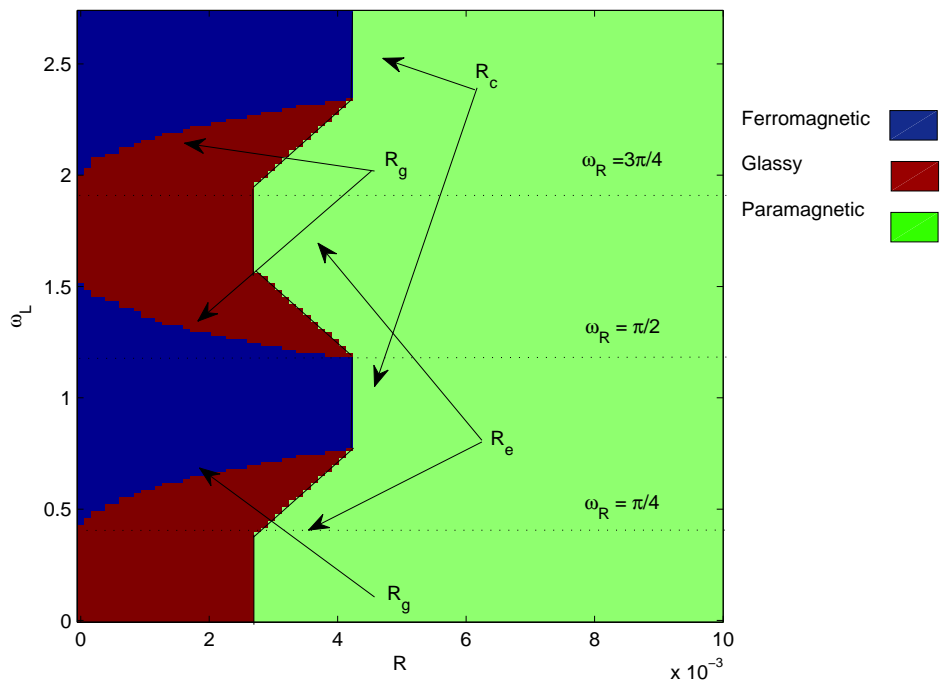}}
%  \vspace{2.0cm}
\end{minipage}
\caption{Example \ref{exmp:2}: Phase diagram in the plane of $R$ vs. $\omega_L$.}
\label{fig:4}
\end{figure}
\begin{figure}[!t]
\begin{minipage}[b]{1.0\linewidth}
  \centering
  \centerline{\includegraphics[width=14cm,height = 10cm]{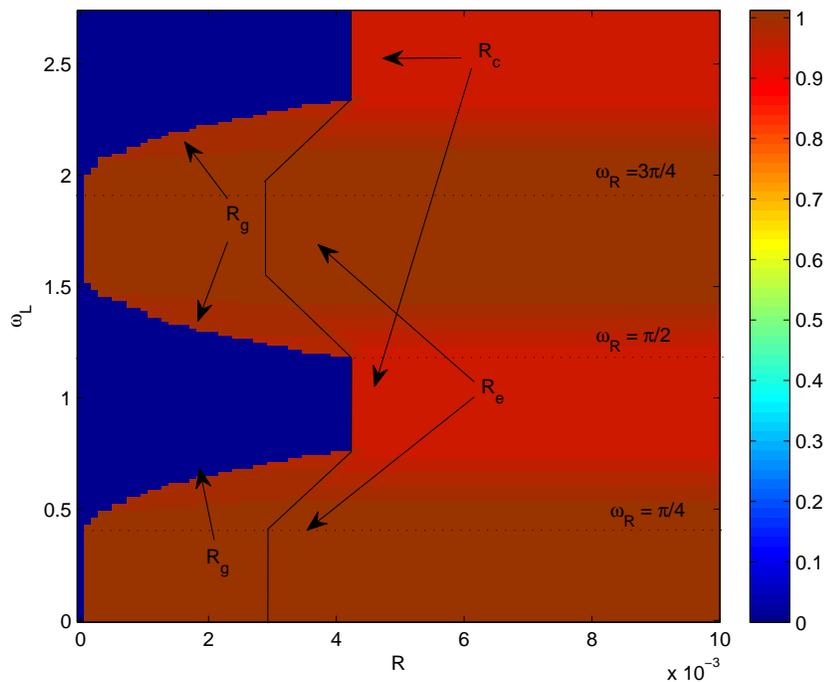}}
%  \vspace{2.0cm}
\end{minipage}
\caption{Example \ref{exmp:2}: Mismatched MSE as a function of $R$ and $\omega_L$.}
\label{fig:4MSE}
\end{figure}

\end{example}

\begin{example}\label{exmp3}
In this example, we consider more realistic filters. Let $\Hmat\p{z}$ denote a Type-II FIR filter given by (in the $\mathcal{Z}$ domain)
\begin{align}
\Hmat\p{z} = \p{1-e^{j0.8\pi}z^{-1}}^2\p{1-e^{-j0.8\pi}z^{-1}}^2,
\end{align}
and let the mismatched filter be given is
\begin{align}
\Hmat'\p{z} = \p{1-z_0z^{-1}}\p{1-z_0^*z^{-1}}\p{1-e^{j0.8\pi}z^{-1}}\p{1-e^{-j0.8\pi}z^{-1}}
\end{align}
where $z_0$ is a mismatched zero. In the numerical calculations, we chose again $\beta=P_x=1$. Fig. \ref{fig:amp} shows the amplitude response of the real and the mismatched filters for various angular frequencies defined as $\phi\define\arg\p{z_0}$. Figures \ref{fig:FIR} and \ref{fig:FIR2} show, respectively, the phase diagram and the MSE as functions of $R$ and $\phi$. In this example, the roles of the differences between the true and mismatched filters, are emphasized. Starting with the obvious, observe that the maximal range of rates for which the ferromagnetic region dominates the partition function occurs at $\phi = 0.8\pi$, as expected. Less trivially, for angular frequencies within the range $\pp{0.2\pi,0.25\pi}$, the ferromagnetic region is negligible. Looking at Fig. \ref{fig:amp}, it can be seen that within this range of angular frequencies, the true and the mismatched filters are ``almost orthogonal" in the $L_2$ sense, namely, their inner product is almost zero. Accordingly, using the methods in Section \ref{sec:proofs}, it can be easily shown that for orthogonal filters we have that $R_g=0$, namely, no ferromagnetic region exists (note that in this example, $R_g$ is never equal to zero since the filters are never orthogonal). Finally, for angular frequencies within the range $\pp{0,0.2\pi}$, the ferromagnetic region returns to play a role. Indeed, Fig. \ref{fig:amp} shows that, within this range, the matched and the mismatched filters ``share" more similarities (in the sense of larger inner product). 

\begin{figure}[!t]
\begin{minipage}[b]{1.0\linewidth}
  \centering
  \centerline{\includegraphics[width=12cm,height = 10cm]{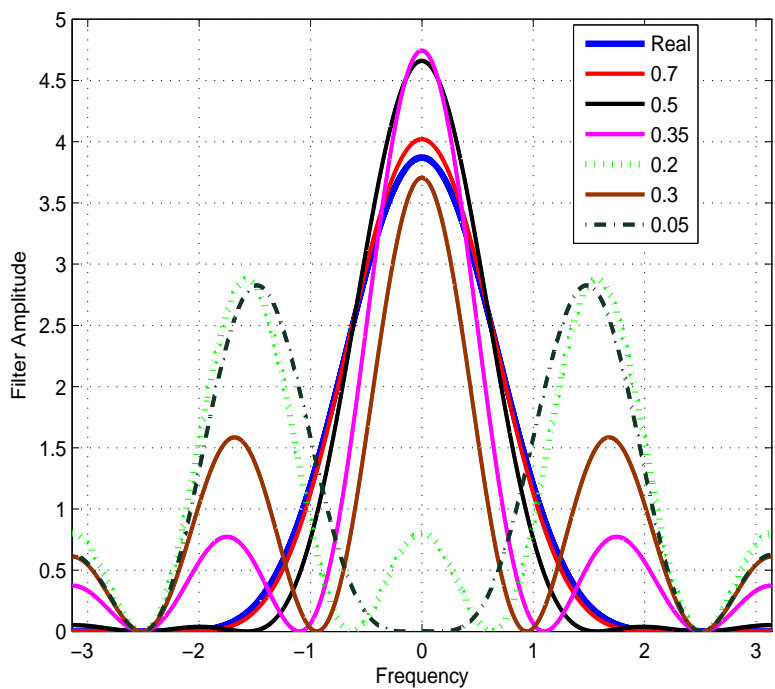}}
%  \vspace{2.0cm}
\end{minipage}
\caption{Example \ref{exmp3}: Amplitude of the real filter and mismatch filters for several phases.}
\label{fig:amp}
\end{figure}

\begin{figure}[!t]
\begin{minipage}[b]{1.0\linewidth}
  \centering
  \centerline{\includegraphics[width=12cm,height = 8cm]{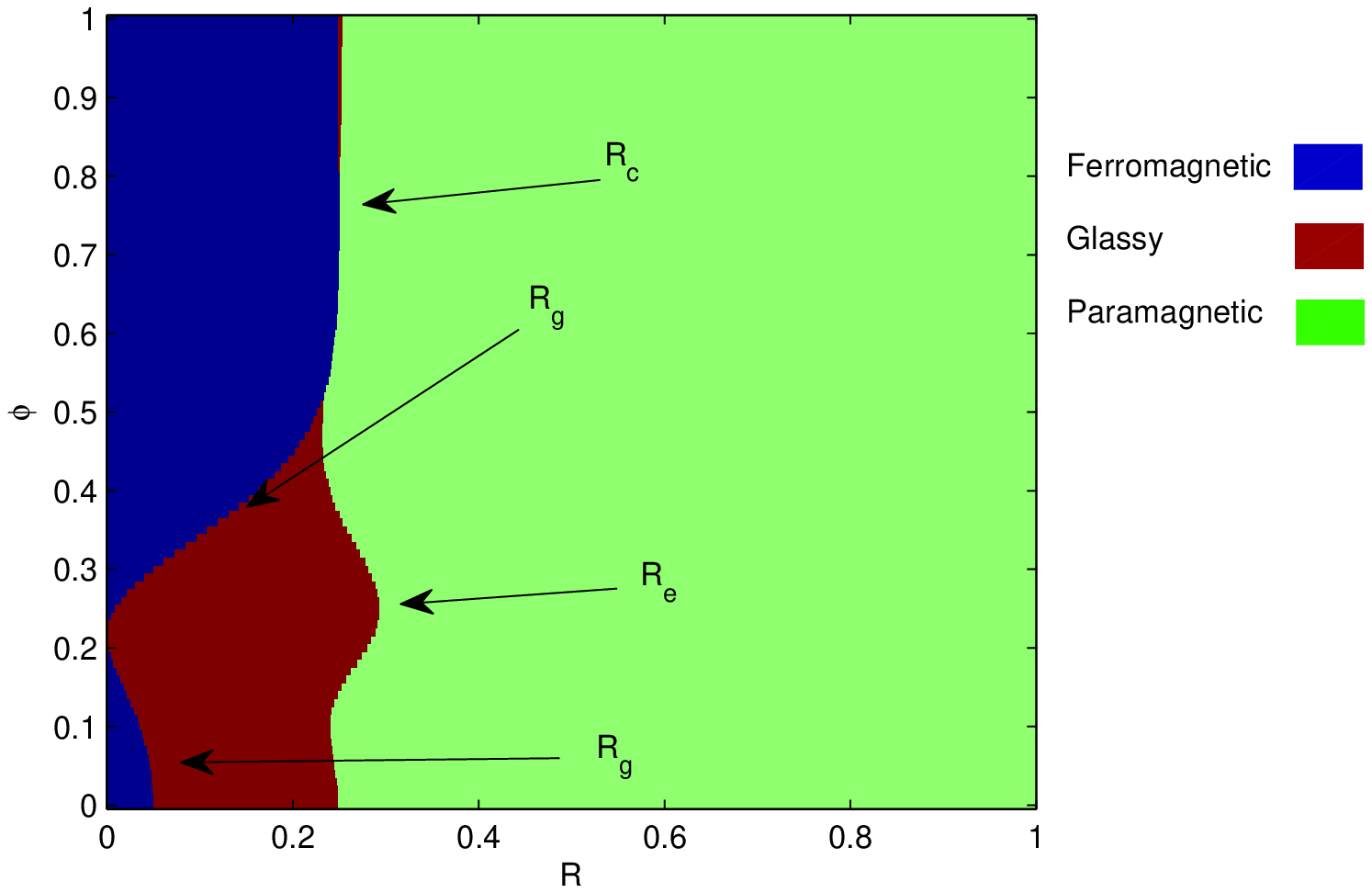}}
%  \vspace{2.0cm}
\end{minipage}
\caption{Example \ref{exmp3}: Phase diagram in the plane of $R$ vs. $\phi$.}
\label{fig:FIR}
\end{figure}

\begin{figure}[!t]
\begin{minipage}[b]{1.0\linewidth}
  \centering
  \centerline{\includegraphics[width=12cm,height = 8cm]{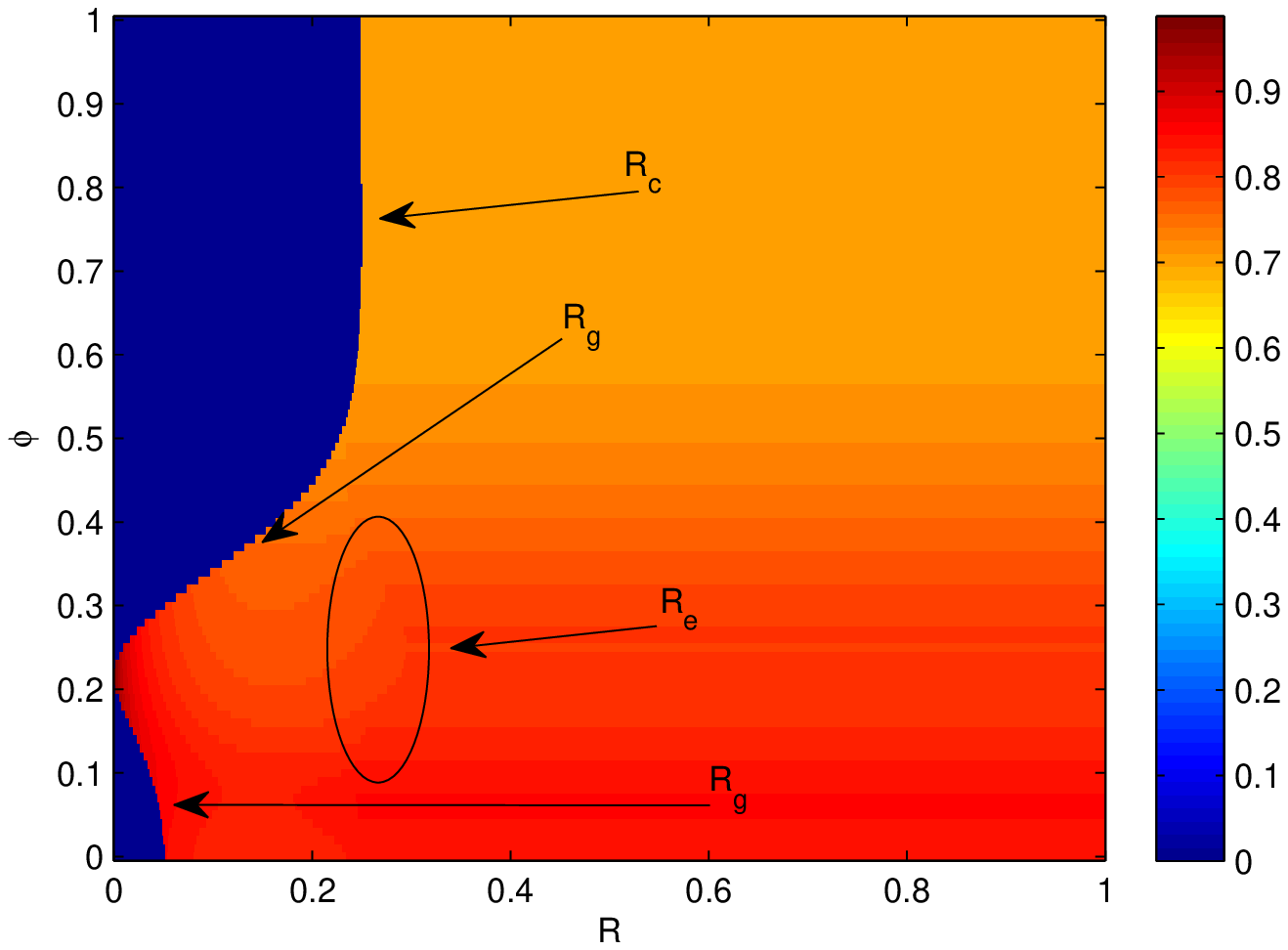}}
%  \vspace{2.0cm}
\end{minipage}
\caption{Example \ref{exmp3}: Mismatched MSE as a function of $R$ and $\phi$.}
\label{fig:FIR2}
\end{figure}
\end{example}

\begin{example}\label{exmp4}
Let $\Hmat\p{z}$ be given by 
\begin{align}
\Hmat\p{z} &= z-2\cos\p{0.8\pi}+z^{-1}\nonumber\\
&=z\cdot\p{1-e^{j0.8\pi}z^{-1}}\p{1-e^{-j0.8\pi}z^{-1}}
\end{align}
and let the mismatched filter be given as
\begin{align}
\Hmat'\p{z} = \Hmat\p{z}z^{-d}
\end{align}
where $d\in\mathbb{Z}$ is a mismatched delay. As before, in the numerical calculations, we chose $\beta=P_x=1$. Figures \ref{fig:9} and \ref{fig:9MSE} show, respectively, the phase diagram and the MSE as functions of $R$ and $d$. First, we see that $R_e$ is constant, approximately equal to $0.29$, which makes sense since $R_e$ is given by
\begin{align}
R_e = \frac{1}{4\pi}\int_0^{2\pi}\ln\pp{P_x\p{\gamma_0+\abs{\Hmat'\p{\omega}}^2\beta}}\mathrm{d}\omega,
\end{align}
and thus independent of the delay (note that according to \eqref{psiM} $\gamma_0$ is also independent of the delay). Next, let us take a look at $R_d$ given in \eqref{R_gDefcont}
\begin{align}
R_d=&\frac{1}{2} + \beta P_x\frac{1}{2\pi}\int_0^{2\pi}\re\p{\Hmat'^*\p{\omega}\Hmat\p{\omega}}\mathrm{d}\omega\nonumber\\
&+\frac{1}{4\pi}\int_0^{2\pi}\abs{\Hmat'\p{\omega}}^2\beta\p{\frac{\abs{\Hmat'\p{\omega}}^2\beta\p{2+\beta\abs{\Hmat\p{\omega}}^2P_x}+\gamma_0}{\p{\abs{\Hmat'\p{\omega}}^2\beta+\gamma_0}^2}-P_x}\nonumber\\
&-\frac{1}{4\pi}\int_0^{2\pi}\frac{\abs{\Hmat'\p{\omega}}^2\beta\p{3+2P_x\beta\abs{\Hmat\p{\omega}}^2}+\gamma_0}{\p{\abs{\Hmat'\p{\omega}}^2\beta+\gamma_0}}.
\end{align}
In contrast to $R_e$, $R_d$ does depend on the delay via the second term, which in the case considered takes the form $\re\p{\Hmat'^*\p{\omega}\Hmat\p{\omega}} = \abs{\Hmat\p{\omega}}^2\cos\p{\omega d}$. Actually, in the settings considered, it is easy to show that $\gamma_0 =1/P_x = 1$, thus obtaining
\begin{align}
R_d &= \frac{1}{2\pi}\int_0^{2\pi}\re\p{\Hmat'^*\p{\omega}\Hmat\p{\omega}}\mathrm{d}\omega - \frac{1}{2\pi}\int_0^{2\pi}\abs{\Hmat'\p{\omega}}^2\mathrm{d}\omega\\
& = \frac{1}{2\pi}\int_0^{2\pi}\abs{\Hmat'\p{\omega}}^2\cos\p{\omega d}\mathrm{d}\omega - \frac{1}{2\pi}\int_0^{2\pi}\abs{\Hmat'\p{\omega}}^2\mathrm{d}\omega\\
& = \frac{1}{2\pi}\int_0^{2\pi}\abs{\Hmat'\p{\omega}}^2\pp{\cos\p{\omega d}-1}\mathrm{d}\omega\leq0.
\end{align}
Therefore, we obtain that $R_d$ is non-positive, and hence for all $\phi$ (except the trivial case of $\phi=0$) there is a glassy phase. This result is consistent with Figures \ref{fig:9} and \ref{fig:9MSE}. More importantly, it can be observed that the MSE vanishes (or equivalently, the ferromagnetic phase dominates the partition function) only in case $d=0$, namely, zero delay. This is a reasonable result, as a delay of one sample (linear phase) is enough to cause a serious degradation in the MSE. Actually, for any fixed rate the error is constant, independently of the delay, as one would expect. Finally, note that the MSE is larger in the glassy region than in the paramagnetic region\footnote{Note that the MSE, in contrast to the MMSE, must not be monotonically increasing as a function of the rate.}. This is also a reasonable result: As the rate increases, and hence more codewords are possible, since the MSE estimator is actually a weighted average (w.r.t. the posterior) over the codewords, the MSE can only decrease (each codeword in the codebook contributes approximately the same estimation error). Accordingly, for small codebooks (low rates) the MSE is larger, since the averaging is performed over ``fewer" codewords.

\begin{figure}[!t]
\begin{minipage}[b]{1.0\linewidth}
  \centering
  \centerline{\includegraphics[width=12cm,height = 9cm]{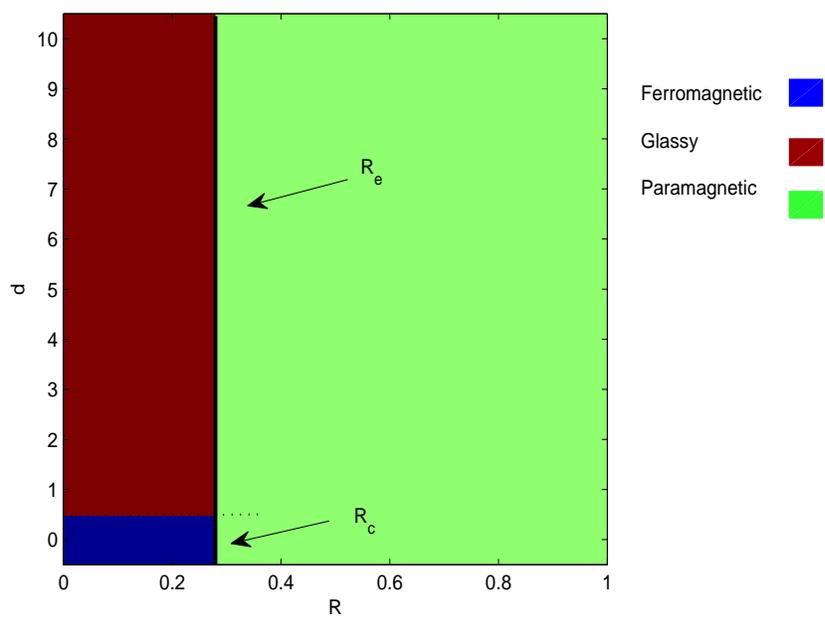}}
%  \vspace{2.0cm}
\end{minipage}
\caption{Example \ref{exmp4}: Phase diagram in the plane of $R$ vs. $d$.}
\label{fig:9}
\end{figure}
\begin{figure}[!t]
\begin{minipage}[b]{1.0\linewidth}
  \centering
  \centerline{\includegraphics[width=12cm,height = 9cm]{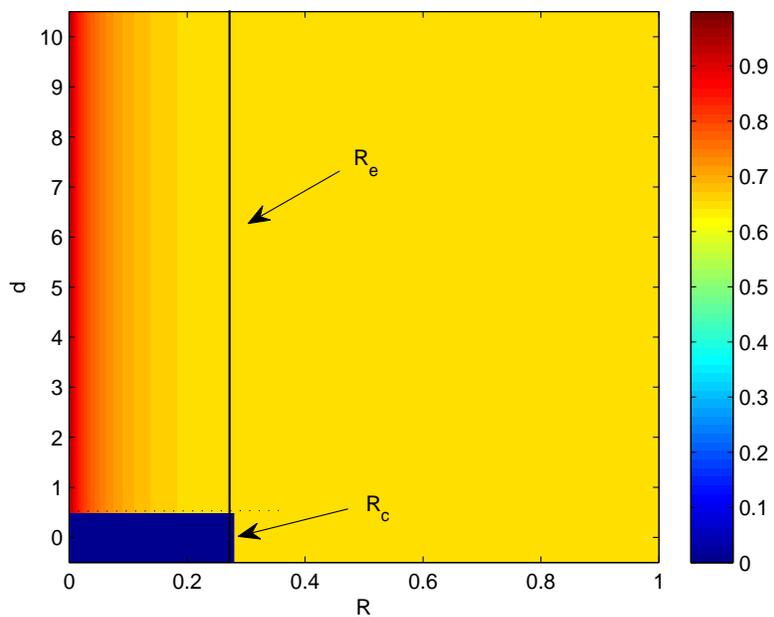}}
%  \vspace{2.0cm}
\end{minipage}
\caption{Example \ref{exmp4}: Mismatched MSE as a function of $R$ and $d$.}
\label{fig:9MSE}
\end{figure}
\end{example}

%%%%%%%%%%%%%%%%%%%%%%%%%%%%%%%%%%%%%%%%%%%%%%%%%%%%%%%%%%%%%%%%%%%%%%%%%%%%%%%%%%%%%%%%%%%%%%%%%%%%%%%%%%%%%%%%%%%%%%%%%%%%%%%%%%%%%%%%%%%%%%%%%%%%%%%%%%%%%%%%%%%%%%%%%%%%%%%
%%%%%%%%%%%%%%%%%%%%%%%%%%%%%%%%%%%%%%%%%%%%%%%%%%%%%%%%%%%%%%%%%%%%%%%%%%%%%%%%%%%%%%%%%%%%%%%%%%%%%%%%%%%%%%%%%%%%%%%%%%%%%%%%%%%%%%%%%%%%%%%%%%%%%%%%%%%%%%%%%%%%%%%%%%%%%%%
%%%%%%%%%%%%%%%%%%%%%%%%%%%%%%%%%%%%%%%%%%%%%%%%%%%%%%%%%%%%%%%%%%%%%%%%%%%%%%%%%%%%%%%%%%%%%%%%%%%%%%%%%%%%%%%%%%%%%%%%%%%%%%%%%%%%%%%%%%%%%%%%%%%%%%%%%%%%%%%%%%%%%%%%%%%%%%%
%%%%%%%%%%%%%%%%%%%%%%%%%%%%%%%%%%%%%%%%%%%%%%%%%%%%%%%%%%%%%%%%%%%%%%%%%%%%%%%%%%%%%%%%%%%%%%%%%%%%%%%%%%%%%%%%%%%%%%%%%%%%%%%%%%%%%%%%%%%%%%%%%%%%%%%%%%%%%%%%%%%%%%%%%%%%%%%
%%%%%%%%%%%%%%%%%%%%%%%%%%%%%%%%%%%%%%%%%%%%%%%%%%%%%%%%%%%%%%%%%%%%%%%%%%%%%%%%%%%%%%%%%%%%%%%%%%%%%%%%%%%%%%%%%%%%%%%%%%%%%%%%%%%%%%%%%%%%%%%%%%%%%%%%%%%%%%%%%%%%%%%%%%%%%%%

\section{Proof Outline and Tools}\label{sec:proofOut}
\subsection{Proof Outline}
In this section, before getting deep into the proof of Theorem \ref{th:2}, we discuss the techniques and the main steps which will be used in Section \ref{sec:proofs}. Generally speaking, the evaluation of the mismatched partition function, $Z'\p{\by,\blam}$, for a typical $\by$, essentially boils down to the evaluation of the exponential order of
\begin{align}
\text{Pr}\ppp{\frac{1}{2}\norm{\by-\bA'\bX_1}^2-\frac{\blam^T\bX_1}{\beta}\approx n\epsilon}
\label{ExpecNFIRpre}
\end{align}
for every value of $\epsilon$ in some range. In case that $\bA' = \bI$ \cite{Neri1,Neri2}, this probability can be calculated fairly easily. Indeed, in this case, the above probability is equivalent to calculating the probability that a randomly chosen vector $\bX$ on the $n$-dimensional hypersphere shell would have an empirical correlation coefficient $\rho$ (induced by the constraint $\norm{\by-\bx}^2/2-\blam^T\bx/\beta\approx n\epsilon$) with a given vector $\by' = \by+\blam/\beta$. Geometrically, this probability is actually the probability that $\bX$ falls within a cone of half angle $\arccos\p{\rho}$ around $\by'$ (for more details, see \cite{Shannon,Wyner}). However, in our case, because of the ``interactions"\footnote{In the considered settings, the posterior, is proportional to $\exp\ppp{-\beta\norm{\by-\bA'\bx}^2/2}$, and after expansion of the norm, the exponent includes an ``external-field term," proportional to $\by^H\bA'\bx$, and a ``pairwise spin-spin interaction term," proportional to $\norm{\Amat'\bx}^2$.} between different components of $\bX$, which are induced by $\bA'$, the methods in the aforementioned papers are not directly applicable. In our case, the purpose is to estimate the probability that a randomly chosen vector $\bX$ on the $n$-dimensional hypersphere shell would fall within the intersection of this hypersphere and the $n$-dimensional hyperellipsoid (which is induced by the event in \eqref{ExpecNFIRpre}). All our attempts to approach this calculation using the ``geometric" route have failed. Thus, we will use a different route. 

The main idea in our approach is, to ``eliminate" the interactions between the different components of $\bX$, by passing to the frequency domain. Since $\bA'$ is a Toeplitz matrix, according to Szeg\"{o}'s theorem \cite{Szego,Widom,Gray,Bottcher}, it is asymptotically diagonalized by the discrete Fourier transform (DFT) matrix (if $\bA'$ is a circulant matrix then the DFT matrix exactly diagonalizes it). Thus, multiplying both sides of \eqref{MatModell} by the DFT matrix, $\bF^H = \ppp{e^{-j2\pi ml/n}/\sqrt{n}}_{m,l=0}^{n-1}$, we ``asymptotically"\footnote{Rigorously, in the proof, we first assume that $\bA'$ is a circulant matrix, and thus \eqref{MatModell2Pre} is exact for any $n$. Then, when taking the limit $n\to\infty$, using Szeg\"{o}'s theorem, this assumption will be dropped. Finally, note that the assumption of the square summability of the generating sequence $\ppp{h_k}$ in the theorems presented earlier, is made in order to use Szeg\"{o}'s theorem.} have that
\begin{align}
\tilde{\bY} = \bSig\tilde{\bX} + \tilde{\bN}
\label{MatModell2Pre}
\end{align}
where $\bSig \define \diag\p{\sigma_1,\ldots,\sigma_n}$, $\tilde{\bX} \define \bF^H\bX$, $\tilde{\bY} \define \bF^H\bY$ and $\tilde{\bN} \define \bF^H\bN$. Accordingly, we evaluate \eqref{ExpecNFIRpre}, using
\begin{align}
\text{Pr}\ppp{\frac{1}{2}\norm{\tilde{\by}-\bSig\tilde{\bX}_1}^2-\frac{\tilde{\blam}^T\tilde{\bX}_1}{\beta}\approx n\epsilon}
\label{ProbTTPre}
\end{align}
where $\tilde{\blam} = \bF^T\blam$. Now, in order to evaluate \eqref{ProbTTPre}, it is desirable to estimate the volume\footnote{Recall that the volume of a set $\calA\subset\mathbb{R}^n$ is defined as $\text{Vol}\ppp{\mathcal{A}}\define\int_{\mathcal{A}}\mathrm{d}\bx$.} of the following set: For a given pair of vectors $\p{\tilde{\bx},\tilde{\by}}$ and $\delta>0$, we define the conditional $\delta$-type of $\tilde{\bx}$ given $\tilde{\by}$ as
\begin{align}
\calT_\delta\p{\tilde{\bx}\mid\tilde{\by}} \define \ppp{\tilde{\bx}\in\mathbb{R}^n:\;\abs{\norm{\tilde{\bx}}^2-nP_x}\leq\delta,\;\abs{\frac{\norm{\tilde{\by}-\bSig\tilde{\bx}}^2}{2}-\frac{\tilde{\blam}^T\tilde{\bx}}{\beta}-n\epsilon}\leq\delta}.
\label{TypeSetPre}
\end{align}
This set is regarded as a conditional type of (wrong) codewords $\tilde{\bx}$ given $\tilde{\by}$ as it contains all vectors which, within $\delta$, have the same energy related to the partition function \eqref{PartFuncmod}. After calculating the volume of \eqref{TypeSetPre}, the probability in \eqref{ProbTTPre} can then be easily estimated. However, as was previously mentioned, calculating the volume of such a set is a tedious task when approaching it directly. We will use instead the following relaxation. We start with partitioning the components of $\tilde{\bx}$ into $k$ \emph{bins}, each of dimension $n_b$, such that $k = n/n_b$, and we approximate the eigenvalues, which are the diagonal elements of $\bSig$, to be piecewise constant over these bins. This partition literally means that we transform the original model in \eqref{MatModell2Pre} into $k$ subchannels, each having the form
\begin{align}
Y_{i,r} &= \sigma_rX_i + N_i,\ \ i = \p{r-1}n_b+1,\ldots,r\cdot n_b,
\end{align}
for $r = 1,\ldots,k$. With this partitioning in mind, at the final stage of the analysis (after taking the limit $n\to\infty$), we take the limit $k\to\infty$. This partitioning will enable to calculate the desired volume. Then, using large deviations considerations, the mismatched partition function will be obtained. Finally, in order to derive the MSE, we will use the tools of \cite{Neri1}, which are briefly presented in the following subsection.

\subsection{Optimum Estimation Relations - Background and Summary}\label{sec:back}
\subsubsection{Matched Case}
Let $\bX=\p{X_1,\ldots,X_n}$ and $\bY=\p{Y_1,\ldots,Y_m}$ be two random vectors, jointly distributed according to a given probability function $P\p{\bx,\by}$. The conditional mean estimator of $\bX$ based on $\bY$, i.e., $\hat{\bX} = \bE\ppp{\bX\mid\bY}$ is well known to minimize the MSE $\bE\p{X_i-\hat{X}_i}^2$ for all $i=1,\ldots,n$. Accordingly, the MMSE in estimating $X_i$ equals to $\bE\ppp{\p{X_i-\bE\ppp{X_i\mid\bY}}^2}$, i.e., the expected conditional variance of $X_i$ given $\bY$. More generally, the MMSE error covariance matrix is an $n\times n$ matrix whose $\p{i,j}$-th element is given by $\bE\ppp{\p{X_i-\bE\ppp{X_i\mid\bY}}\p{X_j-\bE\ppp{X_j\mid\bY}}}$. This matrix can be represented as the expectation (w.r.t. $\bY$) of the conditional covariance matrix of $\bX$ given $\bY$, henceforth denoted by $\text{cov}\p{\bX\mid\bY}$. In particular, using the orthogonality principle, the MMSE error covariance matrix is given by
\begin{align}
\bE\ppp{\text{cov}\p{\bX\mid\bY}} = \bE\ppp{\bX\bX^T} - \bE\ppp{\bE\ppp{\bX\mid\bY}\bE\ppp{\bX^T\mid\bY}}.
\end{align} 
Based on Definition \ref{def:1}, the following relations readily follow
\begin{align}
\bE\ppp{\bX\mid\bY=\by}&=\nabla_0g\p{\blam} \ln Z\p{\by,\blam}\\
\bE\ppp{\text{cov}\p{\bX\mid\bY}} &= \bE\ppp{\nabla_0^2\ln Z\p{\bY,\blam}}
\end{align}
where for a generic function $g$, we use $\nabla_0g\p{\blam}$ and $\nabla_0^2g\p{\blam}$ to designate $\left.\nabla_{\blamt}g\p{\blam}\right|_{\blamt=0}$ and $\left.\nabla_{\blamt}^2g\p{\blam}\right|_{\blamt=0}$, respectively, and $\nabla_{\blamt}$ and $\nabla^2_{\blamt}$ denote the gradient and Hessian operators w.r.t. $\blam$, respectively. 
Finally, it is easy to verify that the following relation holds
\begin{align}
\bE\ppp{\text{cov}\p{\bX\mid\bY}} = \bE\ppp{\bX\bX^T}-\bE\ppp{\pp{\nabla_0\ln Z\p{\bY,\blam}}\pp{\nabla_0\ln Z\p{\bY,\blam}}^T},
\end{align}
and upon taking the trace of the above equation one obtains
\begin{align}
\text{mmse}\p{\bX\mid\bY} &\define \sum_{i=1}^n \bE\ppp{\p{X_i-\bE\ppp{X_i\mid\bY}}^2}\nonumber\\
&= \sum_{i=1}^n\pp{\bE\ppp{X_i^2}-\bE\ppp{\left.\pp{\frac{\partial\ln Z\p{\bY,\blam}}{\partial\lambda_i}}^2\right|_{\blamt=0}}}.
\end{align}
Further relations between information measures and estimation quantities can be found in \cite{Neri1,Neri2}. 

\subsubsection{Mismatched Case}
Consider a mismatched estimator which is the conditional mean of $\bX$ given $\bY$, based on an incorrect joint distribution $P'\p{\bx,\by}$, whereas the true joint distribution continues to be $P\p{\bx,\by}$. Then, the following relation holds
\begin{align}
\bE\ppp{\text{cov}'\p{\bX\mid\bY}} &\define \bE\ppp{\p{\bX-\bE'\ppp{\bX\mid\bY}}\p{\bX-\bE'\ppp{\bX\mid\bY}}^T}\nonumber\\
&=\bE\ppp{\bX\bX^T} - \bE_P\ppp{\bE\ppp{\bX\mid\bY}\bE'\ppp{\bX^T\mid\bY}}\nonumber\\
&\ \ \ -\bE\ppp{\bE'\ppp{\bX\mid\bY}\bE\ppp{\bX^T\mid\bY}}\nonumber\\
&\ \ \ +\bE\ppp{\bE'\ppp{\bX\mid\bY}\bE'\ppp{\bX^T\mid\bY}},
\label{generalMiscov}
\end{align}
where $\text{cov}'\p{\bX\mid\bY} \define \p{\bX-\bE'\ppp{\bX\mid\bY}}\p{\bX-\bE'\ppp{\bX\mid\bY}}^T$. Upon taking the trace of \eqref{generalMiscov}, one obtains
\begin{align}
\text{mse}\p{\bX\mid\bY} &\define \sum_{i=1}^n \bE\ppp{\p{X_i-\bE'\ppp{X_i\mid\bY}}^2}\nonumber\\
&= \sum_{i=1}^n\left[\vphantom{\bE\ppp{\left.\pp{\frac{\partial\ln Z'\p{\bY,\blam}}{\partial\lambda_i}}^2\right|_{\blamt=0}}}\bE\ppp{X_i^2}-2\bE\ppp{\left.\frac{\partial\ln Z\p{\bY,\blam}}{\partial\lambda_i}\right|_{\blamt=0}\cdot\left.\frac{\partial\ln Z'\p{\bY,\blam}}{\partial\lambda_i}\right|_{\blamt=0}}\right.\nonumber\\
&\left.\ \ \ \ \ \ \ \ \ \ +\bE\ppp{\left.\pp{\frac{\partial\ln Z'\p{\bY,\blam}}{\partial\lambda_i}}^2\right|_{\blamt=0}}\right].
\end{align}

\section{Proof of Theorem \ref{th:2}}\label{sec:proofs}
For a given $\by$, the mismatched partition function is given by\footnote{Note that there should be a normalization factor of $\p{2\pi/\beta}^{-n/2}$ in \eqref{partfirst}. Nonetheless, since this constant is independent of $\blam$, it has no effect on the MSE (which is obtained by the gradient of $\ln Z'\p{\by,\blam}$ w.r.t. $\blam$). Hence, for simplicity of notation, it is omitted.}
\begin{align}
\label{partfirst}
Z'\p{\by,\blam} &= \sum_{\bx\in\calC}e^{-nR}\exp\pp{-\beta\norm{\by-\bA'\bx}^2/2+\blam^T\bx}\\
& = e^{-nR}\exp\pp{-\beta\norm{\by-\bA'\bx_0}^2/2+\blam^T\bx_0}\\
&\ \ \ \ \  +\sum_{\bx\in\calC\setminus\ppp{\bx_0}}e^{-nR}\exp\pp{-\beta\norm{\by-\bA'\bx}^2/2+\blam^T\bx}\\
&\define Z'_{c}\p{\by,\blam} + Z'_{e}\p{\by,\blam}
\end{align}
where without loss of generality, the transmitted codeword is assumed to be $\bx_0$, and $Z'_{c}\p{\by,\blam}$ and $Z'_{e}\p{\by,\blam}$ are the partial partition functions induced by the correct codeword and the wrong codewords, respectively. By the law of large numbers (LLN), $\norm{\by-\bA'\bx_0}^2\approx\norm{\p{\bA-\bA'}\bx_0}^2+n/\beta$, and therefore, with high probability
\begin{align}
Z'_{c}\p{\by,\blam}&\exe  e^{-nR}\exp\ppp{-\frac{\beta}{2}\pp{\norm{\p{\bA-\bA'}\bx_0}^2+\frac{n}{\beta}}+\blam^T\bx_0}\\
& = \exp\ppp{-n\p{R+\frac{1}{2}+\frac{\beta\norm{\p{\bA-\bA'}\bx_0}^2}{2n}}+\blam^T\bx_0}.
\label{ZQc}
\end{align}
More precisely, for any $\epsilon>0$,
\begin{align}
&\exp\ppp{-n\p{R+\frac{1}{2}+\frac{\beta\norm{\p{\bA-\bA'}\bx_0}^2}{2n}+\epsilon}+\blam^T\bx_0}\leq Z'_{c}\p{\by,\blam}\nonumber\\
&\ \ \ \ \leq\exp\ppp{-n\p{R+\frac{1}{2}+\frac{\beta\norm{\p{\bA-\bA'}\bx_0}^2}{2n}-\epsilon}+\blam^T\bx_0}
\end{align}
with probability tending to one as $n\to\infty$. As for $Z'_{e}\p{\by,\blam}$, we have
\begin{align}
Z'_{e}\p{\by,\blam} = e^{-nR}\int_{\mathbb{R}}\mathcal{N}\p{\epsilon}e^{-n\beta\epsilon}\mathrm{d}\epsilon
\label{errorPart}
\end{align}
where
\begin{align}
\mathcal{N}\p{\epsilon} \define \sum_{i=1}^{M-1}\Ind{\ppp{\bx_i:\;\frac{\norm{\by-\bA'\bx_i}^2}{2}-\frac{\blam^T\bx_i}{\beta}\approx n\epsilon}},
\end{align}
to wit, $\mathcal{N}\p{\epsilon}$ is the number of codewords $\ppp{\bx_i}$ in $\calC\setminus\ppp{\bx_0}$ for which $\norm{\by-\bA'\bx_i}^2/2-\blam^T\bx_i/\beta\approx n\epsilon$, namely, between $n\epsilon$ and $n\p{\epsilon+\mathrm{d}\epsilon}$. We proceed in two steps: First, the typical exponential order of $\mathcal{N}\p{\epsilon}$ is computed, and then \eqref{errorPart} is calculated.
\newline\itshape Step 1\normalfont: Given $\by$, $\mathcal{N}\p{\epsilon}$ is a sum of $\p{M-1}$ i.i.d. Bernoulli random variables and therefore, its expected value is given by
\begin{align}
\bE\ppp{\mathcal{N}\p{\epsilon}} &=\sum_{i=1}^{M-1}\text{Pr}\ppp{\frac{\norm{\by-\bA'\bX_i}^2}{2}-\frac{\blam^T\bX_i}{\beta}\approx n\epsilon}\\
&= \p{e^{nR}-1}\cdot\text{Pr}\ppp{\frac{\norm{\by-\bA'\bX_1}^2}{2}-\frac{\blam^T\bX_1}{\beta}\approx n\epsilon}.
\label{ExpecNFIR1}
\end{align}
Assuming that $\bA'$ is a circulant matrix\footnote{Recall that this assumption is only an intermediate step in the analysis, and will be dropped later on. Alternatively, instead of this assumption, one could use the spectral decomposition theorem, to find an orthonormal basis which diagonalizes the matrix $\bA'$, and project \eqref{MatModell} on this basis, to obtain the form of \eqref{MatModell2}.}, it is known that the discrete Fourier transform (DFT) matrix diagonalizes it \cite{Szego,Widom,Gray,Bottcher}, and thus multiplying both sides of equation \eqref{MatModell} by the DFT matrix, $\bF^H$, one obtains
\begin{align}
\tilde{\bY} = \bSig'\tilde{\bX} + \tilde{\bN}
\label{MatModell2}
\end{align}
where $\bSig' \define \diag\p{\sigma'_1,\ldots,\sigma'_n}$, $\tilde{\bX} \define \bF^H\bX$, $\tilde{\bY} \define \bF^H\bY$ and $\tilde{\bN} \define \bF^H\bN$. Since a unitary operator is applied on $\bX$, then $\tilde{\bX}$ is still uniformly drawn on the $n$-hyperesphere with radius $\sqrt{nP_x}$ (as in the original setting). Similarly, $\tilde{\bN}$ has the same statistics as before, namely, its components are i.i.d. complex Gaussian random variables with zero mean and variance $1/\beta$. For simplicity of notation, in the following, the ``tilde" sign over the various variables will be omitted, keeping the original notation. Therefore, instead of evaluating \eqref{ExpecNFIR1}, the exponential order of
\begin{align}
\text{Pr}\ppp{\frac{\norm{\by-\bSig'\bX_1}^2}{2}-\frac{\blam^T\bX_1}{\beta}\approx n\epsilon},
\label{ProbTT}
\end{align}
will be evaluated, where $\bF^T\blam\mapsto\blam$\footnote{Note that $\bF^T\blam$ may be a complex quantity (in contrast to $\blam$). This fact will be taken into account later on.}. For a given pair of vectors $\p{\bx,\by}$ and $\delta>0$, define the conditional $\delta$-type of $\bx$ given $\by$ as
\begin{align}
\calT_\delta\p{\bx\mid\by} \define \ppp{\bx\in\mathbb{R}^n:\;\abs{\norm{\bx}^2-nP_x}\leq\delta,\;\abs{\frac{\norm{\by-\bSig'\bx}^2}{2}-\frac{\blam^T\bx}{\beta}-n\epsilon}\leq\delta}.
\label{TypeSet1}
\end{align}
The following lemma is proved in Appendix \ref{app:2}.
\begin{lemma}\label{lem:3}
Let $k$ and $n_b$ be natural numbers such that $k=n/n_b$\footnote{Without loss of generality, it is assumed that $n_b$ (\emph{bin} length) is a divisor of $n$, and that the $k$ various bins have equal sizes.}. Define the sets $\calG_{1,\delta} \define \ppp{\delta\cdot i:\;i=0,1,\ldots,\left\lceil kP_x/\delta\right\rceil}$ and $\calG_{2,\delta} \define \ppp{\delta\cdot i:\;i=-\left\lceil k/\delta\right\rceil,\ldots,-1,0,1,\ldots,\left\lceil k/\delta\right\rceil}$. Also, let
\begin{align}
\hat{\calT}_\delta\p{\bx\mid\by} \define \bigcup\limits_{\substack{\boldsymbol{\mathcal{P}}^{\delta}\cap\boldsymbol{\mathcal{R}}^{\delta}_{\boldsymbol{\mathcal{P}}}}}\ \bigtimes_{m=1}^k\mathscr{B}_{m}^{\delta}\p{P_m,\rho_m}
\end{align}
where $\bigtimes$ designates a Cartesian product, and
\begin{align}
\mathscr{B}_{m}^\delta\p{P_m,\rho_m} &\define \left\{\vphantom{\abs{\re\ppp{\sum_{i\in\mathcal{I}_m}\sigma'_i\bar{y}_i{x}_i}-n_{b}\rho_m{\sqrt{\tilde{P}_{y,m}\tilde{P}_{\sigma,m}}}}}\bx \in\mathbb{R}^{n_b}:\;\abs{\norm{\bx_{\p{m-1}n_b+1}^{mn_b}}^2-n_{b}P_m}\leq\delta,\nonumber\right.\\
&\ \ \ \ \ \ \ \ \ \ \ \ \ \ \; \ \ \ \left. \;\abs{\re\ppp{\sum_{i\in\mathcal{I}_m}\sigma'_i\bar{y}_i{x}_i}-n_{b}\rho_m{\sqrt{\tilde{P}_{y,m}\tilde{P}_{\sigma,m}}}}\leq\delta\right\}
\end{align}
where $\mathcal{I}_m \define \pp{\p{m-1}n_b+1,mn_b}$, $\bar{y}_i\define y_i^*+\frac{\lambda_i}{\beta\sigma'_i}$, $\tilde{P}_{y,m} \define \frac{1}{n_b}\sum_{i\in\mathcal{I}_m}\abs{\bar{y}_i}^2$, $\tilde{P}_{\sigma,m}\define \frac{1}{n_b}\sum_{i\in\mathcal{I}_m}\abs{\sigma_i'x_i}^2$, and\footnote{The purpose of the subscript symbol in $\boldsymbol{\mathcal{R}}^\delta_{\boldsymbol{\mathcal{P}}}$ is to emphasize the dependence of it on $\boldsymbol{\mathcal{P}}$. More precisely, these sets should be understood as joint-power-correlation allocations, which are ``living" in the intersection $\boldsymbol{\mathcal{P}}^\delta\cap\boldsymbol{\mathcal{R}}^\delta_{\boldsymbol{\mathcal{P}}}$. Accordingly, $P_m$ and $\rho_m$ are the power and correlation constraints within the $m$th bin, respectively.}
\begin{align}
\label{powerSetg}
\boldsymbol{\mathcal{P}}^{\delta}&\define\ppp{\bP\in\calG_{1,\delta}^k:\;\abs{\frac{1}{k}\sum_{i=1}^kP_i-P_x} \leq\delta}\\
\boldsymbol{\mathcal{R}}^{\delta}_{\boldsymbol{\mathcal{P}}}&\define\ppp{\boldsymbol{\rho}\in\calG_{2,\delta}^k:\;\abs{\frac{1}{k}\sum_{i=1}^k\rho_i\sqrt{\tilde{P}_{y,i}\tilde{P}_{\sigma,i}}- \tilde{\rho}}\leq\delta}
\end{align}
where $\calG_{1,\delta}^k$ and $\calG_{2,\delta}^k$ are the $k$th Cartesian power of $\calG_{1,\delta}$ and $\calG_{2,\delta}$, respectively, and
\begin{align}
\tilde{\rho} \define \frac{\frac{1}{n}\sum_{i=1}^n\abs{\sigma'_ix_i}^2 + P_y-2\epsilon}{2}
\end{align}
where $P_y \define \frac{1}{n}\sum_{i=1}^n\abs{y_i}^2$. Then,
\begin{align}
\hat{\calT}_{\delta/k}\p{\bx\mid\by}\subseteq\calT_\delta\p{\bx\mid\by}\subseteq\hat{\calT}_\delta\p{\bx\mid\by}.
\end{align}
\end{lemma}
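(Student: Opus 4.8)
The statement is a ``sandwich'' between a set defined by two \emph{aggregate} constraints on $\bx$ and a set assembled from \emph{per-bin} constraints, so both inclusions will be proved by explicit membership arguments. The ingredient that links the two is an algebraic identity that makes the ``energy'' functional additive over the $k$ bins. Expanding the squared norm and using $\bar{y}_i = y_i^*+\lambda_i/\p{\beta\sigma'_i}$, one has
\begin{align}
\frac{\norm{\by-\bSig'\bx}^2}{2}-\frac{\blam^T\bx}{\beta}
= \frac{\norm{\by}^2+\norm{\bSig'\bx}^2}{2}-\re\ppp{\sum_{i=1}^n\sigma'_i\bar{y}_i x_i}.
\end{align}
Splitting $\sum_{i=1}^n=\sum_{m=1}^k\sum_{i\in\mathcal{I}_m}$ and recalling the definition of $\tilde\rho$ (which is precisely $\frac1n$ times the right-hand side evaluated with $n\epsilon$ subtracted), the second defining constraint of $\calT_\delta\p{\bx\mid\by}$ becomes equivalent to $\abs{\sum_{m=1}^k\re\ppp{\sum_{i\in\mathcal{I}_m}\sigma'_i\bar{y}_i x_i}-n\tilde\rho}\leq\delta$. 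Thus both constraints cutting out $\calT_\delta\p{\bx\mid\by}$ — the total-power constraint and this ``correlation'' constraint — read ``a sum over the $k$ bins is within $\delta$ of a prescribed target,'' while $\hat\calT_\delta\p{\bx\mid\by}$ is built from the matching per-bin constraints together with aggregate constraints on the per-bin parameters $\p{\bP,\brho}$.

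For $\calT_\delta\p{\bx\mid\by}\subseteq\hat\calT_\delta\p{\bx\mid\by}$, I would take $\bx$ in the left-hand set and read off, for each bin $m$, its actual normalized power $\hat P_m\define\frac1{n_b}\norm{\bx_{\p{m-1}n_b+1}^{mn_b}}^2$ and its actual normalized correlation $\hat\rho_m\define\frac{1}{n_b\sqrt{\tilde P_{y,m}\tilde P_{\sigma,m}}}\re\ppp{\sum_{i\in\mathcal{I}_m}\sigma'_i\bar{y}_i x_i}$ (which satisfies $\abs{\hat\rho_m}\leq1$ by Cauchy--Schwarz, so it is covered by the grid $\calG_{2,\delta}$). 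Quantizing $\hat P_m,\hat\rho_m$ to the nearest grid points $P_m\in\calG_{1,\delta}$, $\rho_m\in\calG_{2,\delta}$, one then checks (i) that $\bx$ lies in $\mathscr{B}_m^\delta\p{P_m,\rho_m}$ for every $m$, directly from the quantization-error bounds, and (ii) that the allocation $\p{\bP,\brho}$ lies in $\boldsymbol{\mathcal{P}}^\delta\cap\boldsymbol{\mathcal{R}}^\delta_{\boldsymbol{\mathcal{P}}}$, obtained by averaging the per-bin relations over $m$: $\frac1k\sum_m P_m$ is close to $\frac1n\norm{\bx}^2$, hence to $P_x$, and $\frac1k\sum_m\rho_m\sqrt{\tilde P_{y,m}\tilde P_{\sigma,m}}$ is close to $\frac1n\re\ppp{\sum_i\sigma'_i\bar{y}_i x_i}$, hence to $\tilde\rho$, with deviations governed by the grid spacing.

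For $\hat\calT_{\delta/k}\p{\bx\mid\by}\subseteq\calT_\delta\p{\bx\mid\by}$, I would take $\bx\in\bigtimes_{m=1}^k\mathscr{B}_m^{\delta/k}\p{P_m,\rho_m}$ with $\p{\bP,\brho}\in\boldsymbol{\mathcal{P}}^{\delta/k}\cap\boldsymbol{\mathcal{R}}^{\delta/k}_{\boldsymbol{\mathcal{P}}}$ and simply add up constraints: summing the $k$ per-bin power constraints, each with slack $\delta/k$, and combining with $\abs{\frac1k\sum_m P_m-P_x}\leq\delta/k$ via the triangle inequality gives $\abs{\norm{\bx}^2-nP_x}\leq\delta$; summing the $k$ per-bin correlation constraints, combining with the $\boldsymbol{\mathcal{R}}^{\delta/k}_{\boldsymbol{\mathcal{P}}}$ constraint, and invoking the displayed identity gives $\abs{\frac{\norm{\by-\bSig'\bx}^2}{2}-\frac{\blam^T\bx}{\beta}-n\epsilon}\leq\delta$. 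The factor $k$ in the argument $\delta/k$ is exactly what absorbs the accumulation of $k$ per-bin slacks, which is why the sandwich is asymmetric.

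The main obstacle is not conceptual but bookkeeping: one must couple the grid spacings, the per-bin tolerances in $\mathscr{B}_m^\delta$, and the aggregate tolerances in $\boldsymbol{\mathcal{P}}^\delta$ and $\boldsymbol{\mathcal{R}}^\delta_{\boldsymbol{\mathcal{P}}}$ so that both the roundings (in the first inclusion) and the $k$-fold summations (in the second) close up, which is delicate because the number of bins $k$ will itself be sent to infinity later. A further subtlety is that the target $\tilde\rho$ in the correlation constraint depends on $\bx$ through $\frac1n\norm{\bSig'\bx}^2$, so it drifts as $\bx$ ranges over a product of bin sets and must be controlled simultaneously with the power allocation; and since the DFT has already been applied, the vectors are complex, so the real-part operator has to be carried through every expansion and the imaginary parts verified to contribute nothing.
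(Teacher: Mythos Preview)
Your proposal is correct and follows essentially the same approach as the paper's proof: the algebraic rewriting of the energy constraint into a bin-additive correlation form, the first inclusion via quantizing the actual per-bin powers and correlations to the grids $\calG_{1,\delta},\calG_{2,\delta}$, and the second inclusion via summing the $k$ per-bin slacks and absorbing them with the triangle inequality. The subtleties you flag --- the Cauchy--Schwarz bound ensuring $\abs{\hat\rho_m}\leq 1$ so the grid $\calG_{2,\delta}$ suffices, and the $\bx$-dependence of $\tilde\rho$ through $\frac{1}{n}\norm{\bSig'\bx}^2$ --- are points the paper's argument passes over silently, so your version is, if anything, more careful.
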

Next, the eigenvalues, $\ppp{\sigma'_i}_i$, are approximated to be piecewise constant over the various bins. At the final stage of the analysis (after taking the limit $n\to\infty$), we will take the limit $k\to\infty$ so that this approximation becomes superfluous. Accordingly, under this approximation, $\tilde{P}_{\sigma,m} = \abs{\sigma_m'}^2P_m$, and (with abuse of notation)
\begin{align}
\hat{\calT}^k_\delta\p{\bx\mid\by} = \bigcup\limits_{\substack{\boldsymbol{\mathcal{P}}^\delta\cap\boldsymbol{\mathcal{R}}^\delta_{\boldsymbol{\mathcal{P}}}}}\ \bigtimes_{m=1}^k\mathscr{B}_{m}^{\delta}\p{P_m,\rho_m}
\end{align}
where now
\begin{align}
\mathscr{B}_{m}^\delta\p{P_m,\rho_m} &\define \left\{\vphantom{\abs{\re\ppp{\sigma'_m\sum_{i\in\mathcal{I}_m}\bar{y}_i{x}_i}-n_{b}\rho_m{\sqrt{\tilde{P}_{y,m}\abs{\sigma_m'}^2P_m}}}}\bx \in\mathbb{R}^{n_b}:\;\abs{\norm{\bx_{\p{m-1}n_b+1}^{mn_b}}^2-n_{b}P_m}\leq\delta, \right.\nonumber\\
&\left.\ \ \ \ \ \ \ \ \ \ \ \ \ \ \ \ \ \ \;\abs{\re\ppp{\sigma'_m\sum_{i\in\mathcal{I}_m}\bar{y}_i{x}_i}-n_{b}\rho_m{\sqrt{\tilde{P}_{y,m}\abs{\sigma_m'}^2P_m}}}\leq\delta\right\}
\label{BsetB}
\end{align}
and
\begin{align}
\boldsymbol{\mathcal{R}}^\delta_{\boldsymbol{\mathcal{P}}}\define\ppp{\boldsymbol{\rho}\in\calG_{2,\delta}^k:\;\abs{\frac{1}{k}\sum_{i=1}^k\abs{\sigma'_i}\rho_i\sqrt{\tilde{P}_{y,i}P_i}-\tilde{\rho}}\leq\delta},
\end{align}
where
\begin{align}
\tilde{\rho} \define \frac{\frac{1}{k}\sum_{i=1}^k\abs{\sigma'_i}^2P_i + P_y-2\epsilon}{2}.
\end{align}

In the following, the volume of $\mathcal{T}_\delta\p{\bx\mid\by}$ is evaluated. On the one hand, using Lemma \ref{lem:3}, one obtains that
\begin{align}
\text{Vol}\ppp{\mathcal{T}_\delta\p{\bx\mid\by}}&\leq\text{Vol}\ppp{\hat{\mathcal{T}}_\delta^k\p{\bx\mid\by}}\\
&\leq \sum\limits_{\substack{\boldsymbol{\mathcal{P}}^{\delta}\cap\boldsymbol{\mathcal{R}}^{\delta}_{\boldsymbol{\mathcal{P}}}}}\text{Vol}\ppp{\bigtimes_{m=1}^k\mathscr{B}_{m}^{\delta}\p{P_m,\rho_m}}\\
&\leq N_{\delta,k}\cdot\operatorname*{max}_{\substack{\boldsymbol{\mathcal{P}}^{\delta}\cap\boldsymbol{\mathcal{R}}^{\delta}_{\boldsymbol{\mathcal{P}}}}}\text{Vol}\ppp{\bigtimes_{m=1}^k\mathscr{B}_{m}^{\delta}\p{P_m,\rho_m}}
\label{volineq1}
\end{align}
where the second inequality follows for the union bound, and $N_{k,\delta}$ is a constant depending on $k$ and $\delta$ (but not on $n$). This constant can be roughly bounded by
\begin{align}
N_{\delta,k}\leq\abs{\boldsymbol{\mathcal{P}}^{\delta}}\abs{\boldsymbol{\mathcal{R}}^\delta_{\boldsymbol{\mathcal{P}}}}\leq\p{\frac{kP_x+2\delta}{\delta}}^k\p{2\cdot\frac{k+\delta}{\delta}}^k.
\end{align}
On the other hand,
\begin{align}
\text{Vol}\ppp{\mathcal{T}_\delta\p{\bx\mid\by}}&\geq\text{Vol}\ppp{\hat{\mathcal{T}}_{\delta/k}^k\p{\bx\mid\by}}\\
&\geq \operatorname*{max}_{\substack{\boldsymbol{\mathcal{P}}^{\delta/k}\cap\boldsymbol{\mathcal{R}}^{\delta/k}_{\boldsymbol{\mathcal{P}}}}}\text{Vol}\ppp{\bigtimes_{m=1}^k\mathscr{B}_{m}^{\delta/k}\p{P_m,\rho_m}}.
\label{volineq2}
\end{align}
The following lemma is proved in Appendix \ref{app:1}.
\begin{lemma}\label{lem:1}
For every $m=1,\ldots,k$ and $\nu>0$,
\begin{align}
\p{1-\nu}\exp\ppp{\frac{n_b}{2}\ln\p{\pi e\vartheta_{o,u}^{2}}}\leq\text{Vol}\ppp{\mathscr{B}_{m}^\delta\p{P_{m},\rho_{m}}}\leq\exp\ppp{\frac{n_b}{2}\ln\p{\pi e\vartheta_o^{2}}}.
\end{align}
where
\begin{align}
\vartheta_{\delta,+}^{2}=P_m+\delta-P_m\p{\rho_m-\delta}^2
\end{align}
and
\begin{align}
\vartheta_{\delta,-}^{2} = P_m-\delta-P_m\p{\rho_m+\delta}^2.
\end{align}
In particular,
\begin{align}
\lim_{\delta\to0}\lim_{n_b\to\infty}\frac{1}{n_b}\ln\text{Vol}\ppp{\mathscr{B}_{m}^\delta\p{P_{m},\rho_{m}}} = \frac{1}{2}\ln\p{\pi eP_{m}\p{1-\rho^2_{m}}}.
\end{align}
\end{lemma}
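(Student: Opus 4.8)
\emph{Proof sketch.} The set $\mathscr{B}_m^\delta\p{P_m,\rho_m}$ defined in \eqref{BsetB} is, up to the two $\delta$-slacks, the intersection of the sphere $\ppp{\bx:\norm{\bx}^2=n_bP_m}$ in $\mathbb{R}^{n_b}$ with an affine hyperplane, because $\re\ppp{\sigma'_m\sum_{i\in\mathcal{I}_m}\bar{y}_i x_i}=\langle\bx,\bv\rangle$ for the fixed real vector $\bv$ with entries $\re\ppp{\sigma'_m\bar{y}_i}$. The first step is a purely geometric normalization: apply an orthogonal change of coordinates carrying $\bv/\norm{\bv}$ onto the first coordinate axis, so that $\mathscr{B}_m^\delta\p{P_m,\rho_m}$ becomes $\ppp{\bx:\abs{\norm{\bx}^2-n_bP_m}\leq\delta,\;\abs{x_1-t_0}\leq\delta'}$ with $\delta'=O\p{\delta}$. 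The target $n_b\rho_m\sqrt{\tilde{P}_{y,m}\tilde{P}_{\sigma,m}}$ on the right-hand side of the correlation constraint is designed to equal $\norm{\bx}\,\norm{\bv}\,\rho_m$ for a typical $\by$, so that $t_0=\rho_m\sqrt{n_bP_m}+O\p{\delta}$; thus $\rho_m$ is genuinely the cosine of the angle between $\bx$ and $\bv$.

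Once $x_1$ is pinned to the slab $\pp{t_0-\delta',t_0+\delta'}$, the remaining coordinates $\p{x_2,\ldots,x_{n_b}}$ range over the spherical shell in $\mathbb{R}^{n_b-1}$ of squared radius $n_bP_m-x_1^2$ and thickness $O\p{\delta}$. Over the slab, $x_1^2$ lies in $\pp{n_b\p{P_m\rho_m^2-O\p{\delta}},\,n_b\p{P_m\rho_m^2+O\p{\delta}}}$, so the squared radius of this shell lies between $n_b\vartheta_{\delta,-}^2$ and $n_b\vartheta_{\delta,+}^2$, with $\vartheta_{\delta,\pm}^2=P_m\pm\delta-P_m\p{\rho_m\mp\delta}^2$; this is exactly where the two independent $\delta$-relaxations combine. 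Hence $\text{Vol}\ppp{\mathscr{B}_m^\delta\p{P_m,\rho_m}}$ is at most $2\delta'$ times the volume of an $O\p{\delta}$-thick shell about the $\p{n_b-1}$-sphere of radius $\sqrt{n_b}\,\vartheta_{\delta,+}$, and at least $2\delta'$ times the corresponding quantity built from the inner surface at radius $\sqrt{n_b}\,\vartheta_{\delta,-}$. Since the ratio of the inner to the outer ball volume is $\p{\vartheta_{\delta,-}/\vartheta_{\delta,+}}^{n_b-1}\to0$, the lower bound captures all but a vanishing fraction of the outer ball volume, which produces the factor $1-\nu$ once $n_b$ is large.

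It remains to read off the exponential order. Inserting the ball-volume formula for a radius-$R$ ball in $\mathbb{R}^{d}$, namely $\pi^{d/2}R^{d}/\Gamma\p{d/2+1}$, with $d=n_b-1$ and $R=\sqrt{n_b}\,\vartheta$, taking logarithms, applying Stirling's approximation to $\Gamma$, and absorbing the polynomially bounded prefactors (the slab width $2\delta'$, the shell thickness, and the Stirling remainder) into $O\p{n_b^{-1}\ln n_b}$, one gets
\begin{align}
\frac{1}{n_b}\ln\text{Vol}\ppp{\mathscr{B}_m^\delta\p{P_m,\rho_m}}=\frac{1}{2}\ln\p{\pi e\,\vartheta^2}+O\p{\frac{\ln n_b}{n_b}},
\end{align}
with $\vartheta^2=\vartheta_{\delta,+}^2$ for the upper bound and $\vartheta^2=\vartheta_{\delta,-}^2$ (and an extra $1-\nu$) for the lower bound, which is the asserted two-sided estimate. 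Finally, letting $n_b\to\infty$ and then $\delta\to0$ sends both $\vartheta_{\delta,\pm}^2\to P_m\p{1-\rho_m^2}$, and the sandwich yields $\lim_{\delta\to0}\lim_{n_b\to\infty}\frac{1}{n_b}\ln\text{Vol}\ppp{\mathscr{B}_m^\delta\p{P_m,\rho_m}}=\frac{1}{2}\ln\p{\pi eP_m\p{1-\rho_m^2}}$.

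\emph{Main obstacle.} The delicate part is not the Stirling estimate but the bookkeeping that turns the two $\delta$-slacks in \eqref{BsetB} into the precise quantities $\vartheta_{\delta,\pm}^2$: one must check that, for a typical channel output, the direction vector $\bv$ of the linear constraint has exactly the norm implicit in the normalization $n_b\rho_m\sqrt{\tilde{P}_{y,m}\tilde{P}_{\sigma,m}}$ (which uses the conjugate-symmetric structure of the DFT-domain vectors), so that $\rho_m$ is the cosine and the pinned component of $\bx$ has squared length $\approx n_bP_m\rho_m^2$. A secondary point is making the $1-\nu$ loss uniform in $m$, which is what forces the order of limits---$n_b\to\infty$ inside, $\delta\to0$ outside---seen in the statement; once these are settled the remaining steps are routine.
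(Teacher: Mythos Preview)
Your geometric spherical-cap computation is a valid route to the asymptotic statement, but it is a genuinely different argument from the paper's. The paper does not slice the sphere at all: it introduces an auxiliary complex Gaussian measure $\mathrm{d}\gamma_G^{m}\p{\bx}\propto\exp\ppp{-\vartheta^{-2}\sum_i\abs{x_i-ay_i}^2}$ on $\mathbb{C}^{m}$ with $m=n_b/2$ and free parameters $a,\vartheta^2$, bounds the integrand on $\mathscr{B}_m^\delta$ using the two defining constraints, and then optimizes over $\p{a,\vartheta^2}$. The upper bound drops out of $\gamma_G^m\p{\mathbb{R}^m}=1$, and the lower bound comes from $\gamma_G^m\p{\p{\mathscr{B}_m^\delta}^c}\to 0$ by the law of large numbers; no Stirling, no ball-volume formula. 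Your direct approach is more elementary and geometrically transparent, while the paper's maximum-entropy device delivers the \emph{clean} inequalities $\text{Vol}\leq\exp\ppp{\frac{n_b}{2}\ln\p{\pi e\vartheta_{\delta,+}^2}}$ and $\text{Vol}\geq\p{1-\nu}\exp\ppp{\frac{n_b}{2}\ln\p{\pi e\vartheta_{\delta,-}^2}}$ with no polynomial prefactors.

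That last point is where your sketch falls short of the lemma as stated. Your upper bound carries the slab width $2\delta'$ and Stirling remainders as genuine multiplicative factors, so what you actually obtain is $\text{Vol}\leq\exp\ppp{\frac{n_b}{2}\ln\p{\pi e\vartheta_{\delta,+}^2}+O\p{\ln n_b}}$, not the prefactor-free inequality in the display. Similarly, your justification of the $\p{1-\nu}$ factor via the inner/outer ball ratio is not quite the mechanism that produces it in the paper (there it comes from the LLN bound on the Gaussian measure of the complement), and your slab-and-shell bookkeeping does not reproduce the specific forms $P_m\pm\delta-P_m\p{\rho_m\mp\delta}^2$ either, since $\delta'=O\p{\delta/\sqrt{n_b}}$ rather than $O\p{\delta}$. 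None of this affects the ``In particular'' limit, which is all that is used downstream; but if you want the non-asymptotic inequalities exactly as written, the Gaussian-measure argument is what gets you there. You are right that the conjugate-symmetric DFT structure is where the constant $\pi e$ (rather than $2\pi e$) enters, and that the correct identification of $\norm{\bv}$ requires working in $\mathbb{C}^{n_b/2}$ rather than in $\mathbb{R}^{n_b}$.
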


Now,
\begin{align}
\text{Vol}\ppp{\bigtimes_{m=1}^k\mathscr{B}_{m}^\delta\p{P_m,\rho_m}} = \prod_{m=1}^k\text{Vol}\ppp{\mathscr{B}_{m}^\delta\p{P_m,\rho_m}}.
\label{VolProd}
\end{align}
Whence, using Lemma \ref{lem:1}, \eqref{VolProd}, \eqref{volineq1} and \eqref{volineq2}, one obtains that
\begin{align}
&\text{Vol}\ppp{\mathcal{T}_\delta\p{\bx\mid\by}}\geq\operatorname*{max}_{\substack{\boldsymbol{\mathcal{P}}^{\delta/k}\cap\boldsymbol{\mathcal{R}}^{\delta/k}_{\boldsymbol{\mathcal{P}}}}}\p{\pi e}^{n/2}\exp\ppp{\frac{n_{b}}{2}\sum_{m=1}^k\ln\p{\vartheta_{\delta/k,-}^{2}}}
\end{align}
and
\begin{align}
&\text{Vol}\ppp{\mathcal{T}_\delta\p{\bx\mid\by}}\leq N_{\delta,k}\cdot\operatorname*{max}_{\substack{\boldsymbol{\mathcal{P}}^{\delta}\cap\boldsymbol{\mathcal{R}}^{\delta}_{\boldsymbol{\mathcal{P}}}}}\p{\pi e}^{n/2}\exp\ppp{\frac{n_{b}}{2}\sum_{m=1}^k\ln\p{\vartheta_{\delta,+}^{2}}}.
\end{align}
Thus,
\begin{align}
\lim_{\delta\to0}\lim_{n\to\infty}\frac{1}{n}\ln\text{Vol}\ppp{\mathcal{T}_\delta\p{\bx\mid\by}}=\frac{1}{2}\ln\p{\pi e}+\operatorname*{max}_{\boldsymbol{\mathcal{P}}\cap\boldsymbol{\mathcal{R}}_{\boldsymbol{\mathcal{P}}}}\ppp{\frac{h}{2}\sum_{m=1}^k\ln\p{P_{m}\p{1-\rho^2_{m}}}}
\label{VolProd2}
\end{align}
where
\begin{align}
\label{powerSet1}
\boldsymbol{\mathcal{P}}&\define\ppp{\bP\in\mathbb{R}^k:\;\frac{1}{k}\sum_{i=1}^kP_i=P_x}\\
\boldsymbol{\mathcal{R}}_{\boldsymbol{\mathcal{P}}}&\define\ppp{\boldsymbol{\rho}\in\mathbb{R}^k:\;\frac{1}{k}\sum_{i=1}^k\abs{\sigma'_i}\rho_i\sqrt{\tilde{P}_{y,i}P_i}= \tilde{\rho}}.
\label{corrSet1}
\end{align}
%Thus, based on \eqref{volineq1} and \eqref{volineq2}, one obtains that
%\begin{align}
%\text{Vol}\ppp{\mathcal{T}^k\p{\bx\mid\by}}\exe\p{\pi e}^{\frac{n}{2}}\cdot&\operatorname*{max}_{{\boldsymbol{\mathcal{P}}},\boldsymbol{\mathcal{R}}_{\boldsymbol{\mathcal{P}}}}\;\exp\ppp{\frac{n}{2k}\sum_{m=1}^k\ln\p{P_{m}\p{1-\rho^2_{m}}}}.
%\label{maxSddle}
%\end{align}
Finally, the probability in \eqref{ProbTT}, is given by
\begin{align}
\lim_{n\to\infty}\frac{1}{n}\ln\text{Pr}\ppp{\frac{\norm{\by-\bSig'\bX_1}^2}{2}-\frac{\blam^T\bX_1}{\beta}\approx n\epsilon}=\lim_{h\to0}\lim_{\delta\to0}\lim_{n\to\infty}\frac{1}{n}\ln\p{\frac{\text{Vol}\ppp{\mathcal{T}_\delta^k\p{\bx\mid\by}}}{{\text{Vol}}\ppp{\mathcal{T}_{x,\delta}^n}}},
\label{propbability1}
\end{align}
in which $\mathcal{T}_{x,\delta}^n$ is the set of $n$-dimensional $\bx$-complex vectors with norm $\sqrt{nP_x}$. 
\begin{lemma}\label{lem:2}
The volume of $\mathcal{T}_{x,\delta}^n$ is given by
\begin{align}
\lim_{\delta\to0}\lim_{n\to\infty}\frac{1}{n}\ln{\text{Vol}}\ppp{\mathcal{T}_{x,\delta}^n}=\frac{1}{2}\ln\p{\pi eP_x}.
\end{align}
\end{lemma}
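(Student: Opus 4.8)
The statement is a routine estimate of the exponential growth rate of the volume of a thin spherical shell, so the plan is to sandwich this volume between two ball volumes and apply Stirling's formula, then dispose of the $\delta\to0$ limit for free.

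First I would note that $\mathcal{T}_{x,\delta}^n=\ppp{\bx:\;\abs{\norm{\bx}^2-nP_x}\leq\delta}$ is an annulus, so that for $n$ large enough that $nP_x>\delta$ one has $\text{Vol}\ppp{B\p{\sqrt{nP_x-\delta}}}\leq\text{Vol}\ppp{\mathcal{T}_{x,\delta}^n}\leq\text{Vol}\ppp{B\p{\sqrt{nP_x+\delta}}}$, where $B\p{r}$ denotes the ball of radius $r$ in the space in which these volumes are being evaluated (the DFT image of the real $\sqrt{nP_x}$-sphere, i.e.\ the same space that carries the numerator of \eqref{propbability1}). The ratio of the two bounding radii tends to $1$ as $n\to\infty$, so both bounds share a common exponential rate that moreover does not depend on $\delta$; hence $\lim_{n\to\infty}\tfrac1n\ln\text{Vol}\ppp{\mathcal{T}_{x,\delta}^n}$ exists and equals $\lim_{n\to\infty}\tfrac1n\ln\text{Vol}\ppp{B\p{\sqrt{nP_x}}}$ for every $\delta>0$, after which the outer limit $\delta\to0$ is immediate.

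It then remains to evaluate $\lim_{n\to\infty}\tfrac1n\ln\text{Vol}\ppp{B\p{\sqrt{nP_x}}}$. Substituting the closed-form ball volume and using Stirling's approximation $\ln\Gamma\p{z+1}=z\ln z-z+O\p{\ln z}$, the $\ln n$ contributions cancel and the remaining terms collapse to $\tfrac12\ln\p{\pi eP_x}$; this is exactly the single-bin ($k=1$, $n_b=n$), no-correlation-constraint, $P_m\equiv P_x$ version of the volume computation already carried out in the proof of Lemma \ref{lem:1} in Appendix \ref{app:1} (and deleting one linear constraint among $\sim n$ does not alter the exponential rate), so the calculation can simply be quoted from there.

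The only point that genuinely requires care — and the reason the answer is $\tfrac12\ln\p{\pi eP_x}$ rather than the $\tfrac12\ln\p{2\pi eP_x}$ of a bona fide real sphere in $\mathbb{R}^n$ — is the bookkeeping of real versus complex degrees of freedom: the DFT of a real vector is Hermitian-symmetric, hence is parametrized by roughly $n/2$ \emph{complex} free coordinates carrying total power of order $nP_x/2$, while the log-volume is still normalized by the original real dimension $n$. Equivalently, the rate is the differential entropy of a circularly symmetric complex Gaussian of variance $P_x$, counted per real dimension, in line with the standard correspondence between sphere volumes and Gaussian differential entropy \cite{Shannon,Wyner}. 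Beyond keeping this factor straight, which is handled exactly as in the proof of Lemma \ref{lem:1}, no real obstacle is anticipated.
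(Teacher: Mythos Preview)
Your reduction to Lemma~\ref{lem:1} as the single-bin, correlation-free special case is exactly the paper's proof, and your discussion of the real/complex bookkeeping (Hermitian symmetry in the DFT domain, $n/2$ free complex coordinates, normalization by $n$) is correct and to the point.

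The alternative ball-sandwich argument, however, has a gap. The lower bound $\text{Vol}\ppp{B\p{\sqrt{nP_x-\delta}}}\leq\text{Vol}\ppp{\mathcal{T}_{x,\delta}^n}$ is false: the shell does not contain the inner ball, and for small $\delta$ its volume is strictly smaller. Indeed, writing the shell volume as the difference of ball volumes and using $\p{1\pm\delta/(nP_x)}^{n/2}\to e^{\pm\delta/(2P_x)}$, the shell-to-inner-ball ratio tends to $e^{\delta/P_x}-1$, which is below $1$ once $\delta<P_x\ln 2$ --- precisely the regime of interest since $\delta\to0$. The fix is immediate: take the lower bound to be the exact shell volume $\text{Vol}\ppp{B\p{\sqrt{nP_x+\delta}}}-\text{Vol}\ppp{B\p{\sqrt{nP_x-\delta}}}$ and observe that this differs from the upper bound only by the subexponential factor $1-\p{(nP_x-\delta)/(nP_x+\delta)}^{n/2}\to 1-e^{-\delta/P_x}>0$, so the exponential rate is unchanged. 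With that correction the Stirling route goes through and gives the same answer as the Gaussian-measure method of Lemma~\ref{lem:1}.
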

\begin{proof}
Readily follows by using almost the same proof of Lemma \ref{lem:1} (see Appendix \ref{app:1}).
\end{proof}
Thus, applying Lemma \ref{lem:2} on \eqref{propbability1}, one obtains\footnote{Note that at this stage, using once again the dominated convergence theorem (DCT) \cite{rudin} and Szeg\"{o}'s theorem \cite{Szego,Widom,Gray,Bottcher}, we can refine the bin sizes by taking the limit $h\define n_b/n = 1/k\to0$, and then to solve a variational problem. However, it turns out that it is better to refine the bin sizes only at the last stage of the analysis.}
\begin{align}
\text{Pr}\ppp{\frac{\norm{\by-\bSig'\bX_1}^2}{2}-\frac{\blam^T\bX_1}{\beta}\approx n\epsilon}\exe\exp\ppp{n\tilde{\Gamma}\p{\epsilon}}
\label{probGen}
\end{align}
with probability tending to one as $n\to\infty$, and
\begin{align}  
\tilde{\Gamma}\p{\epsilon} \define \lim_{h\to0}\operatorname*{max}_{\boldsymbol{\mathcal{P}},\boldsymbol{\mathcal{R}}_{\boldsymbol{\mathcal{P}}}}\ppp{\frac{h}{2}\sum_{m=1}^k\ln\p{\frac{P_{m}}{P_x}\p{1-\rho^2_{m}}}}.
\label{Gammadaffreq2}
\end{align}
Therefore, using \eqref{ExpecNFIR1}
\begin{align}
\bE\ppp{\mathcal{N}\p{\epsilon}} \exe \exp\ppp{n\p{R+\tilde{\Gamma}\p{\epsilon}}}.
\end{align}
To finish step 1, the following lemma is proposed and proved in Appendix \ref{app:4}\footnote{Lemma \ref{lem:4} simply states that, if we chose $\epsilon$ such that, $R+\tilde{\Gamma}\p{\epsilon}>0$, then the energy level $\epsilon$ will be ``typically" populated with an exponential number of codewords, concentrated very strongly around its mean $\bE\ppp{\mathcal{N}\p{\epsilon}}$. Otherwise (which means that $\bE\ppp{\mathcal{N}\p{\epsilon}}$ is exponentially small), the energy level $\epsilon$ will not be populated by any codewords ``typically".}.
\begin{lemma}\label{lem:4}
Let 
\begin{align}
\mathscr{E} \define \ppp{\epsilon\in\mathbb{R}:\;R+\tilde{\Gamma}\p{\epsilon}>0}.
\label{EEE}
\end{align}
Then, 
\begin{align}
\lim_{n\to\infty}\frac{1}{n}\ln\calN\p{\epsilon} = 
\begin{cases}
R+\tilde{\Gamma}\p{\epsilon},\ \ &\epsilon\in\mathscr{E}\\
-\infty,\ &\text{else}
\end{cases}
\label{statement}
\end{align}
with probability (w.p.) 1. 
\end{lemma}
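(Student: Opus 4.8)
The plan is to prove \eqref{statement} by the classical second-moment (Chebyshev) method combined with the Borel--Cantelli lemma. First I would fix a realization of $\by$ for which the exponential equivalence \eqref{probGen} holds; since the empirical averages that determine $\tilde{\Gamma}\p{\epsilon}$ (essentially $P_y$ and $\frac{1}{n}\sum_i\abs{\sigma'_ix_{0,i}}^2$) converge by the strong law of large numbers, this event occurs for all sufficiently large $n$, almost surely. Conditioned on such a $\by$, the wrong codewords $\bx_1,\ldots,\bx_{M-1}$ are i.i.d.\ and independent of $\by$, so $\calN\p{\epsilon}=\sum_{i=1}^{M-1}B_i$ is a sum of $M-1$ i.i.d.\ Bernoulli variables with per-trial probability $p_n\p{\epsilon}\exe e^{n\tilde{\Gamma}\p{\epsilon}}$; hence $\bE\ppp{\calN\p{\epsilon}}=\p{M-1}p_n\p{\epsilon}\exe e^{n\p{R+\tilde{\Gamma}\p{\epsilon}}}$ and, crucially, $\text{Var}\ppp{\calN\p{\epsilon}}=\p{M-1}p_n\p{\epsilon}\p{1-p_n\p{\epsilon}}\le\bE\ppp{\calN\p{\epsilon}}$. (For $\epsilon$ outside the feasible range, $\tilde{\Gamma}\p{\epsilon}=-\infty$, $p_n\p{\epsilon}=0$ and $\calN\p{\epsilon}\equiv0$, so that case is trivial.)

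For $\epsilon\in\mathscr{E}$, where $\bE\ppp{\calN\p{\epsilon}}$ grows exponentially, I would obtain the upper bound from Markov's inequality: for any $\zeta>0$, $\text{Pr}\ppp{\calN\p{\epsilon}\ge e^{n\zeta}\bE\ppp{\calN\p{\epsilon}}}\le e^{-n\zeta}$ is summable in $n$, so Borel--Cantelli gives $\frac{1}{n}\ln\calN\p{\epsilon}\le R+\tilde{\Gamma}\p{\epsilon}+\zeta$ for all large $n$, almost surely. The matching lower bound would come from Chebyshev's inequality and the variance estimate: $\text{Pr}\ppp{\calN\p{\epsilon}\le\p{1-\zeta}\bE\ppp{\calN\p{\epsilon}}}\le 1/\p{\zeta^2\bE\ppp{\calN\p{\epsilon}}}\exe e^{-n\p{R+\tilde{\Gamma}\p{\epsilon}}}$, which is again summable since $R+\tilde{\Gamma}\p{\epsilon}>0$; Borel--Cantelli then gives $\frac{1}{n}\ln\calN\p{\epsilon}\ge R+\tilde{\Gamma}\p{\epsilon}+\frac{1}{n}\ln\p{1-\zeta}$ for all large $n$, almost surely. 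Sending $n\to\infty$ and then $\zeta\downarrow0$ yields $\lim_n\frac{1}{n}\ln\calN\p{\epsilon}=R+\tilde{\Gamma}\p{\epsilon}$ almost surely. For $\epsilon\notin\mathscr{E}$ with $R+\tilde{\Gamma}\p{\epsilon}<0$, I would simply note that $\bE\ppp{\calN\p{\epsilon}}\to0$ exponentially and use Markov, $\text{Pr}\ppp{\calN\p{\epsilon}\ge1}\le\bE\ppp{\calN\p{\epsilon}}$, which is summable, so by Borel--Cantelli $\calN\p{\epsilon}=0$ for all large $n$ almost surely, i.e.\ $\frac{1}{n}\ln\calN\p{\epsilon}=-\infty$. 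The borderline set $\ppp{\epsilon:\,R+\tilde{\Gamma}\p{\epsilon}=0}$ has Lebesgue measure zero (generically it is finite) and may be disregarded, because $\calN\p{\epsilon}$ is used downstream only inside the integral $\int\calN\p{\epsilon}e^{-n\beta\epsilon}\mathrm{d}\epsilon$ in \eqref{errorPart}, whose exponential order is unaffected by a null set of $\epsilon$.

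The step I expect to be the main obstacle is upgrading the per-$\epsilon$ statements ``with probability one, $\frac{1}{n}\ln\calN\p{\epsilon}\to R+\tilde{\Gamma}\p{\epsilon}$'' into the single probability-one event ``for all relevant $\epsilon$ simultaneously'' asserted in \eqref{statement}. I would handle this by first running the Borel--Cantelli argument along a countable dense grid of $\epsilon$-values --- a countable intersection of probability-one events is still probability one --- and then transferring to arbitrary $\epsilon$ using the continuity and concavity of $\tilde{\Gamma}\p{\cdot}$ on its effective domain, together with the observation (already made) that what is ultimately required is only the Laplace-type evaluation of \eqref{errorPart}, governed by a neighborhood of the maximizer of $R+\tilde{\Gamma}\p{\epsilon}-\beta\epsilon$. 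A secondary technical point that must be secured up front is that the ``typical-$\by$'' event on which $p_n\p{\epsilon}\exe e^{n\tilde{\Gamma}\p{\epsilon}}$ holds occurs for all large $n$ almost surely, which again follows from the strong law.
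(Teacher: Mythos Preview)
Your proposal is correct and follows essentially the same route as the paper: compute the first two moments of the Bernoulli sum $\calN\p{\epsilon}$, apply Markov's inequality for $\epsilon\notin\mathscr{E}$ and Chebyshev's inequality for $\epsilon\in\mathscr{E}$ to get summable tail probabilities, and conclude via Borel--Cantelli. The paper's own proof is in fact terser than yours --- it does not discuss the uniformity over $\epsilon$ or the conditioning on a typical $\by$ at all --- so the extra care you take on those points is a genuine improvement rather than a deviation.
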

\itshape Step 2\normalfont: Using Lemma \ref{lem:4}, \eqref{errorPart}, and Varadhan's theorem \cite{Dembo}, one obtains that \cite[Ch. 2]{Galavotti,Mezard},
\begin{align}
Z'_{e}\p{\bY,\blam} &\exe e^{-nR}\;\operatorname*{max}_{\epsilon\in\mathscr{E}}\;\exp\ppp{n\p{R+\tilde{\Gamma}\p{\epsilon}-\beta\epsilon}}\\
&=\exp\ppp{n\pp{\operatorname*{max}_{\epsilon\in\mathscr{E}}\ppp{\tilde{\Gamma}\p{\epsilon}-\beta\epsilon}}},
\label{maxEprob1}
\end{align}
namely, w.p. 1,
\begin{align}
\lim_{n\to\infty}\frac{\ln Z'_{e}\p{\bY,\blam}}{n} =\operatorname*{max}_{\epsilon\in\mathscr{E}}\ppp{\tilde{\Gamma}\p{\epsilon}-\beta\epsilon}.
\label{highprob}
\end{align}
Let $\Gamma\p{\epsilon}$ be defined as in \eqref{Gammadaffreq2}, but without the limit over $h$. It is verified in Appendix \ref{app:order} that the maximization and the limit over $h$ can be interchanged, namely, \eqref{highprob} can be rewritten as follows\footnote{Another approach to ``handle" the limit over $h$ is, to first prove the theorem for a linear system whose frequency response is a staircase function (namely, ``ignoring evaluate" the limit over $h$ in \eqref{propbability1}). Then, using the fact that every frequency response can be approximated arbitrarily well by a sequence of staircase functions with sufficiently small spacing between jumps (Szeg\"{o}'s theorem), the main theorem is proved. Note that \eqref{orderchange} literally means that the partition function for any transfer function is obtained via a limit (w.r.t. $h$) of a sequence of partition functions corresponding to staircase functions with spacings $h$.}
\begin{align}
\frac{\ln Z'_{e}\p{\bY,\blam}}{n}\sim\lim_{h\to0}\operatorname*{max}_{\epsilon\in\mathscr{E}}\ppp{\Gamma\p{\epsilon}-\beta\epsilon}
\label{orderchange}
\end{align}
with probability tending to one. For simplicity of notation, in the following, the notion of \emph{typical} sequences is used to describe an event that is happening with high probability. For example, we say that for a \emph{typical} realization of $\by$, $Z'_{e}\p{\by,\blam}$ is given by the right hand side of \eqref{highprob}, with the meaning that it happens with probability tending to one as $n\to\infty$. Also, in the following, in order not to drag the limit over $h$, it will be omitted and then reverted when it has a role.

Next, an explicit expression for $Z'_{e}\p{\by,\blam}$ is derived. Based on \eqref{powerSet1}, \eqref{corrSet1}, and \eqref{Gammadaffreq2}, $\Gamma\p{\epsilon}$ can be rewritten as
\begin{align}
&\operatorname*{max}_{\ppp{P_i}_{i=1}^k,\ppp{\rho_i}_{i=1}^k}\;\frac{h}{2}\sum_{m=1}^k\ln\p{\frac{P_{m}}{P_x}\p{1-\rho^2_{m}}}\nonumber\\
&\ \ \  \ \ \ \text{s.t.}\ \ \ \ \frac{1}{k}\p{\sum_{m=1}^k\abs{\sigma'_m}\rho_m\sqrt{P_m\tilde{P}_{y,m}}-\frac{1}{2}\abs{\sigma_{m}}^2P_x -\frac{1}{2}\abs{\sigma'_m}^2P_m -\frac{1}{2\beta}} =-\epsilon \nonumber\\
&\ \ \ \ \ \ \ \ \ \ \ \ \ \ \ \frac{1}{k}\sum_{m=1}^kP_m = P_x.
\label{optimizationfff}
\end{align}
\begin{prop}\label{prop1}
Let $\ppp{\mu_i}_{i=1}^k$ be a vector of real scalars such that $\sum_i\mu_i = k$. Then, \eqref{optimizationfff} can be transformed into
\begin{align}
&\operatorname*{max}_{\ppp{P_i}_{i=1}^k,\ppp{\rho_i}_{i=1}^k,\ppp{\mu_i}_{i=1}^k}\;\frac{h}{2}\sum_{m=1}^k\ln\p{\frac{P_{m}}{P_x}\p{1-\rho^2_{m}}}\nonumber\\
&\ \ \  \ \ \ \ \ \ \ \text{s.t.}\ \ \ \ \abs{\sigma'_i}\rho_i\sqrt{P_i\tilde{P}_{y,i}}-\frac{1}{2}\abs{\sigma_{i}}^2P_x -\frac{1}{2}\abs{\sigma'_i}^2P_i-\frac{1}{2\beta} = -\mu_i\epsilon,\;\;i = 1,\ldots,k \nonumber\\
&\ \ \ \ \ \ \ \ \ \ \ \ \ \ \ \frac{1}{k}\sum_{m=1}^kP_m = P_x;\ \ \frac{1}{k}\sum_{m=1}^k\mu_m = 1.
\label{maxNew76}
\end{align}
\end{prop}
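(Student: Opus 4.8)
The plan rests on one simple observation: the auxiliary variables $\ppp{\mu_i}$ never enter the objective $\frac h2\sum_{m=1}^k\ln\p{\frac{P_m}{P_x}\p{1-\rho_m^2}}$, so passing from \eqref{optimizationfff} to \eqref{maxNew76} costs nothing as soon as the two feasible regions \emph{project} onto the same set of power and correlation allocations $\p{\ppp{P_i},\ppp{\rho_i}}$. The $\ppp{\mu_i}$ are merely a ``localization'' device: they split the single averaged energy constraint of \eqref{optimizationfff} into $k$ per-bin equalities, at the price of one free weight vector subject only to $\frac1k\sum_m\mu_m=1$. The payoff, used in the steps that follow, is that once $\bmu$ is fixed the reformulated problem decouples bin by bin.

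First I would check the inclusion ``feasible for \eqref{maxNew76} $\Rightarrow$ feasible for \eqref{optimizationfff}''. Given a triple $\p{\ppp{P_i},\ppp{\rho_i},\ppp{\mu_i}}$ satisfying the constraints of \eqref{maxNew76}, I average the $k$ equalities $\abs{\sigma'_i}\rho_i\sqrt{P_i\tilde P_{y,i}}-\tfrac12\abs{\sigma_i}^2P_x-\tfrac12\abs{\sigma'_i}^2P_i-\tfrac1{2\beta}=-\mu_i\epsilon$ over $i$; the right-hand side averages to $-\epsilon\cdot\frac1k\sum_m\mu_m=-\epsilon$, which is exactly the aggregate constraint of \eqref{optimizationfff}, while the power constraint $\frac1k\sum_mP_m=P_x$ is literally common to both formulations. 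Hence $\p{\ppp{P_i},\ppp{\rho_i}}$ is feasible for \eqref{optimizationfff} with an unchanged objective, so the optimum of \eqref{maxNew76} does not exceed that of \eqref{optimizationfff}.

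For the reverse inclusion, I would take any $\p{\ppp{P_i},\ppp{\rho_i}}$ feasible for \eqref{optimizationfff} and, assuming $\epsilon\neq0$, set
\[
\mu_i\define-\frac1\epsilon\p{\abs{\sigma'_i}\rho_i\sqrt{P_i\tilde P_{y,i}}-\tfrac12\abs{\sigma_i}^2P_x-\tfrac12\abs{\sigma'_i}^2P_i-\tfrac1{2\beta}},\qquad i=1,\ldots,k.
\]
These are genuine real scalars, the $i$-th per-bin constraint of \eqref{maxNew76} holds by the very definition of $\mu_i$, and summing over $i$ gives $\frac1k\sum_m\mu_m=-\tfrac1\epsilon\cdot(-\epsilon)=1$ by the aggregate constraint of \eqref{optimizationfff}; the power constraint is again untouched. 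Thus every feasible point of \eqref{optimizationfff} lifts to a feasible point of \eqref{maxNew76} with the same objective value, yielding the opposite inequality and hence equality of the two optima for every $\epsilon\neq0$.

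The only delicate point --- and the main obstacle, such as it is --- is the boundary value $\epsilon=0$, where the lift above is undefined and the feasible region of \eqref{maxNew76} (which then pins each per-bin term to zero) is strictly smaller than that of \eqref{optimizationfff} (which constrains only the average). I would dispose of it by continuity: after the change of variables $s_m\define\rho_m\sqrt{P_m}$ the objective becomes $\frac h2\sum_m\ln\p{\frac{P_m-s_m^2}{P_x}}$, which is concave in $\p{\ppp{P_i},\ppp{s_i}}$, while the correlation constraint and the per-bin constraints all become affine in $\p{\ppp{P_i},\ppp{s_i}}$ and in $\epsilon$; hence $\Gamma\p{\epsilon}$ is the optimal value of a concave program in which $\epsilon$ enters affinely, so it is concave, and therefore continuous, in $\epsilon$, and the identity just established for $\epsilon\neq0$ extends to $\epsilon=0$ by passing to the limit. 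Alternatively, one may simply note that $\Gamma$ feeds into the rest of the analysis only through the maximization and integration over $\epsilon$ (cf.\ \eqref{errorPart} and \eqref{maxEprob1}), where a single value of $\epsilon$ is immaterial. Apart from this edge case the argument is entirely elementary; all of its content lies in the remark that the objective is $\bmu$-free, so enlarging the list of optimization variables cannot change the optimum.
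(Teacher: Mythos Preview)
Your argument is correct and follows the same two-inclusion strategy as the paper: average the per-bin constraints to pass from \eqref{maxNew76} to \eqref{optimizationfff}, and define $\mu_i$ by the obvious quotient to go the other way. The paper's own proof is essentially a two-line version of yours; you are in fact more careful than the paper in isolating and disposing of the $\epsilon=0$ edge case, which the paper silently ignores.
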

\begin{proof}[Proof of Proposition \ref{prop1}]
Given a solution of \eqref{maxNew76}, it is to verify that it is feasible for the optimization problem given by \eqref{optimizationfff}. Conversely, given a solution, $\ppp{P^*_m,\rho_m^*}$, of \eqref{optimizationfff}, by taking 
\begin{align}
\mu_m^* = -\frac{\abs{\sigma'_m}\rho_m^*\sqrt{P_m^*\tilde{P}_{y,m}}-\frac{1}{2}\abs{\sigma_{m}}^2P_x -\frac{1}{2}\abs{\sigma'_m}^2P_m^*-\frac{1}{2\beta}}{\epsilon},
\end{align}
it can be seen that $\ppp{P^*_m,\rho_m^*,\mu_m^*}$ is feasible for \eqref{maxNew76}. Thus, the two problems are equivalent.
\end{proof}
Using the first constraint in \eqref{maxNew76}, the optimization problem in \eqref{maxNew76} can be transformed into
\begin{align}
&\operatorname*{max}_{\ppp{P_i}_{i=1}^k,\ppp{\mu_i}_{i=1}^k}\;\frac{h}{2}\sum_{m=1}^k\ln\ppp{\frac{P_{m}}{P_x}\pp{1-\p{\frac{\frac{1}{2}\abs{\sigma_{m}}^2P_x +\frac{1}{2}\abs{\sigma'_m}^2P_m+\frac{1}{2\beta}-\mu_m\epsilon}{\abs{\sigma'_m}\sqrt{P_m\tilde{P}_{y,m}}}}^2}}\nonumber\\
&\ \ \  \ \ \ \; \text{s.t.}\ \ \ \ \frac{1}{k}\sum_{i=1}^kP_i = P_x;\ \ \frac{1}{k}\sum_{m=1}^k\mu_m = 1.
\label{maxNew77}
\end{align}
Therefore, for a typical realization of the vector $\by$, $Z_e'\p{\by,\blam}$ is given by
\begin{align}
&\frac{\ln Z'_{e}\p{\by,\blam}}{n} \sim\operatorname*{max}_{\boldsymbol{\mathcal{P}}}\operatorname*{max}_{\mathscr{E},\ppp{\mu_i}\in\mathscr{M}_k}\nonumber\\
&\frac{h}{2}\sum_{m=1}^k\p{\ln\ppp{\frac{P_{m}}{P_x}\pp{1-\p{\frac{\frac{1}{2}\abs{\sigma_{m}}^2P_x +\frac{1}{2}\abs{\sigma'_m}^2P_m+\frac{1}{2\beta}-\mu_m\epsilon}{\abs{\sigma'_m}\sqrt{P_m\tilde{P}_{y,m}}}}^2}}-2\beta\mu_m\epsilon},
\label{maxNew88}
\end{align}
in which
\begin{align}
\mathscr{M}_k \define \ppp{\p{\mu_1,\ldots,\mu_k}\in\mathbb{R}^k:\ \frac{1}{k}\sum_{i=1}^k\mu_i = 1}.
\label{conMu2}
\end{align}
Using the subadditivity property of the maximum norm one obtains (for typical $\by$)
\begin{align}
&\frac{\ln Z'_{e}\p{\by,\blam}}{n} \lesssim\operatorname*{max}_{\boldsymbol{\mathcal{P}},\ppp{\mu_i}}\frac{h}{2}\sum_{m=1}^k\nonumber\\
&\operatorname*{max}_{\mathscr{E}}\ln\ppp{\frac{P_{m}}{P_x}\pp{1-\p{\frac{\frac{1}{2}\abs{\sigma_{m}}^2P_x +\frac{1}{2}\abs{\sigma'_m}^2P_m+\frac{1}{2\beta}-\mu_m\epsilon}{\abs{\sigma'_m}\sqrt{P_m\tilde{P}_{y,m}}}}^2}}-2\beta\mu_m\epsilon.
\label{OOff}
\end{align}
Note that except the subadditivity, in the above optimization the maximization is carried over $\ppp{\mu_i}\in\mathbb{R}^k$ rather than $\ppp{\mu_i}\in\mathscr{M}_k$ (as it should be), hence increasing further the bound. Changing the variables, $\mu_m\epsilon\mapsto\epsilon_m$, the values of $\epsilon_m$ for which the derivative vanishes are the solutions of the following equation 
\begin{align}
\frac{2\p{\frac{1}{2\beta}-\epsilon_m+\frac{1}{2}\abs{\sigma_{m}}^2P_x +\frac{1}{2}\abs{\sigma'_m}^2P_m}}{\abs{\sigma'_m}^2\tilde{P}_{y,m}P_m\p{1-\p{\frac{\frac{1}{2}\abs{\sigma_{m}}^2P_x +\frac{1}{2}\abs{\sigma'_m}^2P_m+\frac{1}{2\beta}-\mu_m\epsilon}{\abs{\sigma'_m}\sqrt{P_m\tilde{P}_{y,m}}}}^2}}-2\beta=0,
\end{align}
which after simple algebra, boils down to a quadratic equation whose solutions are
\begin{align}
\label{epsOpt}
\epsilon_{m,1}^* = \frac{2+\abs{\sigma'_m}^2\beta P_m+\abs{\sigma_{m}}^2\beta P_x+\sqrt{1+4\beta^2\abs{\sigma'_m}^2\tilde{P}_{y,m}P_m}}{2\beta}\\
\epsilon_{m,2}^* = \frac{2+\abs{\sigma'_m}^2\beta P_m+\abs{\sigma_{m}}^2\beta P_x-\sqrt{1+4\beta^2\abs{\sigma'_m}^2\tilde{P}_{y,m}P_m}}{2\beta}.
\label{epsOpt4}
\end{align}
Substitution of $\epsilon_{m,1}^*$ in the objective function of \eqref{OOff} reveals that $\epsilon_{m,1}^*$ is not in the objective function domain, and thus only $\epsilon_{m,2}^*$ is considered. In the following, the case $\epsilon_{m,2}^*\in\mathscr{E}$ is first analyzed. Substituting $\epsilon_{m,2}^*$ in \eqref{OOff}, one obtains (for typical $\by$)
\begin{align}
&\frac{\ln Z'_{e}\p{\by,\blam}}{n} \lesssim\operatorname*{max}_{\boldsymbol{\mathcal{P}},\ppp{\mu_i}}\frac{h}{2}\sum_{m=1}^k\ln\ppp{\frac{P_{m}}{P_x}\pp{1-\p{\frac{\frac{1}{2}\abs{\sigma_{m}}^2P_x +\frac{1}{2}\abs{\sigma'_m}^2P_m+\frac{1}{2\beta}-\epsilon_{m,2}^*}{\abs{\sigma'_m}\sqrt{P_mP_{y,m}}}}^2}}-2\beta\epsilon_{m,2}^*.
\label{OOff2}
\end{align}
Let $\gamma$ be the Lagrange multiplier associated with the power constraint. Then, the derivative of the objective function in \eqref{OOff2} w.r.t. $P_m$ is given by
\begin{align}
-\abs{\sigma'_m}^2\beta + \frac{1-\sqrt{1+4\abs{\sigma'_m}^2\beta^2P_m^*\tilde{P}_{y,m}}}{2P_m^*}-\gamma=0,
\end{align}
which vanishes at
\begin{align}
P_m^* = \frac{\abs{\sigma'_m}^2\beta\p{1+\beta\tilde{P}_{y,m}}+\gamma}{\p{\abs{\sigma'_m}^2\beta+\gamma}^2},
\label{PtermAnyLamb}
\end{align}
independently of $\epsilon_{m,2}^*$, and $\gamma$ is chosen such that $\sum_iP_i = kP_x$. Therefore (for typical $\by$),
\begin{align}
\frac{\ln Z'_{e}\p{\by,\blam}}{n} &\lesssim\frac{h}{2}\sum_{m=1}^k\ln\ppp{\frac{P_m^*}{P_x}\pp{1-\p{\frac{\frac{1}{2}\abs{\sigma_{m}}^2P_x +\frac{1}{2}\abs{\sigma'_m}^2P_m^*+\frac{1}{2\beta}-\epsilon_{m}^*}{\abs{\sigma'_m}\sqrt{P_m^*\tilde{P}_{y,m}}}}^2}}-2\beta\epsilon_{m}^*\nonumber\\
&\define F_{\text{par}},
\label{ineqPart}
\end{align}
where $\epsilon_{m}^*\define\epsilon_{m,2}^*\p{P_m^*}$. Hence, an upper bound, $F_{\text{par}}$, on $\ln Z_e'\p{\by,\blam}/n$ is obtained. On the other hand, by taking
\begin{align}
\epsilon^* &= \frac{\sum_{i=1}^k\epsilon_i^*}{k}\\
\mu_m^* &= \frac{k\epsilon_m^*}{\sum_{i=1}^k\epsilon_i^*},
\end{align}
and \eqref{PtermAnyLamb}, this bound is achieved. Summarizing the above results, $Z'_{e}\p{\by,\blam}$ is given by (for typical $\by$)
\begin{align}
&\frac{\ln Z'_{e}\p{\by,\blam}}{n} \sim
\begin{cases}
F_{\text{par}},\ \ &\ \ \Gamma\p{\epsilon^*}+R>0\\
\Gamma\p{\epsilon_s}-\beta\epsilon_s,\ \ &\ \ \Gamma\p{\epsilon^*}+R\leq 0\\ 
\end{cases}.
\end{align}
Since at the final step of the calculation, the partition function (or its derivative w.r.t. $\blam$) is evaluated at $\blam = \bze$, the range $\Gamma\p{\epsilon^*}+R>0$ should be computed at the vicinity of $\blam = \bze$. First, note that $\tilde{P}_{y,i}$, given in Lemma \ref{lem:3}, can be written as
\begin{align}
\tilde{P}_{y,i} = \abs{\sigma_{i}}^2P_x + \frac{1}{\beta} + \frac{2}{\beta}\frac{1}{n_b}\re\p{\frac{1}{\sigma'_i}\sum_{r\in i\text{-th}\text{ bin}}y_r\lambda_r} +\frac{1}{\beta^2\abs{\sigma'_i}^2}\frac{1}{n_b}\sum_{r\in i\text{-th}\text{ bin}}\abs{\lambda_r}^2.
\label{P_yEqautionMis}
\end{align}
Hence, substituting $\blam = \bze$ in \eqref{PtermAnyLamb}, one obtains
\begin{align}
\label{Pmvicinity}
\left.P_m^*\right|_{\blamt=0} &= \frac{\abs{\sigma'_m}^2\beta\p{1+\beta\left.\tilde{P}_{y,m}\right|_{\blamt=0}}+\gamma_0}{\p{\abs{\sigma'_m}^2\beta+\gamma_0}^2}\\
& = \frac{\abs{\sigma'_m}^2\beta\p{2+\beta\abs{\sigma_{m}}^2P_x}+\gamma_0}{\p{\abs{\sigma'_m}^2\beta+\gamma_0}^2}
\label{P_mEq2}
\end{align}
where $\gamma_0$ is chosen such that
\begin{align}
kP_x = \sum_m\left.P_m^*\right|_{\blamt=0}.
\end{align}
Substitution of $\left.P_m^*\right|_{\blamt=0}$ and $\left.\epsilon_m^*\p{\left.P_m^*\right|_{\blamt=0}}\right|_{\blamt=0}$ in $\Gamma\p{\epsilon}$, reveals that 
\begin{align}
&\left.\Gamma\p{\epsilon^*}\right|_{\blamt=0} = \frac{h}{2}\sum_{m=1}^k\ln\ppp{\frac{\left.P_m^*\right|_{\blamt=0}}{P_x}\pp{1-\p{\frac{\frac{1}{2}\abs{\sigma_{m}}^2P_x +\frac{1}{2}\abs{\sigma'_m}^2\left.P_m^*\right|_{\blamt=0}+\frac{1}{2\beta}-\left.\epsilon_{m}^*\right|_{\blamt=0}}{\abs{\sigma'_m}\sqrt{\left.P_m^*\right|_{\blamt=0}\left.\tilde{P}_{y,m}\right|_{\blamt=0}}}}^2}},
\label{GammaSumm}
\end{align}
and that
\begin{align}
\left.\epsilon_m^*\right|_{\blamt=0} =\frac{2+\abs{\sigma'_m}^2\beta \left.P_m^*\right|_{\blamt=0}+\abs{\sigma_{m}}^2\beta P_x-\sqrt{1+4\beta^2\abs{\sigma'_m}^2\tilde{P}_{y,m}\left.P_m^*\right|_{\blamt=0}}}{2\beta}.
\label{epsilonlambda0}
\end{align}
Then, substituting \eqref{epsilonlambda0} in the $m$th term of the sum in \eqref{GammaSumm}, it becomes
\begin{align}
\ln\p{\frac{\sqrt{1+4\abs{\sigma'_m}^2\beta^2\left.P_m^*\right|_{\blamt=0}\left.\tilde{P}_{y,m}\right|_{\blamt=0}}-1}{2P_x\abs{\sigma'_m}^2\beta^2\left.\tilde{P}_{y,m}\right|_{\blamt=0}}},
\label{m_termSum}
\end{align}
which after substitution of \eqref{Pmvicinity}, boils down to
\begin{align}
\frac{1}{P_x\gamma_0+\abs{\sigma'_m}^2P_x\beta}.
\label{finResinner}
\end{align}
Hence, substituting \eqref{finResinner} in \eqref{GammaSumm}, one obtains
\begin{align}
\left.\Gamma\p{\epsilon^*}\right|_{\blamt=0} &=-\frac{h}{2}\sum_{m=1}^k\ln\p{P_x\gamma_0+\abs{\sigma'_m}^2P_x\beta}.
\label{gamgam}
\end{align}
Accordingly, the region $\Gamma\p{\epsilon^*}+R\leq0$ is equivalent to
\begin{align}
R&\leq\frac{h}{2}\sum_{m=1}^k\ln\p{P_x\gamma_0+\abs{\sigma'_m}^2P_x\beta} \define R_e,
\label{R_edef}
\end{align} 
and hence
\begin{align}
&\frac{\ln Z'_{e}\p{\by,\blam}}{n} \sim
\begin{cases}
F_{\text{par}},\ \ &\ \ R>R_e\\
\Gamma\p{\epsilon_s}-\beta\epsilon_s,\ \ &\ \ R\leq R_e\\ 
\end{cases}.
\end{align}
The next step in the evaluation of $Z'\p{\by,\blam}$, is taking into account $Z'_{c}\p{\by,\blam}$. To this end, the following relation is used
\begin{align}
\lim_{n\to\infty}\frac{\ln\p{e^{-na}+e^{-nb}}}{n}= -\min\p{a,b}.
\end{align}
Accordingly, within the range $R>R_e$, for a typical code and realizations of the vector $\by$, we search rates for which $Z'_{c}\p{\by,\bze}>Z'_{e}\p{\by,\bze}$, namely,
\begin{align}
\frac{\ln Z'_{c}\p{\by,\bze}}{n}>\left.F_{\text{par}}\right|_{\blamt=0}.
\label{rangeccc}
\end{align}
Recall that $\left.F_{\text{par}}\right|_{\blamt=0}$ is given by
\begin{align}
\left.F_{\text{par}}\right|_{\blamt=0} &= \Gamma\p{\epsilon^*} - \beta\epsilon^* \\
& = -\frac{h}{2}\sum_{m=1}^k\ppp{\ln\p{P_x\gamma_0+P_x\abs{\sigma'_m}^2\beta}+\left.2\beta\epsilon_m^*\right|_{\blamt=0}},
%& = -\lim_{h\to0}\frac{h}{2}\sum_{m=1}^k\ln\p{P_x\gamma_0+P_x\abs{\sigma_m}^2\beta}\\
%&\ \ -\lim_{h\to0}\frac{h}{2}\sum_{m=1}^k2\beta
%\frac{2+\abs{\sigma_m}^2\beta \left.P_m^*\right|_{\lambda=0}+\abs{\sigma_{m0}}^2\beta P_x-\sqrt{1+4\beta^2\abs{\sigma_m}^2\tilde{P}_{y,m}\left.P_m^*\right|_{\lambda=0}}}{2\beta}
\end{align}
and that 
\begin{align}
\frac{\ln Z'_{c}\p{\by,\bze}}{n}&=-\p{R+\frac{1}{2}+\frac{\beta}{2n}\norm{\p{\bA'-\bA}\bx_0}^2}\\
& = -R-\frac{1}{2}-\frac{h\beta}{2}\sum_{m=1}^k\abs{\sigma'_m-\sigma_{m}}^2P_x.
\label{ToepProd}
\end{align}
Hence the inequality in \eqref{rangeccc} becomes 
\begin{align}
R&<\frac{h}{2}\sum_{m=1}^k\ln\p{P_x\gamma_0+P_x\abs{\sigma'_m}^2\beta}-\frac{1}{2}-\frac{h\beta}{2}\sum_{m=1}^k\abs{\sigma'_m-\sigma_{m}}^2P_x\nonumber\\
&\ \ \ \ +\frac{h}{2}\sum_{m=1}^k
\ppp{2+\abs{\sigma'_m}^2\beta \left.P_m^*\right|_{\blamt=0}+\abs{\sigma_{m}}^2\beta P_x-\sqrt{1+4\beta^2\abs{\sigma'_m}^2\tilde{P}_{y,m}\left.P_m^*\right|_{\blamt=0}}}\\
&=\frac{h}{2}\sum_{m=1}^k\ln\p{P_x\gamma_0+P_x\abs{\sigma'_m}^2\beta}+ \frac{1}{2} + h\beta\sum_{m=1}^k\re\p{\sigma_m^{'*}\sigma_{m}}P_x\nonumber\\
&\ \ \ \ +\frac{h\beta}{2}\sum_{m=1}^k\abs{\sigma'_m}^2\p{\left.P_m^*\right|_{\blamt=0}-P_x}  - \frac{h}{2}\sum_{m=1}^k\sqrt{1+4\beta^2\abs{\sigma'_m}^2\tilde{P}_{y,m}\left.P_m^*\right|_{\blamt=0}}.
\label{ineqRR}
\end{align}
Substituting $P_m^*$, given in \eqref{P_mEq2}, in the last two terms of \eqref{ineqRR}, one obtains
\begin{align}
R&<\frac{h}{2}\sum_{m=1}^k\ln\p{P_x\gamma_0+P_x\abs{\sigma'_m}^2\beta}+ \frac{1}{2} + h\beta\sum_{m=1}^k\re\p{\sigma_m^{'*}\sigma_{m}}P_x\nonumber\\
&\ \ \ +\frac{h\beta}{2}\sum_{m=1}^k\abs{\sigma'_m}^2\beta\p{\frac{\abs{\sigma'_m}^2\beta\p{2+\beta\abs{\sigma_{m}}^2P_x}+\gamma_0}{\p{\abs{\sigma'_m}^2\beta+\gamma_0}^2}-P_x}\nonumber\\
&\ \ \ -\frac{h}{2}\sum_{m=1}^k\frac{\abs{\sigma'_m}^2\beta\p{3+2P_x\beta\abs{\sigma_{m}}^2}+\gamma_0}{\p{\abs{\sigma'_m}^2\beta+\gamma_0}}.
\label{FerroDom}
\end{align}
Refining the bin sizes by taking the limit $h\to0$, while using Szeg\"{o}'s theorem, it is shown in Appendix \ref{app:5} that \eqref{FerroDom} becomes
\begin{align}
&R<R_e+R_d\define R_c
\label{R_cdef}
\end{align}
where
\begin{align}
R_e\define\frac{1}{4\pi}\int_0^{2\pi}\ln\p{P_x\gamma_0+\abs{\Hmat'\p{\omega}}^2P_x\beta}\mathrm{d}\omega,
\end{align}
and 
\begin{align}
R_d\define&\frac{1}{2} + \beta P_x\frac{1}{2\pi}\int_0^{2\pi}\re\p{\Hmat'^*\p{\omega}\Hmat\p{\omega}}\mathrm{d}\omega\nonumber\\
&+\frac{1}{4\pi}\int_0^{2\pi}\abs{\Hmat'\p{\omega}}^2\beta\p{\frac{\abs{\Hmat'\p{\omega}}^2\beta\p{2+\beta\abs{\Hmat\p{\omega}}^2P_x}+\gamma_0}{\p{\abs{\Hmat'\p{\omega}}^2\beta+\gamma_0}^2}-P_x}\nonumber\\
&-\frac{1}{4\pi}\int_0^{2\pi}\frac{\abs{\Hmat'\p{\omega}}^2\beta\p{3+2P_x\beta\abs{\Hmat\p{\omega}}^2}+\gamma_0}{\p{\abs{\Hmat'\p{\omega}}^2\beta+\gamma_0}}.
\label{R_ddef}
\end{align}
Hence, within the range $R>R_e$, $Z'_{c}\p{\by,\bze}>Z'_{e}\p{\by,\bze}$ (again, typical code and realization vector $\by$) for 
\begin{align}
\ppp{R<R_c}\cap\ppp{R>R_e} = \ppp{R_e<R<R_e+R_d = R_c},
\end{align}
which is a non-empty set if $R_d$ is positive. Next, within the range $R\leq R_e$, $Z'_{c}\p{\by,\bze}>Z'_{e}\p{\by,\bze}$ for rates which satisfy (for typical code and realization of $\by$)
\begin{align}
\frac{\ln Z'_{c}\p{\by,\bze}}{n}>\Gamma\p{\epsilon_s}-\left.\beta\epsilon_s\right|_{\blamt=0}.
\label{rangeccc1}
\end{align}
First, recall that $\epsilon_s$ satisfies $R+\Gamma\p{\epsilon_s}=0$, and hence $\Gamma\p{\epsilon_s} = -R$. Thus, \eqref{rangeccc1} can be rewritten as
\begin{align}
-R-\frac{1}{2}-\frac{h\beta}{2}\sum_{m=1}^k\abs{\sigma'_m-\sigma_{m}}^2P_x > -R-\left.\beta\epsilon_s\right|_{\blamt=0},
\end{align}
which is equivalent to
\begin{align}
\left.\epsilon_s\right|_{\blamt=0}>\frac{1}{2\beta}+\frac{h}{2}\sum_{m=1}^k\abs{\sigma'_m-\sigma_{m}}^2P_x.
\label{glassDom}
\end{align}
Applying $\Gamma\p{\cdot}$ to \eqref{glassDom}, one obtains
\begin{align}
\Gamma\p{\left.\epsilon_s\right|_{\blamt=0}}>\Gamma\p{\frac{1}{2\beta}+\frac{h}{2}\sum_{m=1}^k\abs{\sigma'_m-\sigma_{m}}^2P_x},
\end{align}
and hence
\begin{align}
R<-\left.\Gamma\p{\frac{1}{2\beta}+\frac{h}{2}\sum_{m=1}^k\abs{\sigma'_m-\sigma_{m}}^2P_x}\right|_{\blamt=0}\define R_g,
\label{Rgdefde}
\end{align}
where
\begin{align}
&\Gamma\p{\frac{1}{2\beta}+\frac{h}{2}\sum_{m=1}^k\abs{\sigma'_m-\sigma_{m}}^2P_x} =\nonumber\\ 
&\operatorname*{max}_{\ppp{P_i}_{i=1}^k,\ppp{\rho_i}_{i=1}^k}\;\frac{h}{2}\sum_{m=1}^k\ln\p{\frac{P_{m}}{P_x}\p{1-\rho^2_{m}}}\nonumber\\
&\ \ \  \ \ \ \; \text{s.t.}\ \ \ \ \frac{1}{k}\p{\sum_{m=1}^k\abs{\sigma'_m}\rho_m\sqrt{P_m\tilde{P}_{y,m}} -\frac{1}{2}\abs{\sigma'_m}^2\p{P_m-P_x}-\re\p{\sigma_m^{'*}\sigma_{m}}P_x} =0 \nonumber\\
&\ \ \ \ \ \ \ \ \ \ \ \ \ \ \ \frac{1}{k}\sum_{m=1}^kP_m = P_x.
\label{R_gdef}
\end{align}
To conclude, $Z'\p{\by,\blam}$ is given by (for a typical code and $\by$)
\begin{align}
&\frac{\ln Z'\p{\by,\blam}}{n} \sim
\begin{cases}
F_{\text{par}},\ \ &\ \ R>R_e\vee R_c\\
\Gamma\p{\epsilon_s}-\beta\epsilon_s,\ \ &\ \ R_g<R\leq R_e\\ 
\ln Z'_{c}\p{\by,\blam}/n,\ \ &\ \ \ppp{R_e\leq R\leq R_c}\cup\ppp{R\leq R_g\wedge R_e}\\ 
\end{cases}
\end{align}
where $a\vee b\define\max\p{a,b}$ and $a\wedge b\define\min\p{a,b}$. In the following, the relation
\begin{align}
R_d>0\ \Longrightarrow\ R_e<R_g,
\end{align}
is verified. Recall that $R_d$ follows from the requirement that
\begin{align}
-\p{R+\frac{1}{2}+\frac{\beta}{2n}\norm{\p{\bA-\bA'}\bx_0}^2}&=\frac{\ln Z'_{c}\p{\by,\bze}}{n}\\
&\geq\left.F_{\text{par}}\right|_{\blamt=0} \\
&= \Gamma\p{\epsilon^*} - \left.\beta\epsilon^*\right|_{\blamt=0}=-R_e- \left.\beta\epsilon^*\right|_{\blamt=0},
\end{align}
which can be rewritten as
\begin{align}
R\leq R_e + \left.\beta\epsilon^*\right|_{\blamt=0} - \p{\frac{1}{2}+\frac{\beta}{2n}\norm{\p{\bA'-\bA}\bx_0}^2},
\end{align}
and thus $R_d$ is given by
\begin{align}
R_d = \left.\beta\epsilon^*\right|_{\blamt=0} - \p{\frac{1}{2}+\frac{\beta}{2n}\norm{\p{\bA'-\bA}\bx_0}^2}.
\end{align}
Accordingly, $R_d>0$ is equivalent to
\begin{align}
\left.\beta\epsilon^*\right|_{\blamt=0} > \frac{1}{2}+\frac{\beta}{2n}\norm{\p{\bA-\bA'}\bx_0}^2.
\end{align}
Now, within the range $R\leq R_e$, $Z'_{c}\p{\by,\bze}\geq Z'_{e}\p{\by,\bze}$ if \eqref{glassDom} 
\begin{align}
\left.\beta\epsilon_s\right|_{\blamt=0}>\frac{1}{2}+\frac{\beta}{2n}\norm{\p{\bA-\bA'}\bx_0}^2.
\label{dff}
\end{align}
However, $R\leq R_e$ is equivalent to $\epsilon^*\notin\mathscr{E}$, and thus $\left.\epsilon_s\right|_{\blamt=0}\geq\left.\epsilon^*\right|_{\blamt=0}$. Therefore, if $R_d>0$, the following holds 
\begin{align}
\left.\beta\epsilon_s\right|_{\blamt=0}\geq\left.\beta\epsilon^*\right|_{\blamt=0}> \frac{1}{2}+\frac{\beta}{2n}\norm{\p{\bA-\bA'}\bx_0}^2.
\end{align}
Whence, \eqref{dff} holds true within the whole region $R\leq R_e$, and therefore $R_e< R_g$. Thus, for $R_d>0$, $Z'\p{\by,\blam}$ becomes (for a typical code realization $\by$)
\begin{align}
&\frac{\ln Z'\p{\by,\blam}}{n} \sim
\begin{cases}
F_{\text{par}},\ \ &\ \ R>R_c\\
\ln Z'_{c}\p{\by,\blam}/n,\ \ &\ \ R\leq R_c 
\end{cases}.
\end{align}
If however, $R_d<0$, then $R_g\leq R_e$, and hence (for a typical code realization $\by$)
\begin{align}
&\frac{\ln Z'\p{\by,\blam}}{n} \sim
\begin{cases}
F_{\text{par}},\ \ &\ \ R>R_e\\
-R-\beta\epsilon_s,\ \ &\ \ R_g<R\leq R_e\\ 
\ln Z'_{c}\p{\by,\blam}/n,\ \ &\ \ R\leq R_g
\end{cases}.
\label{PartitionFunctionLambdaDep}
\end{align}
Recall that $\epsilon_s$ is the solution of the equation
\begin{align}
\Gamma\p{\epsilon_s} + R=0,
\label{gammaepsilonEql}
\end{align}
where $\Gamma\p{\epsilon_s}$ is given by
\begin{align}
&\operatorname*{max}_{\ppp{P_i}_{i=1}^k,\ppp{\rho_i}_{i=1}^k}\;\frac{h}{2}\sum_{m=1}^k\ln\p{\frac{P_{m}}{P_x}\p{1-\rho^2_{m}}}\nonumber\\
&\ \ \  \ \ \ \; \text{s.t.}\ \ \ \ \frac{1}{k}\p{\sum_{m=1}^k\abs{\sigma'_m}\rho_m\sqrt{P_m\tilde{P}_{y,m}}-\frac{1}{2}\abs{\sigma_{m}}^2P_x -\frac{1}{2}\abs{\sigma'_m}^2P_m -\frac{1}{2\beta}} =-\epsilon_s \nonumber\\
&\ \ \ \ \ \ \ \ \ \ \ \ \ \ \ \frac{1}{k}\sum_{m=1}^kP_m = P_x.
\end{align}
Similarly to the optimization problem in \eqref{maxNew77}, the above maximization problem can be rewritten as
\begin{align}
&\operatorname*{max}_{\ppp{P_i}_{i=1}^k,\ppp{\rho_i}_{i=1}^k}\;\frac{h}{2}\sum_{m=1}^k\ln\ppp{\frac{P_{m}}{P_x}\pp{1-\p{\frac{\frac{1}{2}\abs{\sigma_{m}}^2P_x +\frac{1}{2}\abs{\sigma'_m}^2P_m +\frac{1}{2\beta}-\mu_m\epsilon_s}{\sigma'_m\sqrt{P_m\tilde{P}_{y,m}}}}^2}}\nonumber\\
&\ \ \  \ \ \ \; \text{s.t.}\ \ \ \ \frac{1}{k}\sum_{m=1}^kP_m = P_x,\;\frac{1}{k}\sum_{m=1}^k\mu_m = 1.
\end{align}
Accordingly, the derivative of the objective function w.r.t. $P_m$ vanishes at
\begin{align}
P_m^* = \frac{4\alpha_1\epsilon_s^2+\abs{\sigma'_m}^2\alpha_2\p{\tilde{P}_{y,m}\alpha_2+2\epsilon_s}}{\p{\abs{\sigma'_m}^2\alpha_2+2\alpha_1\epsilon_s}^2}
\label{opglassysol1l}
\end{align}
and the derivative w.r.t. $\mu_m$ it vanishes at
\begin{align}
\mu_m^* =& \frac{4\alpha_1^2\epsilon_s^2\p{1+P_x\beta\abs{\sigma_{m}}^2}+4\abs{\sigma'_m}^2\alpha_1\epsilon_s\p{\alpha_2-\beta\tilde{P}_{y,m}\alpha_2+\beta\epsilon_s+P_x\beta\alpha_2\abs{\sigma_{m}}^2}}{2\beta\epsilon_s\p{\abs{\sigma'_m}^2\alpha_2+2\alpha_1\epsilon_s}^2}\nonumber\\
& + \frac{\abs{\sigma'_m}^4\alpha_2\p{\alpha_2-\beta\tilde{P}_{y,m}\alpha_2 + 2\beta\epsilon_s+P_x\beta\alpha_2\abs{\sigma_{m}}^2}}{2\beta\epsilon_s\p{\abs{\sigma'_m}^2\alpha_2+2\alpha_1\epsilon_s}^2}
\label{opglassysol2l}
\end{align}
where $\alpha_1$ is chosen such that $kP_x = \sum_mP_m^*$, and $\alpha_2$ is chosen such that $k = \sum_m\mu_m^*$. Substituting the above maximizers in the objective function one obtains
\begin{align}
\Gamma\p{\epsilon_s} =\frac{h}{2}\sum_{m=1}^k\ln\p{\frac{2\epsilon_s}{P_x\abs{\sigma'_m}^2\alpha_2+2P_x\alpha_1\epsilon_s}}.
\label{gammaSolglassy}
\end{align}
For completeness, a closed-form expression for $R_g$ is derived. Based on \eqref{Rgdefde}
\begin{align}
R_g = -\left.\Gamma\p{\frac{1}{2\beta}+\frac{h}{2}\sum_{m=1}^k\abs{\sigma'_m-\sigma_{m}}^2P_x}\right|_{\blamt=0}.
\end{align}
Using \eqref{gammaSolglassy}, and upon taking the limit $h\to0$ (while using Szeg\"{o}'s theorem, as was done in \eqref{R_ddef})
\begin{align}
R_g = -\frac{1}{4\pi}\int_0^{2\pi}\ln\p{\frac{2\tilde{\epsilon}}{P_x\abs{\Hmat'\p{\omega}}^2\tilde{\alpha}_2+2P_x\tilde{\alpha}_1\tilde{\epsilon}}}\mathrm{d}\omega
\label{RgSzego}
\end{align}
where $\tilde{\alpha}_1$ and $\tilde{\alpha}_2$ solve the simultaneous equations
\begin{align}
&\frac{1}{2\pi}\int_0^{2\pi}\frac{4\tilde{\alpha}_1\tilde{\epsilon}^2+\abs{\Hmat'\p{\omega}}^2\tilde{\alpha}_2\pp{\p{\abs{\Hmat\p{\omega}}^2P_x+\frac{1}{\beta}}\tilde{\alpha}_2+2\tilde{\epsilon}}}{\p{\abs{\Hmat'\p{\omega}}^2\tilde{\alpha}_2+2\tilde{\alpha}_1\tilde{\epsilon}}^2}\mathrm{d}\omega = P_x\\
&\frac{1}{2\pi}\int_0^{2\pi}\frac{4\tilde{\alpha}_1^2\tilde{\epsilon}^2\p{1+P_x\beta\abs{\Hmat\p{\omega}}^2}+4\abs{\Hmat'\p{\omega}}^2\tilde{\alpha}_1\tilde{\epsilon}^2\beta+2\abs{\Hmat'\p{\omega}}^4\tilde{\alpha}_2\tilde{\epsilon}\beta}{2\beta\tilde{\epsilon}\p{\abs{\Hmat'\p{\omega}}^2\tilde{\alpha}_2+2\tilde{\alpha}_1\tilde{\epsilon}}^2}\mathrm{d}\omega = 1,
\end{align}
and
\begin{align}
\tilde{\epsilon} = \frac{1}{2\beta}+\frac{P_x}{4\pi}\int_0^{2\pi}\abs{\Hmat'\p{\omega}-\Hmat\p{\omega}}^2\mathrm{d}\omega.
\end{align}

Obtaining $Z'\p{\by,\blam}$, using the tools presented in Subsection \ref{sec:back}, the MSE is now derived. The MSE estimator of the $i$th component (\emph{chip}) of $\bx'$, within the $q$th bin, is given by the derivative of $Z'\p{\by,\blam}$ w.r.t. $\lambda_{q_i}$ evaluated at $\blam=\bze$\footnote{A very similar analysis applies also to the derivative $\frac{\partial}{\partial\lambda_i}\ln Z\p{\by,\blam}$, which is essentially a weighted average over $x_i$ with weights proportional to $\bE{\calN\p{\epsilon}}e^{-\beta\epsilon}$ for $\epsilon\in\mathscr{E}$. Thus, the exponentially dominant weight is due to the term that maximizes the exponent \cite{Neri1,Neri2}. Hence, in this case, the commutativity between the derivative w.r.t. $\blam$ and the limit $n\to\infty$ is legitimate. Another approach to justify the interchange of the order of these operations is to use well-known results (for example, \cite[Ch. 16]{Vladimir},\cite{Frink,Steinlage}) on functional properties of a limit function, which are applicable in our case due to the uniform convergence of the various relevant terms (see Appendix \ref{app:order}).}. The derivative of $F_{\text{par}}$ is given by
\begin{align}
\left.\frac{\partial F_{\text{par}}}{\partial\lambda_{q_i}}\right|_{\blamt=0} = -\left.\frac{h}{2}\sum_{l=1}^k\ppp{\frac{P_x}{P_x\gamma_0+P_x\abs{\sigma'_l}^2\beta}\frac{\partial \gamma}{\partial\lambda_{q_i}}\right|_{\blamt=0} + \left.\frac{\partial \psi_l}{\partial\lambda_{q_i}}\right|_{\blamt=0}}.
\label{ParDiv}
\end{align}
Let $x_q\define\left.\partial \gamma/\partial\lambda_{q_i}\right|_{\blamt=0}$. Using \eqref{PtermAnyLamb}, one obtains
\begin{align}
\left.\frac{\partial P_q^*}{\partial\lambda_{q_i}}\right|_{\blamt=0} = &\frac{\p{\abs{\sigma'_q}^2\beta^2\left.\frac{\partial \tilde{P}_{y,q}}{\partial\lambda_{q_i}}\right|_{\blamt=0}+x_q}\p{\abs{\sigma'_q}^2+\gamma_0}^2}{\p{\abs{\sigma'_q}^2+\gamma_0}^4}\nonumber\\
&-\frac{2\p{\abs{\sigma'_q}^2+\gamma_0}\p{\abs{\sigma'_q}^2\beta\p{1+\left.\beta\tilde{P}_{y,q}\right|_{\lambda = 0}}+\gamma_0}x_q}{\p{\abs{\sigma'_q}^2+\gamma_0}^4},
\label{optimalpq}
\end{align}
and for $l\neq q$
\begin{align}
\left.\frac{\partial P_l^*}{\partial\lambda_{q_i}}\right|_{\blamt=0} =\frac{\p{\abs{\sigma'_l}^2+\gamma_0}^2x_q-2\p{\abs{\sigma'_l}^2+\gamma_0}\p{\abs{\sigma'_l}^2\beta\p{1+\left.\beta\tilde{P}_{y,l}\right|_{\lambda = 0}}+\gamma_0}x_q}{\p{\abs{\sigma'_l}^2+\gamma_0}^4}
\label{optimalpl}
\end{align}
where by using \eqref{P_yEqautionMis}
\begin{align}
&\left.\frac{\partial \tilde{P}_{y,q}}{\partial\lambda_{q_i}}\right|_{\blamt=0} = \frac{2}{\beta\sigma'_q}\frac{y_{q_i}}{n_b}\\
&\left.\beta\tilde{P}_{y,q}\right|_{\lambda = 0} = 1+\abs{\sigma_{q}}^2P_x\beta.
\end{align}
Since $\gamma_0$ is chosen to satisfy $\left.\sum_rP_r^*\right|_{\blamt=0} = kP_x$, it follows that
\begin{align}
0 &= \left.\frac{\partial}{\partial\lambda_{q_i}}\sum_{r=1}^kP_r^*\right|_{\blamt=0} = \left.\sum_{r=1}^k\frac{\partial P_r^*}{\partial\lambda_{q_i}}\right|_{\blamt=0}\\
& = \frac{\abs{\sigma'_q}^2\beta^2\left.\frac{\partial \tilde{P}_{y,q}}{\partial\lambda_{q_i}}\right|_{\blamt=0}\p{\abs{\sigma'_q}^2+\gamma_0}^2}{\p{\abs{\sigma'_q}^2+\gamma_0}^4}\nonumber\\
& + x_q\sum_{r=1}^k\frac{\p{\abs{\sigma'_r}^2+\gamma_0}^2-2\p{\abs{\sigma'_r}^2+\gamma_0}\p{\abs{\sigma'_r}^2\beta\p{1+\left.\beta\tilde{P}_{y,r}\right|_{\lambda = 0}}+\gamma_0}}{\p{\abs{\sigma'_r}^2+\gamma_0}^4},
\end{align}
and thus
\begin{align}
x_q = \frac{\abs{\sigma'_q}^2\beta^2\left.\frac{\partial \tilde{P}_{y,q}}{\partial\lambda_{q_i}}\right|_{\blamt=0}}{\p{\abs{\sigma'_q}^2+\gamma_0}^2C} = \frac{2\sigma_q^{'*}\beta}{n_b\p{\abs{\sigma'_q}^2+\gamma_0}^2C}y_{q_i},
\label{div11}
\end{align}
where
\begin{align}
C \define \sum_{r=1}^k\frac{\p{\abs{\sigma'_r}^2+\gamma_0}-2\p{\abs{\sigma'_r}^2\beta\p{1+\left.\beta\tilde{P}_{y,r}\right|_{\lambda = 0}}+\gamma_0}}{\p{\abs{\sigma'_r}^2+\gamma_0}^3}.
\end{align}
Next, $\left.\partial \psi_l/\partial\lambda_{q_i}\right|_{\blamt=0}$, is calculated. Using the definition of $\psi_m$ in \eqref{psiM} one obtains
\begin{align}
\left.\frac{\partial \psi_q}{\partial\lambda_{q_i}}\right|_{\blamt=0} = \abs{\sigma'_q}^2\beta\left.\frac{\partial P_q^*}{\partial\lambda_{q_i}}\right|_{\blamt=0}-\frac{2\beta^2\abs{\sigma'_q}^2\p{\left.\frac{\partial P_{y,q}^*}{\partial\lambda_{q_i}}\right|_{\blamt=0}\left.P_q^*\right|_{\blamt=0}+\left.\tilde{P}_{y,q}\right|_{\blamt=0}\left.\frac{\partial P_q^*}{\partial\lambda_{q_i}}\right|_{\blamt=0}}}{\sqrt{1+4\beta^2\abs{\sigma'_q}^2\left.\tilde{P}_{y,q}\right|_{\blamt=0}\left.P_q^*\right|_{\blamt=0}}},
\label{div21}
\end{align}
and for $l\neq q$
\begin{align}
\left.\frac{\partial \psi_l}{\partial\lambda_{q_i}}\right|_{\blamt=0} = \abs{\sigma'_l}^2\beta\left.\frac{\partial P_l^*}{\partial\lambda_{q_i}}\right|_{\blamt=0}-\frac{2\beta^2\abs{\sigma'_l}^2\left.\tilde{P}_{y,l}\right|_{\blamt=0}\left.\frac{\partial P_l^*}{\partial\lambda_{q_i}}\right|_{\blamt=0}}{\sqrt{1+4\beta^2\abs{\sigma'_l}^2\left.\tilde{P}_{y,l}\right|_{\blamt=0}\left.P_l^*\right|_{\blamt=0}}}.
\label{div22}
\end{align}
Substituting \eqref{div11}, \eqref{div21} and \eqref{div22} in \eqref{ParDiv}, the MSE estimator in the range $R>R_c$ and $R>R_e$, for $R_d>0$ and $R_d<0$, respectively, (note that all the terms are dependent on $y_{q_i}$ linearly via $x_q$) is given by\footnote{The relation between the right and the left hand sides of \eqref{asyeqrand} is an asymptotic equality between two random variables, in the sense that the difference between them converges to zero w.p. 1.}
\begin{align}
\bE'\ppp{X_{q_i}\mid\bY}\sim\left.\frac{\partial nF_{\text{par}}}{\partial\lambda_{q_i}}\right|_{\blamt=0} &= \xi_{1,q}Y_{q_i}
\label{asyeqrand}
\end{align}
where
\begin{align}
&\xi_{1,q}=-\frac{1}{2}\frac{2\sigma_q^{'*}\beta}{\p{\abs{\sigma'_q}^2+\gamma_0}^2C}\sum_{l=1}^k\ppp{\frac{P_x}{P_x\gamma_0+P_x\abs{\sigma'_l}^2\beta}+B_l-C_l}\nonumber\\
&\ \ \ \ \ \ \ \ -\frac{1}{2}\left(1-\frac{2\p{\abs{\sigma'_q}^2+\gamma_0}^2\left.P_q^*\right|_{\blamt=0}}{\sqrt{1+4\beta^2\abs{\sigma'_q}^2\left.\tilde{P}_{y,q}\right|_{\blamt=0}\left.P_q^*\right|_{\blamt=0}}}\right.\nonumber\\
&\ \ \ \ \ \ \ \ \ \ \ \ \ \ \ \ \left.-\frac{2\beta^2\abs{\sigma'_q}^2\left.\tilde{P}_{y,q}\right|_{\blamt=0}}{\sqrt{1+4\beta^2\abs{\sigma'_q}^2\left.\tilde{P}_{y,q}\right|_{\blamt=0}\left.P_q^*\right|_{\blamt=0}}}\right)\frac{2\sigma_q^{'*}\beta}{\p{\abs{\sigma'_q}^2+\gamma_0}^2},
\label{xi1coeff}
\end{align}
with
\begin{align}
B_l &= \frac{\p{\abs{\sigma'_l}^2+\gamma_0}^2-2\p{\abs{\sigma'_l}^2+\gamma_0}\p{\abs{\sigma'_l}^2\beta\p{1+\left.\beta\tilde{P}_{y,l}\right|_{\lambda = 0}}+\gamma_0}}{\p{\abs{\sigma'_l}^2+\gamma_0}^4}\\
C_l &= \frac{2\beta^2\abs{\sigma'_l}^2\left.\tilde{P}_{y,l}\right|_{\blamt=0}B_l}{\sqrt{1+4\beta^2\abs{\sigma'_l}^2\left.\tilde{P}_{y,l}\right|_{\blamt=0}\left.P_l^*\right|_{\blamt=0}}}.
\label{CoeffPara}
\end{align}
Next, the MSE estimator in the region $R_g<R\leq R_e$ for $R_d<0$ is derived. The derivative of the partition function w.r.t. $\lambda_{q_i}$ is given by
\begin{align}
\left.\frac{\partial F_{\text{glas}}}{\partial\lambda_{q_i}}\right|_{\blamt=0} = -\left.\beta\frac{\partial \epsilon_s}{\partial\lambda_{q_i}}\right|_{\blamt=0}.
\label{GlsDiv}
\end{align}
Recall that $\epsilon_s$ is the solution of the equation
\begin{align}
\Gamma\p{\epsilon_s} + R=0,
\label{gammaepsilonEq}
\end{align}
where $\Gamma\p{\epsilon_s}$ is given as
\begin{align}
&\operatorname*{max}_{\ppp{P_i}_{i=1}^k,\ppp{\rho_i}_{i=1}^k}\;\frac{h}{2}\sum_{m=1}^k\ln\p{\frac{P_{m}}{P_x}\p{1-\rho^2_{m}}}\nonumber\\
&\ \ \  \ \ \ \; \text{s.t.}\ \ \ \ \frac{1}{k}\p{\sum_{m=1}^k\abs{\sigma'_m}\rho_m\sqrt{P_m\tilde{P}_{y,m}}-\frac{1}{2}\abs{\sigma_{m}}^2P_x -\frac{1}{2}\abs{\sigma'_m}^2P_m -\frac{1}{2\beta}} =-\epsilon_s \nonumber\\
&\ \ \ \ \ \ \ \ \ \ \ \ \ \ \ \frac{1}{k}\sum_{m=1}^kP_m = P_x.
\label{nnn}
\end{align}
Similarly to the optimization problem in \eqref{maxNew77}, the maximization problem in \eqref{nnn} can be rewritten as
\begin{align}
&\operatorname*{max}_{\ppp{P_i}_{i=1}^k,\ppp{\rho_i}_{i=1}^k}\;\frac{h}{2}\sum_{m=1}^k\ln\ppp{\frac{P_{m}}{P_x}\pp{1-\p{\frac{\frac{1}{2}\abs{\sigma_{m}}^2P_x +\frac{1}{2}\abs{\sigma'_m}^2P_m +\frac{1}{2\beta}-\mu_m\epsilon_s}{\sigma'_m\sqrt{P_m\tilde{P}_{y,m}}}}^2}}\nonumber\\
&\ \ \  \ \ \ \; \text{s.t.}\ \ \ \ \frac{1}{k}\sum_{m=1}^kP_m = P_x,\;\frac{1}{k}\sum_{m=1}^k\mu_m = 1.
\label{nnn2}
\end{align}
The derivative of the objective function w.r.t. $P_m$ vanishes at
\begin{align}
P_m^* = \frac{4\alpha_1\epsilon_s^2+\abs{\sigma'_m}^2\alpha_2\p{\tilde{P}_{y,m}\alpha_2+2\epsilon_s}}{\p{\abs{\sigma'_m}^2\alpha_2+2\alpha_1\epsilon_s}^2}
\label{opglassysol1}
\end{align}
and the derivative w.r.t. $\mu_m$, vanishes at
\begin{align}
\mu_m^* =& \frac{4\alpha_1^2\epsilon_s^2\p{1+P_x\beta\abs{\sigma_{m}}^2}+4\abs{\sigma'_m}^2\alpha_1\epsilon_s\p{\alpha_2-\beta\tilde{P}_{y,m}\alpha_2+\beta\epsilon_s+P_x\beta\alpha_2\abs{\sigma_{m}}^2}}{2\beta\epsilon_s\p{\abs{\sigma'_m}^2\alpha_2+2\alpha_1\epsilon_s}^2}\nonumber\\
& + \frac{\abs{\sigma'_m}^4\alpha_2\p{\alpha_2-\beta\tilde{P}_{y,m}\alpha_2 + 2\beta\epsilon_s+P_x\beta\alpha_2\abs{\sigma_{m}}^2}}{2\beta\epsilon_s\p{\abs{\sigma'_m}^2\alpha_2+2\alpha_1\epsilon_s}^2}
\label{opglassysol2}
\end{align}
where $\alpha_1$ is chosen to such that $kP_x = \sum_mP_m^*$, and $\alpha_2$ is chosen such that $k = \sum_m\mu_m^*$. Substituting the above maximizers in the objective function of \eqref{nnn2} one obtains
\begin{align}
\Gamma\p{\epsilon_s} =\frac{h}{2}\sum_{m=1}^k\ln\p{\frac{2\epsilon_s}{P_x\abs{\sigma'_m}^2\alpha_2+2P_x\alpha_1\epsilon_s}}.
\end{align} 
Thus, \eqref{gammaepsilonEq} becomes
\begin{align}
\frac{h}{2}\sum_{m=1}^k\ln\p{\frac{2\epsilon_s}{P_x\abs{\sigma'_m}^2\alpha_2+2P_x\alpha_1\epsilon_s}} + R=0.
\label{gammaepsilonEq2}
\end{align}
Let $x_q \define \left.\partial\epsilon_s/\partial\lambda_{q_i}\right|_{\blamt=0}$, $\dot{\alpha}_{1,q} \define \left.\partial\alpha_1/\partial\lambda_{q_i}\right|_{\blamt=0}$, $\dot{\alpha}_{2,q} \define \left.\partial\alpha_2/\partial\lambda_{q_i}\right|_{\blamt=0}$, $\alpha_{1,0} \define \left.\alpha_1\right|_{\blamt=0}$, $\alpha_{2,0} \define \left.\alpha_2\right|_{\blamt=0}$ and $\epsilon_{s,0} \define \left.\epsilon_s\right|_{\blamt=0}$. Differentiating \eqref{gammaepsilonEq2} w.r.t. $\lambda_{q_i}$ one obtains
\begin{align}
0&=\sum_{m=1}^k\frac{P_x\abs{\sigma'_m}^2\alpha_{2,0}+2P_x\alpha_{1,0}\epsilon_{s,0}}{2\epsilon_{s,0}}\nonumber\\
&\times\frac{2\p{P_x\abs{\sigma'_m}^2\alpha_{2,0}+2P_x\alpha_{1,0}\epsilon_{s,0}}x_q-2\epsilon_{s,0}\p{P_x\abs{\sigma'_m}^2\dot{\alpha}_{2,q}+2P_x\dot{\alpha}_{1,q}\epsilon_{s,0}+2P_x\alpha_{1,0}x_q}}{\p{P_x\abs{\sigma'_m}^2\alpha_{2,0}+2P_x\alpha_{1,0}\epsilon_{s,0}}^2}\\
&=\sum_{m=1}^k\frac{P_x\abs{\sigma'_m}^2\alpha_{2,0}x_q-\epsilon_{s,0}\p{P_x\abs{\sigma'_m}^2\dot{\alpha}_{2,q}+2P_x\dot{\alpha}_{1,q}\epsilon_{s,0}}}{\epsilon_{s,0}\p{P_x\abs{\sigma'_m}^2\alpha_{2,0}+2P_x\alpha_{1,0}\epsilon_{s,0}}},
\end{align}
and thus
\begin{align}
x_q = \frac{U_1}{U_2}
\label{xsolsol}
\end{align}
where
\begin{align}
U_1\define\sum_{m=1}^k\frac{P_x\abs{\sigma'_m}^2\dot{\alpha}_{2,q}+2P_x\dot{\alpha}_{1,q}\epsilon_{s,0}}{P_x\abs{\sigma'_m}^2\alpha_{2,0}+2P_x\alpha_{1,0}\epsilon_{s,0}}
\end{align}
and
\begin{align}
U_2\define\sum_{m=1}^k\frac{P_x\abs{\sigma'_m}^2\alpha_{2,0}}{\epsilon_{s,0}\p{P_x\abs{\sigma'_m}^2\alpha_{2,0}+2P_x\alpha_{1,0}\epsilon_{s,0}}}.
\end{align}
Hence, in order to calculate $x_q$ one needs to find $\epsilon_{s,0},\alpha_{1,0},\alpha_{2,0},\dot{\alpha}_{1,q},\dot{\alpha}_{2,q}$. The terms $\epsilon_{s,0},\alpha_{1,0},\alpha_{2,0}$ are calculated using the set of simultaneous equations
\begin{subequations}
\begin{align}
&\left.\Gamma\p{\epsilon_s}\right|_{\blamt=0}+R=0\\
&\frac{1}{k}\sum_{m=1}^k\left.P_m^*\right|_{\blamt=0} = P_x\\
&\frac{1}{k}\sum_{m=1}^k\left.\mu_m^*\right|_{\blamt=0} = 1,
\end{align}
\end{subequations}
and accordingly, the terms $\dot{\alpha}_{1,q},\dot{\alpha}_{2,q}$ are calculated using the set of equations
\begin{subequations}
\begin{align}
\label{PowerLageq}
&\sum_{m=1}^k\left.\frac{\partial P_m^*}{\partial\lambda_{q_i}}\right|_{\blamt=0} = 0\\
&\sum_{m=1}^k\left.\frac{\partial \mu_m^*}{\partial\lambda_{q_i}}\right|_{\blamt=0} = 0.
\label{MuLageq}
\end{align}
\end{subequations}
Given $\epsilon_{s,0},\alpha_{1,0},\alpha_{2,0}$, closed-form expressions for $\dot{\alpha}_{1,q},\dot{\alpha}_{2,q}$ are now derived. Using \eqref{opglassysol1}, \eqref{PowerLageq} can be written as
\begin{align}
\eta_1\dot{\alpha}_{1,q}+\eta_2\dot{\alpha}_{2,q}+\eta_3x_q + \eta_q\abs{\sigma'_q}^2\dot{\tilde{P}}_{y,q} = 0
\label{diffeq1}
\end{align}
where $\dot{\tilde{P}}_{y,q}\define \left.\partial\tilde{P}_{y,q}/\partial\lambda_{q_i}\right|_{\blamt=0}$, and
\begin{align}
&\eta_1 \define \sum_{m=1}^k\frac{4D_m\epsilon_{s,0}^2-4R_m\epsilon_{s,0}}{D_m^3}\\
&\eta_2 \define \sum_{m=1}^k\frac{D_m\abs{\sigma'_m}^2\pp{\p{\abs{\sigma_{m}}^2P_x+\frac{1}{\beta}}\alpha_{2,0}+2\epsilon_{s,0}}}{D_m^3}\nonumber\\
&\ \ \ \ \  \ \ \ \ \ \ \ \ +\frac{\abs{\sigma'_m}^2\alpha_{2,0}\p{\abs{\sigma_{m}}^2P_x+\frac{1}{\beta}}D_m-2R_m\abs{\sigma'_m}^2}{D_m^3}\\
&\eta_3 \define \sum_{m=1}^k\frac{8D_m\alpha_{1,0}\epsilon_{s,0}+2D_m\abs{\sigma'_m}^2\alpha_{2,0}-4R_m\alpha_{1,0}}{D_m^3}\\
&\eta_q \define \frac{\alpha_{2,0}^2}{D_q^2},
\end{align}
in which
\begin{align}
&D_m \define \abs{\sigma'_m}^2\alpha_{2,0}+2\alpha_{1,0}\epsilon_{s,0}\\
&R_m \define 4\alpha_{1,0}\epsilon_{s,0}^2 + \abs{\sigma'_m}^2\alpha_{2,0}\pp{\alpha_{2,0}\p{\abs{\sigma_{m}}^2P_x+2\epsilon_{s,0}}}.
\end{align}
Similarly, using \eqref{opglassysol2}, \eqref{MuLageq} can be written as
\begin{align}
\gamma_1\dot{\alpha}_{1,q}+\gamma_2\dot{\alpha}_{2,q}+\gamma_3x_q + \gamma_q\abs{\sigma'_q}^2\dot{\tilde{P}}_{y,q} = 0
\label{diffeq2}
\end{align}
where
\begin{align}
&\gamma_1 \define \sum_{m=1}^k\frac{8\alpha_{1,0}\epsilon_{s,0}^2\p{1+P_x\beta\abs{\sigma_{m}}^2}+4\beta\abs{\sigma'_m}^2\epsilon_{s,0}^2}{K_m}\nonumber\\
& \ \ \ \ \ \ \ \ \ \ \ \ \ -\frac{8T_m\beta\epsilon_{s,0}^2\p{\abs{\sigma'_m}^2\alpha_{2,0}+2\alpha_{1,0}\epsilon_{s,0}}}{K_m^2}\\
&\gamma_2 \define \sum_{m=1}^k\frac{2K_m\beta\epsilon_{s,0}\abs{\sigma'_m}^4-4T_m\beta\epsilon_{s,0}\p{\abs{\sigma'_m}^2\alpha_{2,0}+2\alpha_{1,0}\epsilon_{s,0}}\abs{\sigma'_m}^2}{K_m^2}\\
&\gamma_3 \define \sum_{m=1}^k\frac{8\alpha_{1,0}^2\epsilon_{s,0}\p{1+P_x\beta\abs{\sigma_{m}^2}}+8\beta\epsilon_{s,0}\abs{\sigma'_m}^2\alpha_{1,0}+2\beta\alpha_{2,0}\abs{\sigma'_m}^4}{K_m}\nonumber\\
&-\frac{T_m\pp{2\beta\p{\abs{\sigma'_m}^2\alpha_{2,0}+2\alpha_{1,0}\epsilon_{s,0}}^2+8\beta\epsilon_{s,0}\alpha_{1,0}\p{\abs{\sigma'_m}^2\alpha_{2,0}+2\alpha_{1,0}\epsilon_{s,0}}}}{K_m^2}\\
&\gamma_q \define \frac{-4\beta\alpha_{1,0}\epsilon_{s,0}\alpha_{2,0}-\beta\alpha_{2,0}^2\abs{\sigma'_q}^2}{K_q^2},
\end{align}
in which
\begin{align}
&K_m \define 2\beta\epsilon_{s,0}\p{\abs{\sigma'_m}^2\alpha_{2,0}+2\alpha_{1,0}\epsilon_{s,0}}^2\\
&T_m \define 4\alpha_{1,0}^2\epsilon_{s,0}^2\p{1+P_x\beta\abs{\sigma_{m}}^2}+4\beta\abs{\sigma'_m}^2\alpha_{1,0}\epsilon_{s,0}^2+2\abs{\sigma'_m}^4\alpha_{2,0}\beta\epsilon_{s,0}.
\end{align}
Thus, solving the pair of equations, \eqref{diffeq1} and \eqref{diffeq2}, one obtains
\begin{align}
&\dot{\alpha}_{1,q} = \frac{\gamma_3/\gamma_2-\eta_3/\eta_2}{\eta_1/\eta_2-\gamma_1/\gamma_2}x_q + \frac{\gamma_q/\gamma_2-\eta_q/\eta_2}{\eta_1/\eta_2-\gamma_1/\gamma_2}\abs{\sigma'_q}^2\dot{\tilde{P}}_{y,q} \define r_1x_q + J_{1q}\abs{\sigma'_q}^2\dot{\tilde{P}}_{y,q}\\
&\dot{\alpha}_{2,q} = \frac{\gamma_3/\gamma_1-\eta_3/\eta_1}{\eta_2/\eta_1-\gamma_2/\gamma_1}x_q + \frac{\gamma_q/\gamma_1-\eta_q/\eta_1}{\eta_2/\eta_1-\gamma_2/\gamma_1}\abs{\sigma'_q}^2\dot{\tilde{P}}_{y,q}\define r_2x_q + J_{2q}\abs{\sigma'_q}^2\dot{\tilde{P}}_{y,q}.
\end{align}
Substituting $\dot{\alpha}_{1,q}$ and $\dot{\alpha}_{2,q}$ in \eqref{xsolsol}, simple rearrangement of terms reveals that
\begin{align}
x_q = \frac{J_{1,q}\sum_{m=1}^k2\epsilon_{s,0}^2P_x/Q_m+J_{2,q}\sum_{m=1}^k2\epsilon_{s,0}P_x\abs{\sigma'_m}^2/Q_m}{V\p{1-F}}\abs{\sigma'_q}^2\dot{\tilde{P}}_{y,q}
\end{align}
where
\begin{align}
&V \define \sum_{m=1}^k\frac{P_x\abs{\sigma'_m}^2\alpha_{2,0}}{\epsilon_{s,0}\p{P_x\abs{\sigma'_m}^2\alpha_{2,0}+2P_x\alpha_{1,0}\epsilon_{s,0}}}\\
&F \define \frac{1}{V}\sum_{m=1}^k\frac{\epsilon_{s,0}\p{P_x\abs{\sigma'_m}^2r_2 + 2P_x\epsilon_{s,0}r_1}}{\epsilon_{s,0}\p{P_x\abs{\sigma'_m}^2\alpha_{2,0}+2P_x\alpha_{1,0}\epsilon_{s,0}}}\\
&Q_m \define \epsilon_{s,0}\p{P_x\abs{\sigma'_m}^2\alpha_{2,0}+2P_x\alpha_{1,0}\epsilon_{s,0}}.
\end{align}
Let
\begin{align}
J_q &\define  \frac{J_{1,q}\sum_{m=1}^k2\epsilon_{s,0}^2P_x/Q_m+J_{2,q}\sum_{m=1}^k2\epsilon_{s,0}P_x\abs{\sigma'_m}^2/Q_m}{V\p{1-F}},
\label{GlassEstCoee}
\end{align}
and so
\begin{align}
x_q = J_q\abs{\sigma'_q}^2\dot{\tilde{P}}_{y,q} = J_q\frac{2\sigma'^*_q}{n_b\beta}y_{q_i}.
\end{align}
Therefore,
\begin{align}
\bE'\ppp{X_{q_i}\mid \bY}&\sim\left.\frac{\partial nF_{\text{glas}}}{\partial\lambda_{q_i}}\right|_{\blamt=0} \\
&= -\left.n\beta\frac{\partial \epsilon_s}{\partial\lambda_{q_i}}\right|_{\blamt=0}=\xi_{q,2}\cdot y_{q_i} .
\end{align}
where
\begin{align}
\xi_{q,2} \define -J_q\frac{2\sigma'^*_q}{h}.
\label{xi2coeff}
\end{align}
Finally, the mismatched MSE estimator in the region $R\leq R_g$ and $R\leq R_c$ for $R_d<0$ and $R_d>0$, respectively, is derived. Based on \eqref{ZQc}, it readily follows that
\begin{align}
\bE'\ppp{X_{q_i}\mid \bY}&\sim\left.\frac{\partial \ln Z_{Q,c}}{\partial\lambda_{q_i}}\right|_{\blamt=0}= X_{q_i}.
\end{align}

To conclude, the mismatched MSE estimator is given as follows.
\newline For $R_d\geq0$ 
 \begin{align}
&\bE'\ppp{X_{q_i}\mid\bY} \sim
\begin{cases}
X_{q_i},\ \ &\ \ R\leq R_c\\
\xi_{q,1} Y_{q_i},\ \ &\ \ R> R_c 
\end{cases}.
\label{MSEestimator1pp}
\end{align}
For $R_d<0$ 
\begin{align}
&\bE'\ppp{X_{q_i}\mid\bY} \sim
\begin{cases}
X_{q_i},\ \ &\ \ R\leq R_g\\
\xi_{q,2} Y_{q_i},\ \ &\ \ R_g<R\leq R_e\\ 
\xi_{q,1} Y_{q_i},\ \ &\ \ R> R_e
\end{cases}
\label{MSEestimator2app}
\end{align}
where the above equalities are asymptotic equalities between two random variables, in the sense that the difference between them converges to zero in probability.

The mismatched MSE is given by
\begin{align}
\text{mse}\p{\bX\mid\bY} &= \sum_{i=1}^n\bE\ppp{X_i^2}-2\re\p{\bE\ppp{\bE\p{X_i\mid\bY}\bE'^*\p{X_i\mid\bY}}}\nonumber\\
&\ \ \ \ \ \ \ \ \ \ \ \ \ \ \ +\bE\ppp{\abs{\bE'\p{X_i\mid\bY}}^2}.
\label{MSEqGen}
\end{align}
Therefore, based on \eqref{MSEqGen}, in order to calculate the MSE, the MMSE estimator should be obtained first. Substituting $\bA = \bA'$ in $R_d$, given in \eqref{R_ddef}, one can see that $R_d=0$. Thus, the MMSE estimator is given by
\begin{align}
\bE\ppp{X_{q_i}\mid\bY}\sim
\begin{cases}
\xi_{1,q}Y_{q_i},\ \ &\ \ R>R_{e}\\
X_{q_i},\ \ &\ \ R\leq R_{e}\\ 
\end{cases}.
\end{align}
In order to find $R_e$, according to \eqref{R_edef}, $\gamma_0$ is needed. However, in this case it can readily be verified that $\gamma_0 = 1/P_x$, and thus
\begin{align}
R_{c,M} \define R_e = \frac{1}{4\pi}\int_0^{2\pi}\ln\p{1+\abs{\Hmat\p{\omega}}^2\beta P_x}\mathrm{d}\omega.
\end{align}
Finally, substitution of $\sigma_{m} = \sigma'_m$ in \eqref{xi1coeff}, reveals that
\begin{align}
\xi_{1,q} = \frac{\beta\sigma_{q}^*P_x}{1+\abs{\sigma_{q}}^2P_x\beta},
\end{align}
and thus
\begin{align}
\bE\ppp{X_{q_i}\mid\bY}\sim
\begin{cases}
\frac{\beta\sigma_{q}^*P_x}{1+\abs{\sigma_{q}}^2P_x\beta}Y_{q_i},\ \ &\ \ R>R_{c,M}\\
X_{q_i},\ \ &\ \ R\leq R_{c,M}\\ 
\end{cases}.
\end{align}
Based on the second term of the sum in \eqref{MSEqGen}, several cases should be considered. For $R_d>0$, since $R_c<R_{c,M}$, there are three regions: $R<R_c$, $R_{c}<R<R_{c,M}$ and $R>R_{c,M}$. For $R<R_c$, both the matched and the mismatched estimators are asymptotically equal to $X_{q_i}$ with high probability, and thus
\begin{align}
\text{mse}\p{\bX\mid\bY} = 0.
\end{align}
For $R_{c}<R<R_{c,M}$ one readily obtains
\begin{align}
\frac{\text{mse}\p{\bX\mid\bY}}{n} = P_x-2h\re\p{\sum_{m=1}^k\xi_{m,1}^*\sigma^*_{m}P_x}+h\sum_{m=1}^k\abs{\xi_{m,1}}^2\p{\abs{\sigma_{m}}^2P_x+\frac{1}{\beta}},
\end{align}
and similarly, for $R_{c,M}<R$,
\begin{align}
\frac{\text{mse}\p{\bX\mid\bY}}{n} &= P_x-2h\re\p{\sum_{m=1}^k\xi_{m,1}^*\frac{\beta\sigma_{m}^*P_x}{1+\abs{\sigma_{m}}^2P_x\beta}\p{\abs{\sigma_{m}}^2P_x+\frac{1}{\beta}}}\nonumber\\
&+h\sum_{m=1}^k\abs{\xi_{m,1}}^2\p{\abs{\sigma_{m}}^2P_x+\frac{1}{\beta}}\\
& = P_x-2h\re\p{\sum_{m=1}^k\xi_{m,1}^*\sigma^*_{m}P_x}+h\sum_{m=1}^k\abs{\xi_{m,1}}^2\p{\abs{\sigma_{m}}^2P_x+\frac{1}{\beta}}.
\end{align}
Thus, the MSE's in the last two ranges are the same. In the same way, the MSE for $R_d<0$ is calculated. For $R\leq R_g$
\begin{align}
\text{mse}\p{\bX\mid\bY} = 0.
\end{align}
For $R_g<R\leq R_e$
\begin{align}
\frac{\text{mse}\p{\bX\mid\bY}}{n} = P_x-2h\re\p{\sum_{m=1}^k\xi_{m,2}^*\sigma^*_{m}P_x}+h\sum_{m=1}^k\abs{\xi_{m,2}}^2\p{\abs{\sigma_{m}}^2P_x+\frac{1}{\beta}}\define\text{mse}_{1},
\end{align}
and for $R> R_e$
\begin{align}
\frac{\text{mse}\p{\bX\mid\bY}}{n} = P_x-2h\re\p{\sum_{m=1}^k\xi_{m,1}^*\sigma^*_{m}P_x}+h\sum_{m=1}^k\abs{\xi_{m,1}}^2\p{\abs{\sigma_{m}}^2P_x+\frac{1}{\beta}}\define\text{mse}_{2}.
\end{align}
Finally, take the limit $h\to0$ (after $n\to\infty$). Using Szeg\"{o}'s theorem (as was done in \eqref{R_ddef}), one obtains ($i=1,2$)
\begin{align}
\lim_{n\to\infty}\text{mse}_{i} = P_x-\frac{P_x}{\pi}\int_0^{2\pi}\re\p{\bXi_i^*\p{\omega}\Hmat^*\p{\omega}}\mathrm{d}\omega+\frac{1}{2\pi}\int_0^{2\pi}\abs{\bXi_i\p{\omega}}^2\p{\abs{\Hmat\p{\omega}}^2+\frac{1}{\beta}}\mathrm{d}\omega
\label{freqrep}
\end{align}
where $\bXi_i\p{\omega}$, for $i=1,2$, are given in \eqref{Chi1} and \eqref{Chi2}. 

In the matched case, for $R>R_{c,M}$
\begin{align}
\text{mmse}\p{\bX\mid\bY} &= \sum_{i=1}^n \bE\ppp{X_i^2}-\bE\ppp{\abs{\bE\ppp{X_i\mid\bY}}^2}\\
& = nP_x-\sum_{m=1}^k\sum_{i_m = 1}^{n_b}\bE\ppp{\abs{\bE\ppp{X_{i_m}\mid\bY}}^2}\\
& = nP_x-\sum_{m=1}^k\abs{\frac{\beta\sigma_m^*P_x}{1+\abs{\sigma_m}^2P_x\beta}}^2n_b\p{\abs{\sigma_m}^2P_x+\frac{1}{\beta}}\\
& = nP_x-n\sum_{m=1}^kh\frac{\abs{\sigma_m}^2P_x^2}{\frac{1}{\beta}+\abs{\sigma_m}^2P_x} = n\sum_{m=1}^kh\frac{P_x}{1+\abs{\sigma_m}^2P_x\beta},
\end{align}
which upon taking the limit $h\to0$, becomes
\begin{align}
\lim_{n\to\infty}\frac{\text{mmse}\p{\bX\mid\bY}}{n} = \frac{1}{2\pi}\int_0^{2\pi}\frac{P_x}{1+\abs{\Hmat\p{\omega}}^2P_x\beta}\mathrm{d}\omega.
\end{align} 

\begin{remark}[Generalization to Any Input Spectral Distribution]
As was mentioned in Section \ref{sec:body}, the above analysis can be modified to hold for any input spectral density $S_x\p{\omega}$. Technically speaking, the following modification should be considered: Let $P_{x,m}$ be the (real) transmitted power over the $m$th bin. Then, because of the separable form of the partition function over the bins, we will essentially obtain exactly the same results with the exception of $P_{x,m}$ instead of $P_x$. Precisely, instead of $P_x$ which appears in the numerator of the logarithm function in \eqref{maxNew88}, one should simply replace it to $P_{x,m}$. Following the same lines of derivation, at the final stage of the refinement of the bin sizes, we will finally obtain the spectral density $S_x\p{\omega}$ as a limit function of $\ppp{P_{x,m}}_m$. 
\end{remark}	

\section{Conclusion}
\label{sec:Conclusion}
In this paper, we considered the problem of mismatched estimation of codewords corrupted by a Gaussian vector channel. The derivation was build upon a simple relation between the MSE and a certain function, which can be viewed as a partition function, and hence be analyzed using methods of statistical mechanics. As a special case, the MMSE estimator and its respective estimation error was derived. In particular, it was shown that the MSE essentially separated into two cases each exhibiting a different behavior: In one case, the MSE exhibits single phase transition, which divides the MSE into ferromagnetic and paramagnetic phases. In the other case, the MSE exhibits two phase transitions, which divide the MSE into three phases consisting of the two previous phases and a third glassy phase. Then, using the theoretical results obtained, a few numerical examples were analyzed, by exploring the phase diagrams and the MSE's as functions of the mismatched parameters in each problem. This leads to physical intuitions regarding the threshold effects and the role of the mismatched measure in creating them. Indeed, it was shown that the aforementioned separation of the MSE is linked to pessimism and optimism behaviors of the receiver, according to its mismatched assumption on the channel. Note that in contrast to previous related papers \cite{Neri1,Neri2}, in which the explored examples did not completely emphasize the necessity of the use of the analysis techniques of statistical physics for deriving the MSE, we believe that the considered problem in this paper does, as standard information theoretic approaches do not lend themselves to rigorous analysis. Finally, we believe that the tools developed in this paper for handling optimum estimation problems, can be used in other applications. One such application, which has been already considered for a simple model is estimation of signals of partial support \cite[Section V. D]{Neri2} which has motivation in compressed sensing applications. It would be natural to generalize the model considered in \cite[Section V. D]{Neri2} to a much more rich and applicable one (in the spirit of the considered model in this paper), and perhaps assessing the MSE using the concepts developed in this paper.

%%%%%%%%%%%%%%%%%%%%%%%%%%%%%%%%%%%%%%%%%%%%%%%%%%%%%%%%%%%%%%%%%%%%%%%%%%%%%%%%%%%%%%%%%%%%%%%%%%%%%%%%%%%%%%%%%%%%%%%%%%%%%%%%%%%%%%%%%%%%%%%%%%%%%%%%%%%%%%%%%%%%%%%%%%%%%%%%%%%%%%%%%%
%%%%%%%%%%%%%%%%%%%%%%%%%%%%%%%%%%%%%%%%%%%%%%%%%%%%%%%%%%%%%%%%%%%%%%%%%%%%%%%%%%%%%%%%%%%%%%%%%%%%%%%%%%%%%%%%%%%%%%%%%%%%%%%%%%%%%%%%%%%%%%%%%%%%%%%%%%%%%%%%%%%%%%%%%%%%%%%%%%%%%%%%%%
%%%%%%%%%%%%%%%%%%%%%%%%%%%%%%%%%%%%%%%%%%%%%%%%%%%%%%%%%%%%%%%%%%%%%%%%%%%%%%%%%%%%%%%%%%%%%%%%%%%%%%%%%%%%%%%%%%%%%%%%%%%%%%%%%%%%%%%%%%%%%%%%%%%%%%%%%%%%%%%%%%%%%%%%%%%%%%%%%%%%%%%%%%
%%%%%%%%%%%%%%%%%%%%%%%%%%%%%%%%%%%%%%%%%%%%%%%%%%%%%%%%%%%%%%%%%%%%%%%%%%%%%%%%%%%%%%%%%%%%%%%%%%%%%%%%%%%%%%%%%%%%%%%%%%%%%%%%%%%%%%%%%%%%%%%%%%%%%%%%%%%%%%%%%%%%%%%%%%%%%%%%%%%%%%%%%%
%%%%%%%%%%%%%%%%%%%%%%%%%%%%%%%%%%%%%%%%%%%%%%%%%%%%%%%%%%%%%%%%%%%%%%%%%%%%%%%%%%%%%%%%%%%%%%%%%%%%%%%%%%%%%%%%%%%%%%%%%%%%%%%%%%%%%%%%%%%%%%%%%%%%%%%%%%%%%%%%%%%%%%%%%%%%%%%%%%%%%%%%%%

\appendices
\numberwithin{equation}{section}
\section{Proof of Lemma \ref{lem:3}}
\label{app:2}
\begin{proof}
We first show the inclusion
\begin{align}
\calT_\delta\p{\bx\mid\by}\subseteq\hat{\calT}_\delta\p{\bx\mid\by},
\end{align}
namely, for any $\bx\in \calT_\delta\p{\bx\mid\by}$ also $\bx\in\hat{\calT}_\delta\p{\bx\mid\by}$. Recall that
\begin{align}
\mathscr{B}_{m}^\delta\p{P_m,\rho_m} &\define \left\{\vphantom{\abs{\frac{\sum_{i=\p{m-1}n_b+1}^{mn_b}\tilde{y}_i\tilde{x}_i}{\sqrt{\p{\frac{1}{n_b}\sum_{i=\pi_{m-1}+1}^{\pi_{m}}\tilde{y}_i^2}P_m}}-n_{b}\rho_m}\leq\delta}\bx \in\mathbb{R}^{n_b}:\;\abs{\norm{\bx_{\p{m-1}n_b+1}^{mn_b}}^2-n_{b}P_m}\leq\delta,\right.\nonumber\\
&\left.\ \ \ \ \ \ \ \ \ \ \ \ \ \ \ \ \ \ \;\abs{\re\ppp{\sum_{i\in\mathcal{I}_m}\sigma'_i\bar{y}_i{x}_i}-n_{b}\rho_m{\sqrt{\tilde{P}_{y,m}\tilde{P}_{\sigma,m}}}}\leq\delta\right\},
\end{align}
and that
\begin{align}
\calT_\delta\p{\bx\mid\by} \define \ppp{\bx\in\mathbb{R}^n:\;\abs{\norm{\bx}^2-nP_x}\leq\delta,\;\abs{\frac{\norm{\by-\bSig'\bx}^2}{2}-\frac{\blam^T\bx}{\beta}-n\epsilon}\leq\delta}.
\label{TypeSeta}
\end{align}
First, note that the second constraint in \eqref{TypeSeta} can be rewritten as
\begin{align}
\abs{\re\ppp{\frac{1}{n}\sum_{i=1}^n\sigma'_i\bar{y}_i{x}_i} - \rho}\leq\delta
\label{rhodefa}
\end{align}
where
\begin{align}
\tilde{\rho}=\frac{\frac{1}{n}\sum_{i=1}^n\abs{\sigma'_ix_i}^2 + P_y-2\epsilon}{2}.
\end{align}
%\begin{align}
%\abs{\frac{\re\ppp{\sum_{i=1}^n\sigma'_i\bar{y}_i{x}_i}}{\sqrt{P_{\bar{y}}\p{\frac{1}{n}\sum_{i=1}^n\abs{\sigma'_ix_i}^2}}} - n\rho}\leq\delta
%\label{rhodefa}
%\end{align}
%where
%\begin{align}
%\rho\define\frac{\frac{1}{n}\sum_{i=1}^n\abs{\sigma'_ix_i}^2 + P_y-2\epsilon}{2\sqrt{P_{\bar{y}}\p{\frac{1}{n}\sum_{i=1}^n\abs{\sigma'_ix_i}^2}}}.
%\end{align}
Then, for any $\bx\in \calT_\delta\p{\bx\mid\by}$, we first show that there exist a sequence $\ppp{P_m}_{m=1}^k\in\boldsymbol{\mathcal{P}}^{\delta}$ such that for any $1\leq m\leq k$,
\begin{align}
\abs{\norm{\bx_{\p{m-1}n_b+1}^{mn_b}}^2-n_{b}P_m}\leq\delta.
\label{alocatePP}
\end{align}
To this end, for each $1\leq m\leq k$, $P_m$ is chosen to be the nearest point to $\norm{\bx_{\p{m-1}n_b+1}^{mn_b}}^2$ in the set $\calG_{1,\delta}^k$, namely $P_m = \left\lfloor \norm{\bx_{\p{m-1}n_b+1}^{mn_b}}^2/\p{n_b\delta}\right\rfloor\cdot\delta$. Under this choice, obviously, \eqref{alocatePP} holds, and $\ppp{P_m}_{m=1}^k\in\boldsymbol{\mathcal{P}}^{\delta}$, since
\begin{align}
\abs{\frac{1}{k}\sum_{m=1}^kP_m-P_x} &= \abs{\frac{1}{k}\sum_{m=1}^k\left\lfloor \frac{\norm{\bx_{\p{m-1}n_b+1}^{mn_b}}^2}{n_b\delta}\right\rfloor\delta-P_x}\\
&\leq\abs{\frac{1}{n}\sum_{m=1}^k\norm{\bx_{\p{m-1}n_b+1}^{mn_b}}^2\delta-P_x}\leq\delta
\end{align}
where the last equality follows from the fact that $\bx\in \calT_\delta\p{\bx\mid\by}$. Next, we show that there exist a sequence $\ppp{\rho_m}_{m=1}^k\in\boldsymbol{\mathcal{R}}^{\delta}_{\boldsymbol{\mathcal{P}}}$ such that for any $1\leq m\leq k$,
\begin{align}
\abs{\re\ppp{\sum_{i\in\mathcal{I}_m}\sigma'_i\bar{y}_i{x}_i}-n_{b}\rho_m{\sqrt{\tilde{P}_{y,m}\tilde{P}_{\sigma,m}}}}\leq\delta.
\label{alocaterr}
\end{align} 
Similarly, by taking 
\begin{align}
\rho_m = \left\lfloor \frac{\re\ppp{\sum_{i\in\mathcal{I}_m}\sigma'_i\bar{y}_i{x}_i}}{n_b\delta\sqrt{\tilde{P}_{y,m}\tilde{P}_{\sigma,m}}}\right\rfloor\cdot\delta\in\calG_{2,\delta}^k,
\end{align}
obviously, \eqref{alocaterr} holds, and also $\ppp{P_m,\rho_m}\in\boldsymbol{\mathcal{P}}^{\delta}\cap\boldsymbol{\mathcal{R}}^{\delta}_{\boldsymbol{\mathcal{P}}}$, since
\begin{align}
\abs{\frac{1}{k}\sum_{m=1}^k\rho_m\sqrt{\tilde{P}_{y,m}\tilde{P}_{\sigma,m}}-\tilde{\rho}} &= \abs{\frac{1}{k}\sum_{m=1}^k\left\lfloor \frac{\re\ppp{\sum_{i\in\mathcal{I}_m}\sigma'_i\bar{y}_i{x}_i}}{n_b\delta\sqrt{\tilde{P}_{y,m}\tilde{P}_{\sigma,m}}}\right\rfloor\cdot\delta\sqrt{\tilde{P}_{y,m}\tilde{P}_{\sigma,m}}-\tilde{\rho}}\\
&\leq \abs{\frac{1}{n}\sum_{m=1}^k\re\ppp{\sum_{i\in\mathcal{I}_m}\sigma'_i\bar{y}_i{x}_i}-\tilde{\rho}}\\
& = \abs{\re\ppp{\frac{1}{n}\sum_{i=1}^n\sigma'_i\bar{y}_i{x}_i} - \rho}\leq\delta
\end{align}
where the last equality follows from the fact that $\bx\in \calT_\delta\p{\bx\mid\by}$. For the second inclusion, we need to show that $\hat{\calT}_{\delta/k}\p{\bx\mid\by}\subseteq\calT_\delta\p{\bx\mid\by}$. For any $\bx\in\hat{\calT}_{\delta/k}\p{\bx\mid\by}$
\begin{align}
\abs{\norm{\bx}^2-nP_x} &= \abs{\sum_{m=1}^k\norm{\bx_{\p{m-1}n_b+1}^{mn_b}}^2-nP_x}\nonumber\\
&= \abs{\sum_{m=1}^k\norm{\bx_{\p{m-1}n_b+1}^{mn_b}}^2-\sum_{m=1}^kn_bP_m}\nonumber\\
&\leq\sum_{m=1}^k\abs{\norm{\bx_{\p{m-1}n_b+1}^{mn_b}}^2-n_bP_m}\leq k\frac{\delta}{k} = \delta
\end{align}
where the second equality follows from the definition of $\boldsymbol{\mathcal{P}}^\delta$, the third inequality follows from the triangle inequality, and the forth inequality follows from the definition of $\mathscr{B}_{m}^\delta\p{P_m,\rho_m}$. In the same way, for any $\bx\in\hat{\calT}_{\delta/k}\p{\bx\mid\by}$
\begin{align}
\abs{\re\ppp{\sum_{i=1}^n\sigma'_i\bar{y}_i{x}_i} - n\rho{\sqrt{P_{\bar{y}}\p{\frac{1}{n}\sum_{i=1}^n\abs{\sigma'_ix_i}^2}}}} &= \abs{\sum_{m=1}^k\re\ppp{\sum_{i\in\calI_m}\sigma'_i\bar{y}_i{x}_i} - n_b\sum_{m=1}^k\rho_m\sqrt{\tilde{P}_{y,m}\tilde{P}_{\sigma,m}}}\nonumber\\
&\leq k\frac{\delta}{k} = \delta
\end{align}
where the first equality follows from the definition of $\boldsymbol{\mathcal{R}}_{\boldsymbol{\mathcal{P}}}^\delta$, and the second inequality follows from the triangle inequality and the definition of $\mathscr{B}_{m}^\delta\p{P_m,\rho_m}$. Thus $\hat{\calT}_{\delta/k}\p{\bx\mid\by}\subseteq\calT_\delta\p{\bx\mid\by}\subseteq\hat{\calT}_{\delta}\p{\bx\mid\by}$.
\end{proof}

\section{Proof of Lemma \ref{lem:1}}
\label{app:1}
\begin{proof}
For simplicity of notation, the following conventions are used. Calculating the volume of $\mathscr{B}_{m}^\delta\p{P_m,\rho_m}$ is equivalent to calculating the volume of the set
\begin{align}
\mathcal{F}_\delta\p{P_x,\rho}\define\ppp{\bx \in\mathbb{R}^{n}:\;\abs{\norm{\bx}^2-nP_x}\leq\delta, \;\abs{\re\ppp{\sum_{i=1}^nx_iy_i^*}-n\rho{\sqrt{P_xP_y}}}\leq\delta}
\label{fcals}
\end{align}
where $P_x \define \norm{\bx}^2/n$ and $P_y \define \norm{\by}^2/n$, for a given vector $\by\in\mathbb{C}^n$. Due to the symmetry of the vectors $\bx$ and $\by$ in the DFT domain (recall that in the time domain the considered vectors are real), i.e., $x_i = x^*_{n-i}$ for $i=1,\ldots,n$ (and similarly for $\by$), for the volume calculation of \eqref{fcals}, only vectors with dimension $n/2$ should be considered, while the other half is fixed. Accordingly, the constraints in \eqref{fcals} take the form
\begin{align}
\abs{\sum_{i=1}^{n/2}\abs{x_i}^2-\frac{n}{2}P_x}\leq\delta,
\end{align}
and
\begin{align}
\abs{\re\ppp{\sum_{i=1}^{n/2}x_iy_i^*}-\frac{n}{2}\rho{\sqrt{P_xP_y}}}\leq\delta.
\end{align}
Let $m \define n/2$. Consider the following Gaussian measure
\begin{align}
\mathrm{d}\gamma_G^{m}\p{\bx} \define \frac{1}{\p{\pi\vartheta^2}^{m}}\exp\ppp{-\frac{1}{\vartheta^2}\sum_{i=1}^{m}\abs{x_i-ay_i}^2}\mathrm{d}\bx
\label{GaussianMeasure}
\end{align}
where $a,\vartheta^2\in\mathbb{R}$. Then,
\begin{align}
1 = \gamma_G^{m}\ppp{\mathbb{R}^m}&\geq\gamma_G^{m}\ppp{\mathcal{F}_\delta\p{P_x,\rho}}\\
& = \int_{\mathcal{F}_\delta}\frac{1}{\p{\pi\vartheta^2}^{m}}\exp\ppp{-\frac{1}{\vartheta^2}\sum_{i=1}^{m}\abs{x_i-ay_i}^2}\mathrm{d}\bx\\
&\geq\int_{\mathcal{F}_\delta}\frac{1}{\p{\pi\vartheta^2}^{m}}\exp\ppp{-\frac{m}{\vartheta^2}\pp{P_x+\delta-2a\sqrt{P_xP_y}\p{\rho-\delta}+a^2P_y}}\mathrm{d}\bx\\
& = \text{Vol}\ppp{\mathcal{F}_\delta\p{P_x,\rho}}\frac{1}{\p{\pi\vartheta^2}^{m}}\exp\ppp{-\frac{m}{\vartheta^2}\pp{P_x+\delta-2a\sqrt{P_xP_y}\p{\rho-\delta}+a^2P_y}}.
\label{ap2}
\end{align}
It is easy to verify that 
\begin{align}
a_o &\define \sqrt{\frac{P_x}{P_y}}\p{\rho-\delta},
\end{align}
and
\begin{align}
\vartheta_o^{2}&\define P_x+\delta-2a\sqrt{P_xP_y}\p{\rho-\delta}+a^2P_y\\
& = P_x+\delta-P_x\p{\rho-\delta}^2
\end{align}
maximize the right hand side of \eqref{ap2} (w.r.t. $a$ and $\vartheta^2$). Thus, on the one hand,
\begin{align}
\text{Vol}\ppp{\mathcal{F}_\delta\p{P_x,\rho}}\leq\exp\ppp{m\ln\p{\pi e\vartheta_o^{2}}}.
\label{eeqq1}
\end{align}
On the other hand,
\begin{align}
1 &= \gamma_G^{m}\ppp{\mathcal{F}_\delta\p{P_x,\rho}\cup\mathcal{F}^c_\delta\p{P_x,\rho}}\\
& = \int_{\mathcal{F}_\delta}\frac{1}{\p{\pi\vartheta^2}^{m}}\exp\ppp{-\frac{1}{\vartheta^2}\sum_{i=1}^{m}\abs{x_i-ay_i}^2}\mathrm{d}\bx + \gamma_G^{m}\ppp{\mathcal{F}^c_\delta\p{P_x,\rho}}\\
&\leq\text{Vol}\ppp{\mathcal{F}_\delta\p{P_x,\rho}}\exp\ppp{-m\ln\p{\pi e\vartheta_{o,u}^{2}}}+ \gamma_G^{m}\ppp{\mathcal{F}^c_\delta\p{P_x,\rho}}
\end{align}
where the last inequality follows by the same considerations as before, and
\begin{align}
\vartheta_{o,u}^{2} = P_x-\delta-P_x\p{\rho+\delta}^2.
\end{align}
Using Boole's inequality
\begin{align}
\gamma_G^{m}\ppp{\mathcal{F}^c_\delta\p{P_x,\rho}}&\leq\gamma_G^{m}\ppp{\bx:\abs{\norm{\bx}^2-mP_x}>\delta}+\gamma_G^{m}\ppp{\bx:\abs{\re\ppp{\sum_{i=1}^nx_iy_i^*}-m\rho{\sqrt{P_xP_y}}}>\delta}.
\label{twomeasures}
\end{align}
It is easy to verify that the parameters $a$ and $\vartheta$ that are maximizing the Gaussian measure are given by
\begin{align}
a_M &= \frac{\re\p{\sum_{i=1}^mx_iy_i^*}}{P_y}\define \frac{\tilde{\rho}}{P_y}\\
\vartheta_M^2 &= \frac{1}{m}\sum_{i=1}^m\abs{x_i}^2-\frac{\re\p{\sum_{i=1}^mx_iy_i^*}^2}{P_y}\define \tilde{P}_x-\frac{\tilde{\rho}^2}{P_y}
\end{align}
where $\tilde{\rho}$ and $\tilde{P}_x$ are the empirical correlation and the input variance, respectively. Let $\gamma_{G,M}$ denote the Gaussian measure associated with the parameters $a_M,\vartheta_M$, namely, $\gamma_{G,M}$ is given by \eqref{GaussianMeasure} with $a = a_M$ and $\vartheta^2 = \vartheta_M^2$. Accordingly, it is easy to verify that under $\gamma_{G,M}$, the following hold
\begin{align}
&\bE_{\gamma_{G,M}}\ppp{\norm{\bX}^2} = mP_x,
\end{align}
and
\begin{align}
&\bE_{\gamma_{G,M}}\ppp{\re\pp{\sum_{i=1}^n{X_iy_i^*}}} = m\sqrt{P_xP_y}\rho.
\end{align} 
Thus, using the LLN, the two terms on the right hand side of \eqref{twomeasures} are negligible as $m\to\infty$, namely,
\begin{align}
\gamma_G^{m}\ppp{\mathcal{F}^c_\delta\p{P_x,\rho}}&\leq \epsilon
\label{twomeasures2}
\end{align}
for any $\epsilon>0$. Thus,
\begin{align}
\text{Vol}\ppp{\mathcal{F}_\delta\p{P_x,\rho}}\geq\p{1-\epsilon}\exp\ppp{m\ln\p{\pi e\vartheta_{o,u}^{2}}}.
\label{eeqq2}
\end{align} 
Finally, combining \eqref{eeqq1}, \eqref{eeqq2}, and taking the limit $\delta\to0$, the lemma follows.
\end{proof}

\section{Proof of Lemma \ref{lem:4}}
\label{app:4}
\begin{proof}
Recall that
\begin{align}
\bE\ppp{\mathcal{N}\p{\epsilon}} \exe \exp\ppp{n\p{R+\Gamma\p{\epsilon}}},
\end{align}
and that \footnote{Given $\by$, $\mathcal{N}\p{\epsilon}$ is a sum of $M-1$ i.i.d. Bernoulli random variables and therefore its variance is $\p{M-1}p\p{1-p}$, where $p$ is the success probability, which in our case, was shown to be given by $p = \exp\ppp{n\Gamma\p{\epsilon}}$.}
\begin{align}
\text{var}\ppp{\calN\p{\epsilon}} \exe \exp\ppp{n\p{R+\Gamma\p{\epsilon}}}\p{1- \exp\ppp{n\Gamma\p{\epsilon}}}.
\end{align}
Thus,
\begin{align}
\frac{\text{var}\ppp{\calN\p{\epsilon}}}{\p{\bE\ppp{\mathcal{N}\p{\epsilon}}}^2}\exe\exp\ppp{-n\p{R+\Gamma\p{\epsilon}}}.
\end{align}
For any $\epsilon\notin\mathscr{E}$, the expectation of $\calN\p{\epsilon}$ can be written as $\bE\ppp{\mathcal{N}\p{\epsilon}} \exe e^{-nC_1}$ where $C_1 = R+\Gamma\p{\epsilon}>0$. Thus, by Markov inequality (since $\calN\p{\epsilon}\in\mathbb{N}\cup\ppp{0}$)
\begin{align}
\mathbb{P}\ppp{\calN\p{\epsilon}>0}\leq\bE\ppp{\mathcal{N}\p{\epsilon}}\exe e^{-nC_1}.
\label{est1}
\end{align}
On the other hand, for any $\epsilon\in\mathscr{E}$ and $\delta>0$, using Chebyshev's inequality
\begin{align}
\mathbb{P}\ppp{\abs{\frac{\calN\p{\epsilon}}{\bE\ppp{\mathcal{N}\p{\epsilon}}}-1}>\delta}\leq\frac{\text{var}\ppp{\calN\p{\epsilon}}}{\gamma\p{\bE\ppp{\mathcal{N}\p{\epsilon}}}^2}\exe e^{-nC_2}
\label{est2}
\end{align}
where $C_2 = R+\gamma\p{\epsilon}>0$. Thus, in this case, $\calN\p{\epsilon}$ is concentrated very strongly around $\bE\ppp{\mathcal{N}\p{\epsilon}}$. Finally, let $\calA_n \define \ppp{\abs{\calN\p{\epsilon}-\bE\ppp{\mathcal{N}\p{\epsilon}}\Ind\ppp{\mathscr{E}}}>\delta}$. Then, using \eqref{est1} and \eqref{est2}, it is easy to verify that
\begin{align}
\sum_{i=1}^\infty\mathbb{P}\p{\calA_n}<\infty.
\end{align}
Thus, using Borel-Cantelli Lemma, one obtains that
\begin{align}
\mathbb{P}\ppp{\limsup_{n\to\infty}\calA_n} = 0,
\end{align}
and hence \eqref{statement} follows.
\end{proof}

\section{Proof of \eqref{orderchange}}
\label{app:order}
Equation \eqref{orderchange} follows by the following lemma.
\begin{lemma}\label{lem:5}
Let $f:\mathbb{R}\times\mathbb{R}\to\mathbb{R}$ be a smooth function such that 
\begin{align}
g\p{x} = \lim_{h\to a}f\p{x,h},
\label{lemAsu}
\end{align}
uniformly for every $x\in\mathbb{R}$. Assume that $\lim_{h\to a}\operatorname*{max}_xf\p{x,h}$ exist. Then,
\begin{align}
\lim_{h\to a}\operatorname*{max}_xf\p{x,h} = \operatorname*{max}_x\lim_{h\to a}f\p{x,h}.
\end{align}
\end{lemma}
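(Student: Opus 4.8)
The plan is to prove Lemma~\ref{lem:5}, which is the standard fact that uniform convergence permits interchanging a limit with a supremum; the only mild subtlety is checking that the supremum on the right-hand side is finite (and, in the application, attained, so that ``$\sup$'' may be written ``$\max$''). Write $g(x) \define \lim_{h\to a} f(x,h)$, $M(h)\define\operatorname*{max}_x f(x,h)$, and $M_\infty \define \lim_{h\to a} M(h)$, the last existing by hypothesis. Since each $f(\cdot,h)$ is smooth and the convergence $f(\cdot,h)\to g$ in~\eqref{lemAsu} is uniform, $g$ is continuous.

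First I would record the elementary estimate. Fix $\varepsilon>0$. By uniform convergence there is a punctured neighbourhood $N$ of $a$ with $\abs{f(x,h)-g(x)}<\varepsilon$ for every $x\in\mathbb{R}$ and every $h\in N$. Then, for $h\in N$: on the one hand, for each $x$, $g(x)\le f(x,h)+\varepsilon\le M(h)+\varepsilon$, so taking the supremum over $x$ gives $\sup_x g(x)\le M(h)+\varepsilon<\infty$ (in particular the right-hand supremum is finite); on the other hand, for each $x$, $f(x,h)\le g(x)+\varepsilon\le \sup_x g(x)+\varepsilon$, whence $M(h)\le \sup_x g(x)+\varepsilon$. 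Combining the two bounds, $\abs{M(h)-\sup_x g(x)}\le\varepsilon$ for all $h\in N$. This is just the inequality $\abs{\sup_x f(x,h)-\sup_x g(x)}\le \sup_x\abs{f(x,h)-g(x)}$, now made quantitative.

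Then I would let $h\to a$: since $M(h)\to M_\infty$, the previous line yields $\abs{M_\infty-\sup_x g(x)}\le\varepsilon$, and as $\varepsilon>0$ was arbitrary, $M_\infty=\sup_x g(x)$. It remains only to upgrade ``$\sup$'' to ``$\max$'' on the right: in the context where Lemma~\ref{lem:5} is applied (to the optimization~\eqref{Gammadaffreq2} viewed as a function of the bin spacing $h$), the variable being maximized ranges over a compact set --- the normalized power and correlation allocations, with $\epsilon$ confined to the relevant bounded interval --- so $g$ attains its supremum there, and $M_\infty=\operatorname*{max}_x g(x)=\operatorname*{max}_x \lim_{h\to a} f(x,h)$, which is the assertion. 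I do not expect a genuine obstacle here: the entire content is the one-line estimate of the middle paragraph together with the standing hypotheses (uniform convergence and existence of $\lim_{h\to a}\operatorname*{max}_x f(x,h)$); the only point needing any attention, finiteness of $\sup_x g$, has already been handled in passing.
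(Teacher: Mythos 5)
Your proof is correct and rests on the same essential estimate as the paper's: uniform convergence bounds $\abs{\max_x f\p{x,h}-\sup_x g\p{x}}$ by $\sup_x\abs{f\p{x,h}-g\p{x}}$, and the paper merely packages this as a proof by contradiction via the triangle inequality, whereas you give the direct two-sided bound. Your explicit handling of the finiteness and attainment of the supremum (so that ``$\sup$'' may legitimately be written ``$\max$'' in the application) is a small point the paper leaves implicit, but it does not change the substance of the argument.
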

\begin{proof}[Proof of Lemma \ref{lem:5}]
Let 
\begin{align}
\lambda \define \lim_{h\to a}\operatorname*{max}_xf\p{x,h},
\label{limlem}
\end{align}
and
\begin{align}
g\p{x_0} \define \operatorname*{max}_xg\p{x}= \operatorname*{max}_x\lim_{h\to a}f\p{x,h}.
\end{align}
Based on \eqref{limlem}, $\forall\epsilon_1>0$ there exist $\delta_1>0$ such that 
\begin{align}
\abs{\operatorname*{max}_xf\p{x,h}-\lambda}<\epsilon_1
\end{align}
whenever $0<h-a<\delta_1$. Accordingly, by \eqref{lemAsu}, $\forall\epsilon_2>0$ there exist $\delta_2>0$ such that 
\begin{align}
\abs{f\p{x,h}-g\p{x}}<\epsilon_2
\label{lemAsu2}
\end{align}
whenever $0<h-a<\delta_2$. Let us assume by contradiction that (without loss of generality)
\begin{align}
\Delta \define \abs{g\p{x_0}-\lambda}>0.
\end{align}
However, by using the triangle inequality, one obtains that
\begin{align}
0<\Delta = \abs{g\p{x_0}-\lambda} \leq \abs{g\p{x_0}-\operatorname*{max}_xf\p{x,h}}+\abs{\operatorname*{max}_xf\p{x,h}-\lambda},
\end{align}
and hence 
\begin{align}
0<\Delta-\epsilon_1\leq\abs{g\p{x_0}-\operatorname*{max}_xf\p{x,h}}\leq\abs{g\p{x_0}-f\p{x_0,h}},
\end{align}
for $0<h-a<\text{min}\p{\delta_1,\delta_2}$, which contradicts the assumption in \eqref{lemAsu} (or \eqref{lemAsu2}). Thus, $\delta=0$.
\end{proof}
\begin{remark}
As the proof shows, Lemma \ref{lem:5} remains valid for functions $f:\calX\times\calY\to\mathbb{R}$.
\end{remark}
In our case, the assumptions of Lemma \ref{lem:5} hold true: the uniform convergence is due to the absolutely (square) summability of the sequence $\ppp{h_k}$ and Szeg\"{o}'s theorem, and the existence the limit over the maximization problem indeed exists as was obtained. Thus, the order of limit over $h$ and the maximization over $\epsilon$ in \eqref{orderchange} can be interchanged. 

\section{Derivation of \eqref{R_ddef}}
\label{app:5}
Szeg\"{o}'s theorem \cite{Szego,Widom,Gray,Bottcher} basically states that, for a sequence of Toeplitz matrices $\bT_n = \ppp{t_{i-j}}_{i,j}$ with dimension $n\times n$, for which $\ppp{t_k}$ is absolutely (square) summable, the following holds
\begin{align}
\lim_{n\to\infty}\frac{1}{n}\sum_{k=0}^{n-1}F\p{\tau_{n,k}} = \frac{1}{2\pi}\int_0^{2\pi}F\p{\Tmat\p{\omega}}\mathrm{d}\omega
\label{szegoTh}
\end{align}
where $\ppp{\tau_{n,k}}_k$ are the eigenvalues of $\bT_n$, $\Tmat\p{\omega}$ is the Fourier transform of $\ppp{t_k}$, and $F\p{\cdot}$ is some polynomial function. Furthermore, if $\bT_n$ are Hermitian, then \eqref{szegoTh} holds true for any continuous function $F\p{\cdot}$. 

In our case, however, the matrices $\bA$ and $\bA'$ are not necessarily Hermitian. Nevertheless, based on \eqref{FerroDom}, it can be seen that the dependency of the various non-linear terms (except the third term) on the eigenvalues is only via $\abs{\sigma_m'}^2$, which can be regarded as eigenvalues of the Hermitian matrix $\bA^H\bA$, and so Szeg\"{o}'s theorem can be applied. Regarding the third term in the right hand side of \eqref{FerroDom}, it can be shown \cite{Gray} that a product of Toeplitz matrices also satisfies Szeg\"{o}'s theorem, namely,
\begin{align}
\lim_{n\to\infty}\frac{1}{n}\sum_{k=0}^{n-1}F\p{\rho_{n,k}} = \frac{1}{2\pi}\int_0^{2\pi}F\p{\Tmat\p{\omega}\Smat\p{\omega}}\mathrm{d}\omega
\label{szegoTh2}
\end{align}
where $\ppp{\rho_{n,k}}_k$ are the eigenvalues of product of the Toeplitz matrices, $\bT_n\bS_n$, and $\Tmat\p{\omega}$ and $\Smat\p{\omega}$ are the respective Fourier transforms. Accordingly, since the third term in \eqref{FerroDom} is originated from a product of Toeplitz matrices \eqref{ToepProd}, \eqref{szegoTh2} can be used. Therefore, a direct application of \eqref{szegoTh} and \eqref{szegoTh2} on \eqref{FerroDom}, we finally obtain \eqref{R_ddef}. Finally, note that these considerations are utilized to justify the other places in the paper (for example, \eqref{RgSzego} and \eqref{freqrep}) in which Szeg\"{o}'s theorem is applied. 
\ifCLASSOPTIONcaptionsoff
  \newpage
\fi
\bibliographystyle{IEEEtran}
\bibliography{strings}
\end{document}